\theoremstyle{definition} 
\theoremstyle{definition} 
\theoremstyle{plain} \newtheorem{lemma}{Lemma}
\theoremstyle{plain} \newtheorem{proposition}{Proposition}
\theoremstyle{plain} \newtheorem{corollary}{Corollary}
\begin{document}

%
% paper title
% Titles are generally capitalized except for words such as a, an, and, as,
% at, but, by, for, in, nor, of, on, or, the, to and up, which are usually
% not capitalized unless they are the first or last word of the title.
% Linebreaks \\ can be used within to get better formatting as desired.
% Do not put math or special symbols in the title.
%\title{ Caching in Relay-Assisted Millimeter Wave Networks: A Stochastic Geometry Approach}
%\title{ Analysis and Optimization of Relay-Assisted mmWave Networks with Probabilistic Caching }
%\title{ Joint User Association and Caching Placement for Backhaul Offloading in Relay-Assisted mmWave Networks}
\title{Association and Caching in Relay-Assisted mmWave Networks: From A Stochastic Geometry Perspective}
%\title{An Analysis on the Caching Placement for Relay-Assisted Millimeter Wave Networks }
%\title{On the Performance of Cache- and Relay-Assisted Millimeter Wave Networks}
%\title{Performance Analysis and Optimization of Cache- and Relay-Assisted Millimeter Wave Networks}
%\author{Michael~Shell,~\IEEEmembership{Member,~IEEE,}
%        John~Doe,~\IEEEmembership{Fellow,~OSA,}
%        and~Jane~Doe,~\IEEEmembership{Life~Fellow,~IEEE}% <-this % stops a space
%\author{\IEEEauthorblockN{Zhuojia Gu, Hancheng Lu, Ming Zhang, Haizhou Sun, Chang Wen Chen}
% \IEEEauthorblockA{~Department of Electrical Engineering and Information Science \\
% University of Science and Technology of China, Hefei, Anhui 230027 China \\
%  Email: guzj@mail.ustc.edu.cn, hclu@ustc.edu.cn}
%  }

%\thanks{This work was supported in part by the National Science Foundation of China (No.61771445, No.91538203) and the Fundamental Research Funds for the Central Universities. }% <-this % stops a space
%\thanks{The authors are with the Information Network Laboratory, Department of Electronic Engineering and Information Science, University of Science and Technology of China, Hefei 230027, China (e-mail: guzj@mail.ustc.edu.cn; hclu@ustc.edu.cn).}
%}

\author{Zhuojia~Gu, % <-this % stops a space
        Hancheng~Lu,~\IEEEmembership{Senior Member,~IEEE,}
        Ming~Zhang,
        Haizhou~Sun,
        and Chang Wen Chen,~\IEEEmembership{Fellow,~IEEE}
\IEEEcompsocitemizethanks{
\IEEEcompsocthanksitem This work was supported in part by the National Science Foundation of China (No.61771445, No.91538203) and the Fundamental Research Funds for the Central Universities.
\IEEEcompsocthanksitem Zhuojia Gu, Hancheng Lu, Ming Zhang and Haizhou Sun are with CAS Key Laboratory of Wireless-Optical Communications, University of Science and Technology of China, Hefei 230027, China (email: guzj@mail.ustc.edu.cn; hclu@ustc.edu.cn; mzhang95@mail.ustc.edu.cn; hzs128@mail.ustc.edu.cn).
\IEEEcompsocthanksitem Chang Wen Chen is with The Chinese University of Hong Kong, Shenzhen, Guangdong, 518172, China. He is also with the State University of New York, Buffalo, NY 14260 USA (email: chencw@cuhk.edu.cn; chencw@buffalo.edu).
%\IEEEcompsocthanksitem

%This work was supported in part by the National Science Foundation of China (No.91538203, No.61390513, No.61771445) and the Fundamental Research Funds for the Central Universities.
}% <-this % stops a space
}

\maketitle
% conference papers do not typically use \thanks and this command
% is locked out in conference mode. If really needed, such as for
% the acknowledgment of grants, issue a \IEEEoverridecommandlockouts
% after \documentclass

% for over three affiliations, or if they all won't fit within the width
% of the page, use this alternative format:
%
%\author{\IEEEauthorblockN{Michael Shell\IEEEauthorrefmark{1},
%Homer Simpson\IEEEauthorrefmark{2},
%James Kirk\IEEEauthorrefmark{3},
%Montgomery Scott\IEEEauthorrefmark{3} and
%Eldon Tyrell\IEEEauthorrefmark{4}}
%\IEEEauthorblockA{\IEEEauthorrefmark{1}School of Electrical and Computer Engineering\\
%Georgia Institute of Technology,
%Atlanta, Georgia 30332--0250\\ Email: see http://www.michaelshell.org/contact.html}
%\IEEEauthorblockA{\IEEEauthorrefmark{2}Twentieth Century Fox, Springfield, USA\\
%Email: homer@thesimpsons.com}
%\IEEEauthorblockA{\IEEEauthorrefmark{3}Starfleet Academy, San Francisco, California 96678-2391\\
%Telephone: (800) 555--1212, Fax: (888) 555--1212}
%\IEEEauthorblockA{\IEEEauthorrefmark{4}Tyrell Inc., 123 Replicant Street, Los Angeles, California 90210--4321}}

% use for special paper notices
%\IEEEspecialpapernotice{(Invited Paper)}

% As a general rule, do not put math, special symbols or citations
% in the abstract
\vspace{-10mm}
\begin{abstract}

Limited backhaul bandwidth and blockage effects are two main factors limiting the practical deployment of millimeter wave (mmWave) networks. To tackle these issues, we study the feasibility of relaying as well as caching in mmWave networks. \textcolor{blue}{A user association and relaying (UAR) criterion dependent on both caching status and maximum biased received power is proposed by considering the spatial correlation caused by the coexistence of base stations (BSs) and relay nodes (RNs). A joint UAR and caching placement problem is then formulated to maximize the backhaul offloading traffic. Using stochastic geometry tools, we decouple the joint UAR and caching placement problem by analyzing the relationship between UAR probabilities and caching placement probabilities. We then optimize the transformed caching placement problem based on polyblock outer approximation  by exploiting the monotonic property in the general case and utilizing convex optimization in the noise-limited case.} Accordingly, we propose a BS and RN selection algorithm where caching status at BSs and maximum biased received power are jointly considered. \textcolor{blue}{Experimental results demonstrate a significant enhancement of backhaul offloading using the proposed algorithms}, and show that deploying more RNs and increasing cache size in mmWave networks is a more cost-effective alternative than increasing BS density to achieve similar backhaul offloading performance.

\end{abstract}

\vspace{-3mm}
\par
\begin{IEEEkeywords}
Relay, caching, \textcolor{blue}{user association}, millimeter wave networks, stochastic geometry, backhaul offloading.
\end{IEEEkeywords}
% no keywords

% For peer review papers, you can put extra information on the cover
% page as needed:
% \ifCLASSOPTIONpeerreview
% \begin{center} \bfseries EDICS Category: 3-BBND \end{center}
% \fi
%
% For peerreview papers, this IEEEtran command inserts a page break and
% creates the second title. It will be ignored for other modes.
\IEEEpeerreviewmaketitle
\vspace{5mm}

%\begin{spacing}{1.47}
\section{Introduction}
With the rapid growth of wireless network traffic, available spectrum in conventional sub-6 GHz networks appears to be incapable of meeting the ever-increasing demand in the near future. According to Cisco's forecast, traffic from wireless and mobile devices will account for 71 percent of total IP traffic by 2022 \cite{cisco}. In this regard, exploring higher radio spectrum to increase wireless network capacity is imperative. The use of millimeter wave (mmWave) frequency bands, between 30 and 300 GHz, has gained tremendous research interest and becomes a good candidate for the fifth generation (5G) cellular networks \cite{Coverage-and-rate-analysis, Stochastic-geometry-modeling, Millimeter-wave-mobile}. To achieve acceptable coverage and rate, BSs need to be densely deployed in mmWave networks to overcome the shortcoming in transmission distance of mmWave signals. This poses a particular challenge for the backhaul network, especially given the extremely high rates resulting from mmWave bandwidths on the order of GHz.
Although the high-speed optical fiber backhaul provides a theoretical solution, in practice, it is rather arduous and costly to connect the core server to all densely deployed BSs with fibers \cite{Caching-policy-toward-maximal}. Alternatively, microwave backhaul is considered more practical, but it may pessimistically become a bottleneck of the network throughput brought by the network densification. To address this challenge, caching popular contents at BSs has been proposed as one of the most effective solutions \cite{A-survey-of-energy-efficient, Approximation-algorithms-for-mobile, Distributed-caching-for-data}, considering the fact that most mobile data are popular contents such as video, maps and news, which are repeatedly requested and are cacheable.

\vspace{3mm}
On the other hand, mmWave communication also suffers from higher pathloss of signals governed by the Friis free-space equation \cite{Wireless-Communications-Principles}, and much more severe blockage effects compared with sub-6 GHz signals. Distinct differences between line-of-sight (LOS) and non-line-of-sight (NLOS) propagation characteristics are among the major problems for ensuring coverage when utilizing mmWave bands.
\textcolor{blue}{As a direct byproduct, the possibility that a cache-hit user can obtain a sufficient downlink data rate from the associated BS is reduced, which greatly impairs the effectiveness of caching in mmWave networks to alleviate backhaul bandwidth.}
To solve this issue, the deployment of Relay Nodes (RNs), as network infrastructure elements without a dedicated wired backhaul connection, has been regarded as a cost-effective way to overcome the blockage effects and increase the coverage probability in mmWave networks \cite{Stochastic-Geometry-Modeling-and-System,Performance-analysis-of-opportunistic}. The combination of caches and relay nodes in mmWave networks is expected to improve the performance of signal coverage as well as backhaul traffic offloading, thus alleviating the backhaul bottleneck and achieving high throughput in mmWave networks.
\textcolor{blue}{Towards this end, this paper studies the backhaul offloading optimization in cache-enabled and relay-assisted mmWave networks.}

\subsection{Related Work}
\textit{1) Caching in mmWave networks}: Caching placement with finite cache size is the key issue in wireless local cache.  Generally, there are two approaches to implement wireless local cache, i.e., the deterministic caching placement and the probabilistic caching placement. The deterministic caching placement \cite{Approximation-algorithms-for-mobile, Distributed-caching-for-data, Joint-optimization-of-caching} typically requires obtaining the information of the network node locations and the instantaneous channel states, which results in the high complexity and fewer tractable performances due to the randomness of the geographic distribution of network nodes and time-varying wireless channels.
\textcolor{blue}{By contrast, the probabilistic caching placement \cite{Caching-Placement-in-Stochastic} permits simple implementation and has a good tractability. Tractable expressions of various network metrics can be obtained by modeling the distributions of network nodes as homogeneous Poisson point processes, and different probabilistic caching strategies were proposed to improve the network performance such as cache hit probability\cite{Probabilistic-caching-in-wireless}, coverage\cite{Analysis-and-optimization-of-caching, An-analysis-on-caching-placement}, throughput\cite{Caching-policy-toward-maximal } or even profits of network service providers\cite{Pricing and resource allocation}.}

\textcolor{blue}{Recently, the benefits of caching in mmWave networks have been demonstrated in \cite{D2D-Aware-Device-Caching, Beef-Up-mmWave-Dense, Caching-Meets-Millimeter-Wave, Proactive-Caching-for-Mobile }. Authors in \cite{D2D-Aware-Device-Caching} designed a caching policy by exploiting the directionality of the mmWave band to achieve higher offloading and lower content-retrieval delays, and authors in \cite{Beef-Up-mmWave-Dense} extended the work by taking practical directional antenna model and cooperative caching into consideration. Caching was leveraged to address the short connection durations and frequent handoffs due to mmWave antenna directionality and high user mobility in \cite{Caching-Meets-Millimeter-Wave}, and \cite{Proactive-Caching-for-Mobile} extended the work to enhancing the quality of mobile video streaming.
In addition, caching in hybrid mmWave and sub-6 GHz networks has been investigated in \cite{Cache-enabled-hetnets, An-analysis-on-caching-placement, Content-placement-in-cache-enabled}, where the key channel features at mmWave and sub-6 GHz frequencies were taken into account. }
%Particularly, Biswas \textit{et al.} \cite{An-analysis-on-caching-placement} designed a novel caching scheme considering the synchronicity between mmWave and sub-6 GHz signals to provide wider coverage.

\textit{2) Relaying in mmWave networks}:
Recently, it was shown in \cite{Key-elements-to-enable} that the use of relays can also be a promising solution for mmWave networks to combat the blockage effects as well as path loss. To be specific, when a LOS beam blockage occurs and the NLOS links cannot give satisfactory channel quality, the transmitter may steer the beam direction to a nearby relay with a LOS link to the destination.
\textcolor{blue}{Many existing works focused on the optimization of relay-assisted mmWave networks, such as relay placement \cite{Toward-robust-relay-placement}, cell association \cite{Distributed-association-and-relaying}, as well as relay selection and scheduling \cite{Relay-assisted-and-QoS}. }
\textcolor{blue}{Moreover, some researches have analytically investigated the performance of relay-assisted mmWave networks \cite{On-the-Performance-of-Relay, On-the-Performance-of-mmWave-Networks, Improving-the-coverage-and-spectral}. }
Authors in \cite{On-the-Performance-of-Relay} studied the coverage probability and transmission capacity of relay-aided mmWave networks under two different relay selection techniques. Similarly, a coverage probability of an outdoor mmWave ad hoc network aided by relays leveraging tools from stochastic geometry was reported in \cite{On-the-Performance-of-mmWave-Networks}. Authors in \cite{Improving-the-coverage-and-spectral} studied the impact of device-to-device relays on the coverage and spectral efficiency of mmWave networks.

\subsection{Motivation and Contribution}
From the discussions presented above, it is expected that when caches and relays are introduced simultaneously in the mmWave networks, it is possible to improve the performance of signal coverage as well as backhaul traffic offloading, thus alleviating the backhaul bottleneck and achieving high throughput in mmWave networks.
However, the introduction of both relays and caches into mmWave networks poses new challenges in performance analysis and optimization.
\textcolor{blue}{First, user association becomes more flexible when relays are introduced in mmWave networks. To overcome the blockage effects, whether a user should be directly associated to a BS or associated to a BS with the aid of relays is a challenging issue that requires careful design.
Second, caching placement should be reconsidered due to the introduction of relays in mmWave networks. The spatial correlation caused by the coexistence of BSs and RNs complicates the caching placement optimization of backhaul offloading.
More importantly, the user association and relaying issue is coupled with the caching placement issue when considering the backhaul offloading optimization. The user association and relaying strategy will have an impact on the caching placement strategy, and vice versa.}
Therefore, the simultaneous application of caches and relays in mmWave networks poses fundamental new challenges, which shall require further rigorous analysis and optimization. The main contributions of this research can be summarized as below:
%On the one hand, the user association and relaying criterion will have an impact on caching placement strategy. The spatial correlation caused by the coexistence of BSs and RNs as well as the caching status at BSs complicate the analysis of user association and relaying. In addition, unique properties of mmWave radios such as directional beamforming and blockage effects should be taken into account for the analysis of backhaul offloading performance. Thus, the user association and relaying needs to be reconsidered when introducing both caches and relays into mmWave networks.
%On the other hand, the caching placement will have an impact on user association and relaying as well as the system performance, thus such an algorithm also needs to be reconsidered for backhaul offloading maximization.
\begin{itemize}
  \item  We develop an analytical network model for a cache-enabled and relay-assisted downlink mmWave network based on a stochastic geometry principle.
      %This network model effectively characterizes the stochastic natures of channel fading as well as the geographic locations of BSs, RNs and user equipments (UEs).
      \textcolor{blue}{User association and relaying (UAR) are jointly considered with caching placement to offload backhaul traffic  in the considered scenario. We decouple the joint UAR and caching placement problem by using stochastic geometry tools to analyze the relationship between UAR probabilities and caching probabilities.
      Based on the analytical results, the backhaul offloading optimization problem can be formulated as a caching placement optimization problem.}
      %We also propose the relaying and association criterion dependent on both caching status at BSs and maximum biased received power. This is fundamentally different from conventional relay-assisted networks without caching.
  \item To measure the backhaul offloading performance, we define the successful backhaul offloading probability (SBOP) as the performance metric.
      \textcolor{blue}{We propose the user association and relaying criterion dependent on both caching status at BSs and maximum biased received power. By taking into account the spatial correlation due to the coexistence of BSs and RNs, as well as the blockage effects in mmWave networks, the distribution of distances between different communicating nodes are derived, and the user association probabilities are obtained. Based on the derived relationship between user association probabilities and caching probabilities,} a holistic analytical expression of SBOP can be obtained. Under the noise-limited scenario, a closed-form expression of the SBOP is also presented.
  \item \textcolor{blue}{By decoupling the UAR problem and caching placement problem through the analytical work, we formulate the caching placement optimization problem with the goal to maximize SBOP.} By exploiting the monotonic property of this optimization problem, we design an efficient algorithm based on polyblock outer approximation to find the globally optimal solution. Furthermore, inspired by the noise-limited assumption in mmWave networks, we also design a suboptimal algorithm utilizing convex optimization to approximate the optimal solution with low complexity. Based on the optimized caching probabilities, a BS/RN selection algorithm is then developed for practical deployment.
  \item Numerical simulations have been carried out and the results demonstrate that the proposed algorithms outperform existing algorithms in terms of SBOP. Furthermore, the results also indicate that deploying more relay nodes and increasing cache size in mmWave networks can be promising alternative approaches to achieve comparable backhaul offloading performance as deploying more mmWave base stations.
  %\item Finally, to obtain design insights, we evaluate the effects
\end{itemize}

%\subsection{Organization}
The rest of this paper is organized as follows. We first introduce the system model in Section \uppercase\expandafter{\romannumeral2}. \textcolor{blue}{We then formulate the joint UAR and caching placement problem and analyze the user association and relaying probabilities to reformulate the problem in Section \uppercase\expandafter{\romannumeral3}.} In Section \uppercase\expandafter{\romannumeral4}, we first describe two algorithms based on the polyblock outer approximation and convex optimization, respectively, to maximize the proposed SBOP, and then present the BS/RN selection algorithm. The performance evaluations and their associated discussions are provided in Section \uppercase\expandafter{\romannumeral5} and Section \uppercase\expandafter{\romannumeral6} concludes this paper with summary.

\section{System Model}
\subsection{Network Model}
We consider downlink transmissions in a cache-enabled and relay-assisted mmWave network as shown in Fig. \ref{fig:1}. The locations of BSs, RNs and UEs are modeled as three independent and homogeneous Poisson Point Processes (PPPs), which are denoted by $\Phi_{\mathrm{BS}}$, $\Phi_{\mathrm{RN}}$, and $\Phi_{\mathrm{UE}}$ with densities $\lambda_{\mathrm{BS}}$, $\lambda_{\mathrm{RN}}$, and $\lambda_{\mathrm{UE}}$, respectively. A LOS link is defined as a link with no blockage between the transceivers, while an NLOS link is defined as a link with blockage between the transceivers. Generally speaking, a link of shorter distance will have a higher probability to be a LOS link. Note that by introducing RNs, the average distance between a UE and its associated node can be shortened, so more LOS links are expected with the assistance of RNs. We assume that the universal frequency reuse is applied among all BSs and RNs, meaning that BSs and RNs share the same carrier frequencies/wavelengths, similar to the scenarios defined in \cite{On-the-Performance-of-mmWave-Networks, Coverage-in-heterogeneous}. A two-slot Decode-and-Forward (DF) relaying strategy is assumed for the BS-RN-UE link \cite{Coverage-analysis-of-decode, On-the-Performance-of-Relay}. In the first time slot, BSs transmit signals to RNs, while in the second time slot, RNs forward the data (decoded from the received signal in the first time slot) to the UEs.

%\end{spacing}

In the relay-assisted mmWave network, we equip BSs with caches. Assuming a finite file category $\mathcal{F} = \{ c_1,\cdots,c_F\}$, we adopt a probabilistic cache policy described in Section \ref{caching-model}. A randomly selected \textit{typical UE}\footnote{Using Slivnyak's theorem, placing a typical point at the origin does not change the statistics of the PPP, so we consider the analysis of a typical UE to evaluate the network performance.} is fixed at the origin, which is denoted by UE$_0$. UE$_0$ is associated with either a BS or a RN according to the association criterion described in Section \ref{association-criterion}. When UE$_0$ requests a file, a cache hit event may happen in two cases:
\begin{itemize}
  \item \textbf{One-hop link: Direct cache hit}, when UE$_0$ directly communicates with the serving BS, which is denoted by BS$_0$, and the requested file is cached in BS$_0$.
  \item \textbf{Two-hop links: Relay-assisted cache hit}, when UE$_0$ communicates with the serving BS$_0$ via the help of the serving RN, denoted by RN$_0$, and the requested file is cached in BS$_{R0}$.
\end{itemize}

In the case where the requested file cannot be found in BS$_0$, the file is retrieved from the core network through the backhaul links, and then transmitted to the UE. Denote the one-hop link between BS$_0$ and UE$_0$ as the BU link, and the two-hop links as the BR link and the RU link, respectively.
\textcolor{blue}{Denote the user association and relaying probabilities for the three types of links as $\chi_{\mathrm{BU}}$, $\chi_{\mathrm{BR}}$, and $\chi_{\mathrm{RU}}$, respectively.}
\newgeometry{left=1in,right=1in,top=0.84in,bottom=0.84in}
%However, this case is not ideal for backhaul offloading, so we aim to design an effective caching placement algorithm in this paper to reduce the chance of this operating case.
\begin{figure}
\centering
\includegraphics [width = 4.0in]{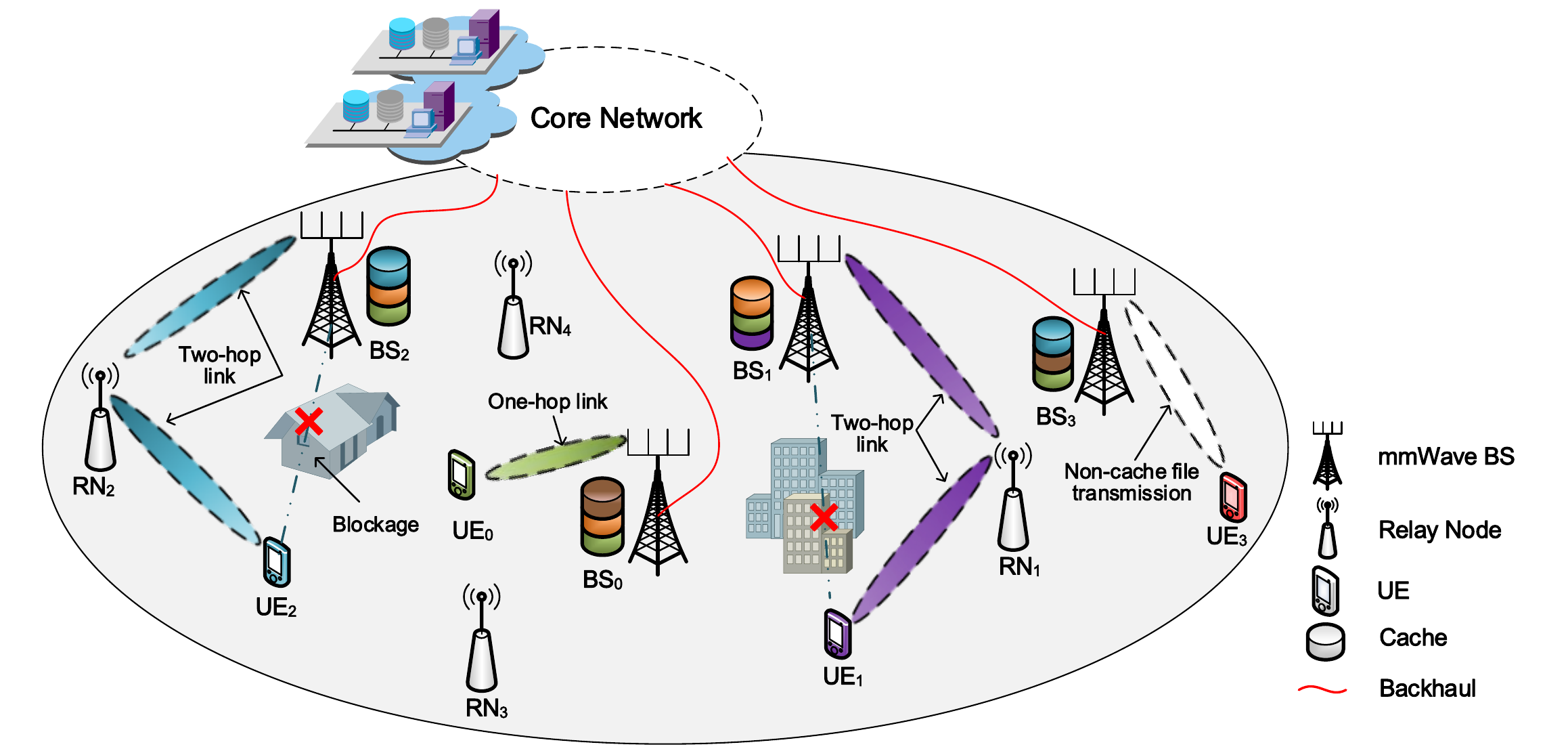}
\caption{An illustration of a relay-assisted millimeter wave network with caches at BSs.}
\label{fig:1}
\end{figure}
%\begin{spacing}{1.395}
\begin{spacing}{1.38}
\subsection{Caching Model}\label{caching-model}
Proactive caching is adopted in this paper, where popular files are pre-cached in BSs when the network is off-peak. Assume that each BS can cache a total of $C$ files of equal size\footnote{For analytical simplicity, the size of each file is assumed to be equal and normalized to one. In practice, if the file sizes are unequal, we can split the files into small partitions of equal size, thus the file sizes can still be regarded as equal.}. Further, we model file requests by a Zipf distribution, which is widely adopted in content-centric networking scenarios \cite{Joint-caching-and-pricing, Delay-and-Power-Tradeoff, Joint-power-allocation-and-caching}, and the popularity of the $n$-th file is given as $a_n = \frac{n^{- \delta}}{\sum_{m=1}^{F}m^{-\delta}}$, where $F$ is the total number of files in the network, and $\delta$ is a shape parameter that shapes the skewness of the popularity distribution. Thus, a file with smaller index owns higher popularity.

We adopt a probabilistic caching algorithm to place these files \cite{Caching-policy-toward-maximal}. Denote $p_{n}$ as the probability of caching the $n$-th file $c_n$ such that $0\leq p_{n} \leq 1 $ and $\sum_{n=1}^{F} p_{n} \leq C$. Each BS caches files independently based on the caching probability distribution $\textbf{p} = \{p_1,\cdots,p_F\}$. According to the thinning theorem of PPP \cite{Stochastic-Geometry-and-Its}, the locations of BSs caching the $n$-th file can be modeled as an independent PPP with density $\lambda_{\mathrm{BS_n}} =  p_{n} \lambda_{\mathrm{BS}}$.

\subsection{Transmission Model}\label{association-criterion}
%During transmission, blockage effects and directional beamforming are incorporated in the relay-assisted mmWave network. Different from the maximum biased received power based association criterions in traditional mmWave, the relaying and association criterion proposed in this research jointly considers cached files at BSs and the maximum biased received power at UEs.

%\textit{1) Blockage Model}:
We adopt a two-state statistical blockage model for each link as in \cite{Coverage-and-rate-analysis}, such that the probability of a link to be LOS or NLOS is a function of the distance between UE$_0$ and the serving BS$_0$ (or RN$_0$). Assume that the distance between them is $r$, then the probability that a link of length $r$ is LOS or NLOS can be modeled as
\vspace{-2mm}
\begin{equation}\label{link-state-def}
  \rho_\mathrm{L}(r) = \mathrm{e}^{-\beta r},\ \rho_\mathrm{N}(r) = 1 - \mathrm{e}^{-\beta r},
\end{equation}
respectively, where $\beta$ is the blockage parameter depending on the building parameter and density \cite{Analysis-of-blockage}. Based on the thinning theorem, the PPP $\Phi_\omega$ can be thinned as $\Phi_\omega^{\mathrm{L}}$ and $\Phi_\omega^{\mathrm{N}}$ with densities $\lambda_\omega \rho_\mathrm{L}(r)$ and $\lambda_\omega \rho_\mathrm{N}(r)$ for LOS and NLOS, respectively, where $\omega \in \{\mathrm{BS,RN}\}$.
%\end{spacing}
%\begin{spacing}{1.385}

%\textit{2) Beamforming Model}:
It is assumed that the antenna arrays at the BSs, RNs and UEs perform directional beamforming with the main lobe directed towards the dominant propagation path and having less radiant energy in other directions. For tractability in the analysis, we adopt a sectorial antenna pattern \cite{Performance-analysis-of-outdoor}, and the antenna gain pattern at the transceivers is given as

\vspace{-3mm}
\begin{small}
\begin{equation}\label{Gq}
  G_q(\theta)=\begin{cases}
M, \quad \text{if} \ |\phi| \leq \theta,   \\
m, \quad \text{if} \ |\phi| \geq \theta,
\end{cases}
\end{equation}
\end{small}
where $\theta$ is the main lobe beamwidth, $q \in \{\text{T, R}\}$ denotes the transmitter and the receiver, respectively, $\phi \in [0, 2\pi)$ is the angle deviation from the antenna boresight, and $M$ and $m$ are the directivity gain of main and side lobes, respectively. Thus, the random antenna gain/interference $G$ between the transceiver has 3 patterns with different probabilities, which is given as

\vspace{-7mm}
\begin{small}
\begin{numcases}{G=}
MM, \quad \text{with prob.} \  (\frac{\theta}{2\pi} )^2, \nonumber  \\
Mm, \quad \text{with prob.} \  \frac{2 \theta (2\pi - \theta)}{(2\pi)^2}, \label{Gx-definition} \\
mm, \quad \text{with prob.} \  (\frac{2\pi-\theta}{2\pi} )^2. \nonumber
\end{numcases}
\end{small}
$\!\!$For tractability, we assume a perfect beam alignment between the transmitter and its associated receiver \cite{On-the-Performance-of-Relay}.

%\textit{3) User Association and Relaying Criterion}:

%Illustration of the above two cases is presented in Fig. \ref{fig:1}. UE$_0$ is directly served by BS$_0$ that caches the requested file via a one-hop link and has the maximum biased received power. By contract, although UE$_1$ is closer to BS$_0$, it is served by BS$_1$ which caches the requested file via a two-hop link with the aid of RN$_1$. Similarly, UE$_2$ is also served via a two-hop link by BS$_2$. Note that no BS caches the file requested by UE$_3$, so the requested file is retrieved through the backhaul link.

%\subsection{Channel Model}
Independent Nakagami fading is assumed for each link. Parameters of Nakagami fading $N_\mathrm{L}$ and $N_\mathrm{N}$ are assumed for LOS and NLOS links, respectively. Let $h_\ell$ be the small-scale fading term on the $\ell$-th link. Then $|h_\ell|^2$ is a normalized Gamma random variable. Further, we assume $N_\mathrm{L}$ and $N_\mathrm{N}$ are positive integers for simplicity. Accordingly, the channels in the relay-assisted downlink mmWave network are modeled as follows,

\textit{1) One-hop link: } When UE$_0$ is directly associated with BS$_0$ and requests the $n$-th file from BS$_0$ via this one-hop link (i.e., BU link), the downlink received signal-to-interference-plus-noise ratio (SINR) is given by
%\begin{equation}\label{SINR-def}
%\mathrm{SINR}_{j,{\omega}} = \frac{P_\omega |h_j|^2 G_j L(|r_j|)}{ \mathcal{I}_{j,\omega}^{\mathrm{BS}} + \mathcal{I}_{j,\omega}^{\mathrm{RN}} + \sigma^2 }
%\end{equation}
%\begin{small}
\begin{equation} \small \label{SINR-def1}
\text{SINR}^{\mathrm{BU}}_{n} = \frac{P_{\mathrm{BS}} |h_n|^2 G_{\mathrm{BS}} L(|r_{\mathrm{BS}_{0}, \mathrm{UE}_{0}}|)}{ \underbrace{ \sum_{j \in \Phi_{\mathrm{BS}_n} \backslash \{BS_0\} } P_{\mathrm{BS}} |h_j|^2 G_j L(|r_j|) + \sum_{k \in \bar{\Phi}_{\mathrm{BS}_n} } P_{\mathrm{BS}} |h_k|^2 G_k L(|r_k|) }_{ \mathcal{I}^{\mathrm{BU}}_{n}\{\Phi_{\mathrm{BS}}\} } + \underbrace{ \sum_{t \in \Phi_{\mathrm{RN}} } P_{\mathrm{RN}} |h_t|^2 G_t L(|r_t|) }_{\mathcal{I}^{\mathrm{BU}}_{n}\{\Phi_{\mathrm{RN}}\}} + \sigma^2 },
\end{equation}
%\end{small}
where $\Phi_{\mathrm{BS}_n}$ is the point process with density $p_n \lambda_{\mathrm{BS}}$ that cache the file $c_n$. $P_{\mathrm{BS}}$ and $P_{\mathrm{RN}}$ are the transmit power of BS and RN, respectively. $L(|r_{s,d}|) = \gamma (r_{s, d})^{-\alpha}$ is the path loss between node $s$ and node $d$ with $s \in \{\Phi_{\mathrm{BS}},\Phi_{\mathrm{RN}}\}$ and $d \in \{\Phi_{\mathrm{RN}},\Phi_{\mathrm{UE}}\}$. The path loss exponent $\alpha = \alpha_{\mathrm{L}}$ when it is a LOS link and $\alpha = \alpha_{\mathrm{N}}$ when it is an NLOS link. For notational simplicity, let $|r_{s}|$ denote $|r_{s,d}|$ in the denominator.  $\sigma^2$ is the additive Gaussian noise seen at UE$_0$. $\mathcal{I}^{\mathrm{BU}}_{n}\{\Phi_{\mathrm{BS}}\}$ and $\mathcal{I}^{\mathrm{BU}}_{n}\{\Phi_{\mathrm{RN}}\}$ are the inter-cell interferences caused by BSs and RNs, respectively. ${\Phi}_{\mathrm{BS}_n} \backslash \{ \mathrm{BS_0}\}$ is the point process with density $p_n \lambda_{\mathrm{BS}}$ corresponding to the interfering BSs that cache the $n$-th file, and $\bar{\Phi}_{\mathrm{BS}_n} = {\Phi}_{\mathrm{BS}} - {\Phi}_{\mathrm{BS}_n}$ is the point process with density $(1-p_n) \lambda_{\mathrm{BS}}$ corresponding to the interfering BSs that do not cache the $n$-th file.
%which are given by
%\begin{equation}\label{interference-def-BS}
%  \begin{cases}
%\mathcal{I}_{j,\mathrm{BS}}^{\mathrm{BS}} = \sum_{i \in \Phi_{\mathrm{BS}_j} \backslash BS_0} P_{\mathrm{BS}} |h_i|^2 L(|r_i|) + \sum_{k \in \bar{\Phi}_{\mathrm{BS}_j} } P_{\mathrm{BS}} |h_k|^2 L(|r_k|), \\
%\mathcal{I}_{j,\mathrm{BS}}^{\mathrm{RN}} = \sum_{n \in \Phi_{\mathrm{RN}} } P_{\mathrm{RN}} |h_n|^2 L(|r_n|).
%\end{cases}
%\end{equation}

\textit{2) Two-hop links: }
When UE$_0$ is associated with RN$_0$ that serves as an intermediate node, the SINRs  of the BR link and the RU link can be given respectively as follows,
%\begin{small}
\begin{equation} \small \label{SINR-def2}
\text{SINR}^{\mathrm{BR}}_{n} = \frac{P_{\mathrm{BS}} |h_n|^2 G_{\mathrm{BS}} L(|r_{\mathrm{BS}_{0}, \mathrm{RN}_{0}}|)}{ \underbrace{ \sum_{j \in \Phi_{\mathrm{BS}_n} \backslash \{BS_0\} } P_{\mathrm{BS}} |h_j|^2 G_j L(|r_j|) + \sum_{k \in \bar{\Phi}_{\mathrm{BS}_n} } P_{\mathrm{BS}} |h_k|^2 G_k L(|r_k|) }_{ \mathcal{I}^{\mathrm{BR}}_{n}\{\Phi_{\mathrm{BS}}\} } + \underbrace{ \sum_{t \in \Phi_{\mathrm{RN}} \backslash \{ \mathrm{RN}_0 \} } P_{\mathrm{RN}} |h_t|^2 G_t L(|r_t|) }_{\mathcal{I}^{\mathrm{BR}}_{n}\{\Phi_{\mathrm{RN}}\}} + \sigma^2 },
\end{equation}
\vspace{-4mm}
\begin{equation} \small \label{SINR-def3}
\text{SINR}^{\mathrm{RU}}_{n} = \frac{P_{\mathrm{RN}} |h_n|^2 G_{\mathrm{RN}} L(|r_{\mathrm{RN}_{0}, \mathrm{UE}_{0}}|)}{ \underbrace{ \sum_{j \in \Phi_{\mathrm{BS}} } P_{\mathrm{BS}} |h_j|^2 G_j L(|r_j|)  }_{ \mathcal{I}^{\mathrm{RU}}_{n}\{\Phi_{\mathrm{BS}}\} } + \underbrace{ \sum_{k \in \Phi_{\mathrm{RN}} \backslash \{ \mathrm{RN}_0 \} } P_{\mathrm{RN}} |h_k|^2 G_k L(|r_k|) }_{\mathcal{I}^{\mathrm{RU}}_{n}\{\Phi_{\mathrm{RN}}\}} + \sigma^2 },
\end{equation}
%\end{small}
$\!\!$The rate of each link is given by the Shannon's formula which is expressed as,
\vspace{-3mm}
\begin{equation}\label{Shannon-formula}
  \mathcal{R}_n^{\ell} = B \log_2(1 + \text{SINR}_n^{\ell}),
\end{equation}
where $\ell \in \{ \text{BU, BR, RU} \}$ denotes the link between transmitters and receivers, and $B$ denotes the subchannel bandwidth.

\end{spacing}
%\section{Association Policy} \label{section-3}
%\vspace{-2mm}

\section{Problem Formulation and Analysis }
In order to investigate performance improvement with caching and relaying in the mmWave network, we define SBOP as the performance metric.
%, and then a holistic expression of SBOP is provided using the tools from stochastic geometry.

\textit{Definition 1 (SBOP)}: The SBOP is defined as the probability that a file requested by UE$_0$ is cached at its associated BS and can be successfully delivered to UE$_0$ at the rate satisfying its requirement, in which case the backhaul traffic is offloaded.

The total SBOP in the relay-assisted mmWave network can be obtained by using the law of total probability, which is given as,
\vspace{-3mm}
\begin{equation}\label{SCDP-def}
  \mathcal{P}_{\mathrm
  {s}}  = \sum_{n=1}^{F} a_n \Big\{ p_{\mathrm{(1hop)}} \mathcal{P}_{\mathrm{s}}^{\mathrm{(1hop)}}(\tau_n ) + p_{\mathrm{(2hop)}} \mathcal{P}_{\mathrm{s}}^{\mathrm{(2hop)}}(\tau_n) \Big\} ,
\end{equation}
%= \sum_{n=1}^{F} a_n \mathbb{P} [\mathcal{R}_{x_n} \geq \tau_n]
where $a_n$ is the probability of requesting file $c_n$, and $\tau_n$ is the target rate of file $c_n$. $p_{\mathrm{(1hop)}}$ and $p_{\mathrm{(2hop)}}$ denote the probability that UE$_0$ is served via the one-hop link (i.e., BS$_0$-UE$_0$) and two-hop link (i.e., BS$_{\mathrm{R}_0}$-RN$_0$-UE$_0$), respectively. $\mathcal{P}_{\mathrm{s}}^{\mathrm{(1hop)}}(\tau_n )$ and $\mathcal{P}_{\mathrm{s}}^{\mathrm{(2hop)}}(\tau_n )$ denote the conditional SBOP of file $c_n$ via the one-hop link and two-hop link, respectively.

%\subsection{Analysis of Spatial Correlation on Cell Association}
\subsection{Analysis of User Association and Relaying}
%UE$_0$ can be served via either a one- or a two-hop link. In the first case, UE$_0$ is served by BS$_0$ and a single time-slot is used. In the second case, UE$_0$ is served by BS$_{\mathrm{R0}}$ with the aid of RN$_0$. In this latter case, two time-slots are used. A DF relaying protocol is considered \cite{On-the-Performance-of-Relay}. Thus, a typical downlink transmission may occur either in one or in two time-slots. All transmissions in the same resource block occur at the same wavelength.
The triplet BS$_0$, RN$_0$ and BS$_{\mathrm{R0}}$ is identified by using the following user association and relaying criterion:

\vspace{-8mm}
\begin{small}
\begin{subequations}
\begin{align}
   \mathrm{BS}_0 & = \mathop{\arg\max}_{\mathrm{BS}_i \in \Phi_{\mathrm{BS}_n}} \Big\{P_{\mathrm{BS}}  G_{\mathrm{BS}} L(|r_{\mathrm{BS}_i, \mathrm{UE}_0}|)  \Big\},  \label{BS0-rule}  \\
   \mathrm{RN}_0 & = \mathop{\arg\max}_{\mathrm{RN}_j \in \Phi_{\mathrm{RN}}} \left\{ \min \Big\{ P_{\mathrm{RN}}  G_{\mathrm{RN}} L(|r_{\mathrm{RN}_j, \mathrm{UE}_0}|) ,\max_{\mathrm{BS}_k \in \Phi_{\mathrm{BS}_n}} \Big\{ P_{\mathrm{BS}} G_{\mathrm{BS}} L(|r_{\mathrm{BS}_k, \mathrm{RN}_j}|) \Big\}  \Big\}  \right\}, \label{RN0-rule} \\
   \mathrm{BS_{R0}} & = \mathop{\arg\max}_{\mathrm{BS}_i \in \Phi_{\mathrm{BS}_n}} \Big\{P_{\mathrm{BS}}  G_{\mathrm{BS}} L(|r_{\mathrm{BS}_i, \mathrm{RN}_0}|)   \Big\} , \label{BSR0-rule}
\end{align}
\end{subequations}
\end{small}
$\!\!\!$  Eq. (\ref{BS0-rule}) ensures that UE$_0$ receives the highest power from the available BSs that cache the requested file $c_n$. Eq. (\ref{RN0-rule}) ensures that RN$_0$ maximizes the minimum value of the received power of the two-hop link, which is considered to be the best relay. Eq. (\ref{BSR0-rule}) ensures that RN$_0$ receives the highest power from the available BSs that cache the requested file $c_n$.

%\begin{figure*}
%\centering
%\includegraphics [height = 2.8in]{mmwave-strategy-illustration5}
%\caption{An illustration of the communication policy for the typical UE with either BSs or RNs.}
%\label{fig:communication policy}
%\end{figure*}

Whether UE$_0$ is served via a one- or a two-hop link depends on the maximum biased received power, which can be expressed as the following association criterion,
%\begin{small}
\begin{equation}\label{one-two-hop-condition}
  \begin{cases}
\text{two-hop:} \ \text{BS}_{R0} \to \text{RN}_0 \to \text{UE}_0 \ , \ \text{if} \ \frac{P_{\mathrm{RN}} B_{\mathrm{RN}} G_{\mathrm{RN}}} {|r_{\mathrm{RN}_0, \mathrm{UE}_0}|^{\alpha}}  >  \frac{P_{\mathrm{BS}} B_{\mathrm{BS}} G_{\mathrm{BS}}} {|r_{\mathrm{BS}_0, \mathrm{UE}_0}|^{\alpha}} \ \text{and} \ \frac{P_{\mathrm{BS}} B_{\mathrm{BS}} G_{\mathrm{BS}}} {|r_{\mathrm{BS}_{\mathrm{R0}}, \mathrm{RN}_0}|^{\alpha}}  >  \frac{P_{\mathrm{BS}} B_{\mathrm{BS}} G_{\mathrm{BS}}} {|r_{\mathrm{BS}_0, \mathrm{UE}_0}|^{\alpha}} ,  \\
\text{one-hop:} \ \text{BS}_0 \to \text{UE}_0 \ , \qquad \qquad  \  \text{else},
\end{cases}
\end{equation}
%\end{small}
where $B_{\mathrm{BS}}$ and $B_{\mathrm{RN}}$ are bias coefficients that aim to balance the load between one- or two-hop transmission, which is often used in relay-assisted networks \cite{Stochastic-Geometry-Modeling-and-System}.  Specifically, UE$_0$ is served by RN$_0$ only when the maximum biased received power of both the two-hop links (i.e., BR link and RU link) are greater than that of the one-hop link (i.e., BS link).
\textcolor{blue}{Based on the above user association and relaying criterion, it can be inferred that the user association and relaying probabilities is highly dependent on the caching probability $p_n$ and the distance $r$ between different communicating nodes. By mapping the compound random variable $r^{\alpha}$ to the one-dimensional variable $r$ and using (\ref{one-two-hop-condition}), we have
%\begin{align}
%&  \chi_{\mathrm{BU}} = \mathbb{P}\left[  \frac{ |r_{\mathrm{RU}}|^{\alpha} } { \overline{P}_{\mathrm{RN}} } > \frac{ |r_{\mathrm{BU}}|^{\alpha} } { \overline{P}_{\mathrm{BS_n}} } \bigcup \frac{ |r_{\mathrm{BR}}|^{\alpha} } { \overline{P}_{\mathrm{BS_n}} } > \frac{ |r_{\mathrm{BU}}|^{\alpha} }  { \overline{P}_{\mathrm{BS_n}} }  \right]  \label{user_association_chi_BU} \\
%& \chi_{\mathrm{RU}} =   \mathbb{P}\left[  \frac{ |r_{\mathrm{BU}}|^{\alpha} } { \overline{P}_{\mathrm{BS_n}} } > \frac{ |r_{\mathrm{RU}}|^{\alpha} } { \overline{P}_{\mathrm{RN}} }  \bigcap \frac{ |r_{\mathrm{BR}}|^{\alpha} } { \overline{P}_{\mathrm{BS_n}} } < \frac{ |r_{\mathrm{RU}}|^{\alpha} } { \overline{P}_{\mathrm{RN}} }    \right]  \label{user_association_chi_RU} \\
%& \chi_{\mathrm{BR}} =   \mathbb{P}\left[  \frac{ |r_{\mathrm{BU}}|^{\alpha} } { \overline{P}_{\mathrm{BS_n}} } > \frac{ |r_{\mathrm{BR}}|^{\alpha} } { \overline{P}_{\mathrm{RN}} }  \bigcap \frac{ |r_{\mathrm{RU}}|^{\alpha} } { \overline{P}_{\mathrm{BS_n}} } < \frac{ |r_{\mathrm{BR}}|^{\alpha} } { \overline{P}_{\mathrm{RN}} }   \right]  \label{user_association_chi_BR}
%\end{align}
\begin{small}
\begin{subequations}
\begin{flalign}
  & \textcolor{blue}{ \chi_{\mathrm{BU}}(p_n , r) =    \mathbb{P}\left\{  \frac{ |r_{\mathrm{RU}}|^{\alpha} } { \overline{P}_{\mathrm{RN}} } > r  \bigcup \frac{ |r_{\mathrm{BR}}|^{\alpha} } { \overline{P}_{\mathrm{BS}} } > r   \Bigg|  \frac{ |r_{\mathrm{BU}}|^{\alpha} } { \overline{P}_{\mathrm{BS}} } = r \right\},   \label{user_association_chi_BU} }\\
  & \textcolor{blue}{ \chi_{\mathrm{BR}}(p_n, r) =     \mathbb{P}\left\{  \frac{ |r_{\mathrm{BU}}|^{\alpha} } { \overline{P}_{\mathrm{BS}} } > r  \bigcap \frac{ |r_{\mathrm{RU}}|^{\alpha} } { \overline{P}_{\mathrm{BS}} } <  \frac{ |r_{\mathrm{BU}}|^{\alpha} } { \overline{P}_{\mathrm{BS}} }     \Bigg| \frac{ |r_{\mathrm{BR}}|^{\alpha} } { \overline{P}_{\mathrm{RN}} } = r  \right\},  \label{user_association_chi_BR}  }\\
  & \textcolor{blue}{ \chi_{\mathrm{RU}}(p_n , r) =     \mathbb{P}\left\{  \frac{ |r_{\mathrm{BU}}|^{\alpha} } { \overline{P}_{\mathrm{BS}} } > r  \bigcap \frac{ |r_{\mathrm{BR}}|^{\alpha} } { \overline{P}_{\mathrm{BS}} } <  \frac{ |r_{\mathrm{BU}}|^{\alpha} } { \overline{P}_{\mathrm{BS}} }   \Bigg| \frac{ |r_{\mathrm{RU}}|^{\alpha} } { \overline{P}_{\mathrm{RN}} } = r  \right\}.  \label{user_association_chi_RU}  }
\end{flalign}
\end{subequations}
\end{small}
Then the probability density functions (PDFs) of distance between communicating nodes considering the spatial correlation of BSs and RNs should be weighted by the user association and relaying probabilities, which can be obtained by }
%\vspace{-7mm}
\textcolor{blue}{
\begin{small}
\begin{subequations}\label{weighted-distribution}
\begin{align}
  \textcolor{blue}{ \tilde{f}_{\mathrm{BU}}(p_n, r) =  \frac{f_{\mathrm{BS_n}}(r) \chi_{\mathrm{BU}}(p_n, r)}  {\int_0^{\infty} f_{\mathrm{BS_n}}(r) \chi_{\mathrm{BU}}(p_n,r) \mathrm{d} r},   \label{distribution-considering-relay-BU} } \\
  \textcolor{blue}{ \tilde{f}_{\mathrm{BR}}(p_n, r) =  \frac{f_{\mathrm{BS_n}}(r) \chi_{\mathrm{BR}}(p_n, r)}  {\int_0^{\infty} f_{\mathrm{BS_n}}(r) \chi_{\mathrm{BR}}(p_n,r) \mathrm{d} r},  \label{distribution-considering-relay-BR} } \\
  \textcolor{blue}{ \tilde{f}_{\mathrm{RU}}(p_n, r) =  \frac{f_{\mathrm{RN}}(r) \chi_{\mathrm{RU}}(p_n, r)}  {\int_0^{\infty} f_{\mathrm{RN}}(r) \chi_{\mathrm{RU}}(p_n,r) \mathrm{d} r},    \label{distribution-considering-relay-RU} }
\end{align}
\end{subequations}
\end{small}
$\!\!$where $f_{\omega}(r) = 2 \pi \lambda_{\omega} r \mathrm{e}^{- \pi \lambda_{\omega} r^2}, \omega \in \{ \mathrm{BS_n, RN} \}$ are the PDFs of distance between communicating nodes without considering the spatial correlation of BSs and RNs. The denominators in Eqs. (\ref{distribution-considering-relay-BU})-(\ref{distribution-considering-relay-RU}) are the normalized factors.}

\textcolor{blue}{Then the conditional SBOP of file $c_n$ via the one-hop link from BS$_0$ to UE$_0$ at a distance $r$ can be given as
\vspace{-4mm}
\begin{equation}\label{SBOP-expansion-1hop}
  \textcolor{blue}{ \mathcal{P}_{\mathrm{s}}^{(\mathrm{1hop})}(p_n, \tau_n ) = \mathbb{P} \left[\text{SINR}_n^{\mathrm{BU}} > \nu_n \right]  = \int_{r>0} \mathbb{P} \left[\text{SINR}_n^{\mathrm{BU}} > \nu_n | r \right] \tilde{f}_{\mathrm{BU}}(r) \mathrm{d} r, }
\end{equation}
where $\nu_n = 2^{\tau_n / B} - 1$. The conditional SBOP of file $c_n$ via the two-hop links can be given as
\vspace{-3mm}
\begin{align}\label{SBOP-expansion-2hop}
  \textcolor{blue}{ \mathcal{P}_{\mathrm{s}}^{(\mathrm{2hop})}(p_n, \tau_n ) = }& \ \textcolor{blue}{\mathbb{P} \left[\text{SINR}_n^{\mathrm{BR}} > \nu_n, \text{SINR}_n^{\mathrm{RU}} > \nu_n \right]  \nonumber  }\\
  \textcolor{blue}{\overset{(a)} = } & \ \textcolor{blue}{ \int_{r>0} \mathbb{P} \left[\text{SINR}_n^{\mathrm{BR}} > \nu_n | r \right] \tilde{f}_{\mathrm{BR}}(r) \mathrm{d} r \cdot \int_{r>0} \mathbb{P} \left[\text{SINR}_n^{\mathrm{RU}} > \nu_n | r \right] \tilde{f}_{\mathrm{RU}}(r) \mathrm{d} r, }
\end{align}
where (a) is based on the assumption that the three point processes of BSs, RNs and UEs are independent, so $\text{SINR}_n^{\mathrm{BR}}$ and $\text{SINR}_n^{\mathrm{RU}}$ are independent\footnote{\textcolor{blue}{According to the multiplication rule of probability, Eq. (\ref{SBOP-expansion-2hop}) can be written as $  \mathbb{P} \left[\text{SINR}_n^{\mathrm{BR}} > \nu_n, \text{SINR}_n^{\mathrm{RU}} > \nu_n \right]  = \mathbb{P} \left[ \text{SINR}_n^{\mathrm{RU}} > \nu_n \right] \cdot \mathbb{P} \left[\text{SINR}_n^{\mathrm{BR}} > \nu_n \big| \text{SINR}_n^{\mathrm{RU}} > \nu_n \right]$.
Since $\Phi_{\mathrm{BS}}$ is independent of $\Phi_{\mathrm{RN}}$, the channel condition of the BR link is independent of that of the RU link. Thus,
the conditional probability of event $ \text{SINR}_n^{\mathrm{BR}} > \nu_n $ given event $ \text{SINR}_n^{\mathrm{RU}} > \nu_n $ is just the probability of event $ \text{SINR}_n^{\mathrm{BR}} > \nu_n $.}} \cite[Eq. (42)]{Stochastic-Geometry-Modeling-and-System}.
}
\vspace{-2mm}
\textcolor{blue}{\subsection{Problem Formulation}}
\textcolor{blue}{The user association and relaying probabilities fundamentally affects the caching placement probabilities, and vice versa. To jointly optimize the user association and relaying probabilities for different links as well as the caching probabilities of different files at BSs, we formulate the optimization problem as follows,}
\vspace{-3mm}
\textcolor{blue}{
\begin{subequations}
\begin{align}
\textcolor{blue}{\textbf{P1: } \max_{\bm{\chi}, \textbf{p}} } \label{objective-function1}  &\quad \textcolor{blue}{ \mathcal{P}_{\mathrm{s}}  }\\
\textcolor{blue}{ \text{s.t.}  }  &\quad  \textcolor{blue}{ \sum\nolimits_{n=1}^{F} p_n \leq C, \label{constraint:cachesize-constraint1} } \\
&\quad \textcolor{blue}{ 0 \leq p_n \leq 1 ,\quad \forall n\in \mathcal{F},  } \label{constraint:probability-constraint1}  \\
&\quad \textcolor{blue}{ (\ref{user_association_chi_BU}), (\ref{user_association_chi_BR}), (\ref{user_association_chi_RU}), } \nonumber
\end{align}
\end{subequations}}
$\!\!\!$\textcolor{blue}{where $\bm{\chi} = \{\chi_{\mathrm{BU}}, \chi_{\mathrm{BR}}, \chi_{\mathrm{RU}} \}$.  The objective function SBOP in (\ref{objective-function1}) is given by substituting (\ref{SBOP-expansion-1hop}) and (\ref{SBOP-expansion-2hop}) into (\ref{SCDP-def}). }
The constraint (\ref{constraint:cachesize-constraint1}) ensures that the size of the total cached files will not exceed the cache capacity $C$.
\textcolor{blue}{Constraints (\ref{user_association_chi_BU})--(\ref{user_association_chi_RU}) ensure the maximum biased received power criterion.  Problem \textbf{P1} is difficult to solve due to the coupling relationship between the optimization variables $\bm{\chi}$ and $\textbf{p}$ in (\ref{objective-function1}) and (\ref{user_association_chi_BU})--(\ref{user_association_chi_RU}).  Fortunately, with the aid of tools from stochastic geometry,  the optimization variables can be decoupled and problem \textbf{P1} can be transformed into a caching placement optimization problem. }

\textcolor{blue}{ \subsection{Problem Decoupling} }
\textcolor{blue}{To decouple the caching probabilities and the user association and relaying probabilities,  the impact of blockage effects in mmWave networks as well as the spatial correlation caused by the coexistence of BSs and RNs need to be taken into consideration. }
Note that unlike conventional sub-6 GHz cellular networks, the path loss exponents in mmWave networks are different for the LOS or NLOS link. \textcolor{blue}{In the following, we take the blockage effects into account and derive the distribution of inverse biased received power, then the relationship between caching probabilities and the user association and relaying probabilities is provided for problem decoupling. Besides, the impact of blockage effects on the relay-assisted mmWave network is analyzed.}
%\vspace{-3mm}

\begin{lemma}\label{lemma1}
The cumulative distribution function (CDF) of inverse  biased received power from BSs or RNs in the mmWave  network is calculated as
\begin{small}
\begin{align}\label{least-pathloss-distribution}
F_{\omega} (r) =  & \  1 - \mathrm{exp}  \bigg( - \lambda_\omega \Big(  \pi  (r \mathrm{\overline{P}_{\omega}})^{\frac{2}{\alpha_{\mathrm{N}}}}  - \frac{2 \pi }{\beta^2} \left( \vartheta_{\mathrm{L}} -\vartheta_{\mathrm{N}}   \right)  \Big) \bigg), \omega \in \{ \mathrm{BS,RN }\}  ,
\end{align}
\end{small}
where  $\mathrm{\overline{P}_{\omega}} = \textcolor{blue}{\gamma} P_\mathrm{\omega} G_\mathrm{\omega} B_\mathrm{\omega}, G_\mathrm{\omega} = MM$,  $ \vartheta_k = \mathrm{e}^{-\beta(r \mathrm{\overline{P}_{\omega}})^{\frac{1}{\alpha_{k}}}}\left(1+\beta(r \mathrm{\overline{P}_{\omega}})^{\frac{1}{\alpha_{k}}}\right), k \in \{ \mathrm{L, N} \} $, and $\textcolor{blue}{ \mathrm{BS} \in \Phi_{\mathrm{BS_n}}, \mathrm{RN} \in \Phi_{\mathrm{RN}} }$.

\end{lemma}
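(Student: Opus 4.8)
The plan is to compute the CDF of the inverse biased received power $X_\omega := |r_{\omega}|^{\alpha}/\overline{P}_\omega$ for the node $\omega \in \{\mathrm{BS_n}, \mathrm{RN}\}$ that maximizes the biased received power, i.e.\ $F_\omega(r) = \mathbb{P}\{X_\omega \le r\}$, where the relevant point process has density $\lambda_\omega$ (with $\lambda_{\mathrm{BS_n}} = p_n \lambda_{\mathrm{BS}}$ after thinning by the caching probability). The event $\{X_\omega \le r\}$ is the event that at least one node of $\Phi_\omega$ has inverse biased received power at most $r$, so $F_\omega(r) = 1 - \mathbb{P}\{\text{no such node}\}$. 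Since the process is a PPP, this void probability is $\exp(-\Lambda_\omega(r))$, where $\Lambda_\omega(r)$ is the expected number of nodes with inverse biased received power below $r$. The key subtlety, and what distinguishes this from the classical single-slope result, is that whether a node at distance $\ell$ contributes depends on whether its link is LOS or NLOS, since the path loss exponent switches between $\alpha_{\mathrm{L}}$ and $\alpha_{\mathrm{N}}$; the LOS/NLOS status is itself random with distance-dependent probabilities $\rho_{\mathrm{L}}(\ell) = e^{-\beta\ell}$ and $\rho_{\mathrm{N}}(\ell) = 1 - e^{-\beta\ell}$ from \eqref{link-state-def}.

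First I would split $\Phi_\omega$ into the independent LOS and NLOS thinned processes $\Phi_\omega^{\mathrm{L}}$ and $\Phi_\omega^{\mathrm{N}}$ with distance-dependent intensities $\lambda_\omega \rho_{\mathrm{L}}(\ell)$ and $\lambda_\omega \rho_{\mathrm{N}}(\ell)$. For a LOS node at distance $\ell$, the inverse biased received power is $\ell^{\alpha_{\mathrm{L}}}/\overline{P}_\omega \le r$ iff $\ell \le (r\overline{P}_\omega)^{1/\alpha_{\mathrm{L}}}$; similarly for NLOS with exponent $\alpha_{\mathrm{N}}$. Using the expression for the expected number of points of a PPP with radially symmetric intensity, the mean measure is
\begin{small}
\begin{equation*}
\Lambda_\omega(r) = \int_0^{(r\overline{P}_\omega)^{1/\alpha_{\mathrm{L}}}} \lambda_\omega e^{-\beta\ell} \, 2\pi \ell \, \mathrm{d}\ell + \int_0^{(r\overline{P}_\omega)^{1/\alpha_{\mathrm{N}}}} \lambda_\omega \left(1 - e^{-\beta\ell}\right) 2\pi \ell \, \mathrm{d}\ell .
\end{equation*}
\end{small}
The second integral's polynomial part gives $\lambda_\omega \pi (r\overline{P}_\omega)^{2/\alpha_{\mathrm{N}}}$, which is the leading term in \eqref{least-pathloss-distribution}. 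The remaining two pieces are both of the form $\int_0^{a_k} 2\pi\lambda_\omega \ell e^{-\beta\ell}\,\mathrm{d}\ell$ with $a_k = (r\overline{P}_\omega)^{1/\alpha_k}$ (appearing with a $+$ sign for $k=\mathrm{L}$ and a $-$ sign for $k=\mathrm{N}$). Integration by parts gives $\int_0^a 2\pi\ell e^{-\beta\ell}\,\mathrm{d}\ell = \frac{2\pi}{\beta^2}\left(1 - e^{-\beta a}(1+\beta a)\right)$; the constant $\frac{2\pi}{\beta^2}$ cancels between the LOS ($+$) and NLOS ($-$) contributions, leaving exactly $-\frac{2\pi}{\beta^2}(\vartheta_{\mathrm{L}} - \vartheta_{\mathrm{N}})$ with $\vartheta_k = e^{-\beta a_k}(1+\beta a_k)$ as defined. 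Assembling, $\Lambda_\omega(r) = \lambda_\omega\big(\pi(r\overline{P}_\omega)^{2/\alpha_{\mathrm{N}}} - \frac{2\pi}{\beta^2}(\vartheta_{\mathrm{L}}-\vartheta_{\mathrm{N}})\big)$, and substituting into $F_\omega(r) = 1 - e^{-\Lambda_\omega(r)}$ yields \eqref{least-pathloss-distribution}.

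I do not anticipate a hard conceptual obstacle here; the argument is a standard PPP void-probability computation. The one point that requires care is the bookkeeping of which exponent governs which thinned process and making sure the constant terms from the two integration-by-parts evaluations cancel correctly so that only the $\vartheta_{\mathrm{L}} - \vartheta_{\mathrm{N}}$ difference survives — a sign error there would propagate into the final formula. A secondary subtlety worth a remark is the implicit assumption that $\alpha_{\mathrm{N}} \ge \alpha_{\mathrm{L}}$ (so that the NLOS polynomial term dominates at large $r$ and $F_\omega$ is a valid CDF tending to $1$), and the use of the fixed antenna gain $G_\omega = MM$ reflecting the perfect beam-alignment assumption between the serving node and the typical receiver, so that the biasing factor $\overline{P}_\omega = \gamma P_\omega G_\omega B_\omega$ is deterministic rather than random.
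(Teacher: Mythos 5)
Your proposal is correct and follows essentially the same route as the paper: both compute the intensity measure of the inverse biased received power process by splitting into LOS/NLOS contributions with distance-dependent thinning (yielding the identical integral over $[0,(r\overline{P}_\omega)^{1/\alpha_k}]$ for each state) and then apply the PPP void probability. You in fact carry the integration further than the paper's proof does, explicitly verifying the cancellation of the $2\pi/\beta^2$ constants, which is a useful sanity check but not a departure in method.
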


\begin{proof}
\textcolor{blue}{To prove Lemma 1, we should first obtain the intensity measure  of the non-homogeneous Poisson point process $\Theta_{1} = \left\{\frac{\|r\|^{\alpha}}{\overline{P}_{\omega} }\right\}$ with regard to $r$.
Here, the points of this point process represent the inverse received power from randomly placed BSs or RNs considering the bias coefficients. }
The path loss $\alpha$ is a random variable related to the distance $r$ between communicating nodes, which takes on values $\alpha_{\mathrm{L}}$ and $\alpha_{\mathrm{N}}$ with probability $\mathrm{e}^{-\beta r}$ and $1-\mathrm{e}^{-\beta r}$, respectively.
\textcolor{blue}{By obtaining the intensity measure of Poisson point process $\Theta_{1}$, the distribution of the inverse biased received power $\frac{\|r\|^{\alpha}}{\overline{P}_{\omega} }$ can be obtained by using the void probability $1 - \exp (-\Lambda(0, r))$ of a PPP.}
%Let $\Theta_{1} = \left\{\frac{\|r\|^{\alpha}}{\overline{P}_{\omega} }\right\}$ be a non-homogeneous PPP with regard to $r$.

\textcolor{blue}{
To calculate the intensity measure of $\Theta_{1}$, we utilize the Mapping Theorem \cite[Thm. 2.34]{stochastic-geometry-for-wireless-networks} of the point process transformation. First, the intensity of one-dimensional PPP $\Theta_{2}$ can be calculated as $2 \pi \lambda_{\omega} r$. According to the Mapping Theorem, let $\Phi$ be a PPP on $\mathbb{R}^{s}$ with intensity $\Lambda$ and intensity function $\lambda$, and let $f: \mathbb{R}^{s} \mapsto \mathbb{R}^{d}$ be a measurable function. Then $\Phi^{'} = f(\Phi) = \bigcup_{x \in \Phi } \{ f(x) \}$ is a PPP with intensity measure  $\Lambda^{'}(B) = \Lambda(f^{-1}(B) ) = \int_{f^{-1}(B) } \lambda(x) \mathrm{d} x $ for all $B \subset \mathbb{R}^{d}$. If let $\Phi = \{ ||x||\}$ and  $\Phi^{'} = \Theta_{1} $ , we have $f^{-1} (B) = [0, r]$. Then $B = [0, (r \overline{P}_{\omega})^{\frac{1} {\alpha}}]$. Thus the intensity measure of $\Theta_{1}$ under the effect of blockages can be calculated as}
\begin{equation}\label{mapping-theorem} \small
  \Lambda([0,r]) = \int_0^{(r \overline{P}_{\omega})^{\frac{1}{\alpha_{\mathrm{L}}}}} 2 \pi \lambda_\omega v \mathrm{e}^{-\beta v} \mathrm{d} v + \int_0^{(r \overline{P}_{\omega})^{\frac{1}{\alpha_{\mathrm{N}}}}} 2 \pi \lambda_\omega v (1 - \mathrm{e}^{-\beta v}) \mathrm{d} v.
\end{equation}
Then, the inverse biased received power  distribution is obtained by computing the integral in (\ref{mapping-theorem}) and using the void probability of a PPP.
\end{proof}
\textcolor{blue}{According to Lemma \ref{lemma1} and the maximum biased received power criterion (\ref{user_association_chi_BU})--(\ref{user_association_chi_RU}), the relationship between user association and relaying probabilities and caching probabilities is provided in Proposition \ref{prop-user-association}. }
\textcolor{blue}{
\begin{proposition}\label{prop-user-association}
The user association and relaying probabilities $\bm{\chi} = \{\chi_{\mathrm{BU}}, \chi_{\mathrm{BR}}, \chi_{\mathrm{RU}} \}$ can be expressed as functions of caching probabilities as follows,
\textcolor{blue}{
\begin{subequations}\label{actual-path-loss}
\begin{flalign}
  & \textcolor{blue}{ \chi_{\mathrm{BU}}(p_n , r) =    \overline{F}_{\mathrm{RN}}(r) F_{\mathrm{BS_n}}(r) + \overline{F}_{\mathrm{BS_n}}(r), }  \label{path-loss-distribution-actualBU}  \\
  & \textcolor{blue}{ \chi_{\mathrm{BR}}(p_n, r)  =  \int_{r}^{\infty} \tilde{F}_{\mathrm{BS_n}}(x) F_{\mathrm{RN}}(x) \mathrm{d} x,  } \label{path-loss-distribution-actualBR}   \\
  & \textcolor{blue}{ \chi_{\mathrm{RU}}(p_n , r) =    \int_{r}^{\infty} \tilde{F}_{\mathrm{BS_n}}(x) F_{\mathrm{BS_n}}(x) \mathrm{d} x,   } \label{path-loss-distribution-actualRU}
\end{flalign}
\end{subequations}}
where  $\overline{F}_{\omega}(r)  \triangleq 1 - F_{\omega}(r)$, and $\tilde{F}_{\omega}(r) = \frac{\mathrm{d} F_{\omega}(r)} {\mathrm{d} r}$.
\end{proposition}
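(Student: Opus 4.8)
The plan is to read each conditional probability in (\ref{user_association_chi_BU})--(\ref{user_association_chi_RU}) as a statement about three scalar random variables --- the inverse biased received powers seen on the BU, BR and RU links --- and then evaluate it by combining Lemma \ref{lemma1} with the independence of $\Phi_{\mathrm{BS}}$ and $\Phi_{\mathrm{RN}}$. Write $T_{\mathrm{BU}} = |r_{\mathrm{BU}}|^{\alpha}/\overline{P}_{\mathrm{BS}}$, $T_{\mathrm{BR}} = |r_{\mathrm{BR}}|^{\alpha}/\overline{P}_{\mathrm{BS}}$ and $T_{\mathrm{RU}} = |r_{\mathrm{RU}}|^{\alpha}/\overline{P}_{\mathrm{RN}}$. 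By the selection rules (\ref{BS0-rule})--(\ref{BSR0-rule}), $\mathrm{BS}_0$ and $\mathrm{BS}_{\mathrm{R0}}$ are the elements of $\Phi_{\mathrm{BS}_n}$ giving the largest biased received power at $\mathrm{UE}_0$ and at $\mathrm{RN}_0$ respectively, and $\mathrm{RN}_0$ is the strongest relay for $\mathrm{UE}_0$; so, by the same construction as in the proof of Lemma \ref{lemma1}, $T_{\mathrm{BU}}$ and $T_{\mathrm{BR}}$ are the smallest points of the mapped non-homogeneous point process obtained from $\Phi_{\mathrm{BS}_n}$ (base density $p_n\lambda_{\mathrm{BS}}$), and $T_{\mathrm{RU}}$ is the smallest point of the one obtained from $\Phi_{\mathrm{RN}}$. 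The void-probability argument of Lemma \ref{lemma1} then gives $T_{\mathrm{BU}},T_{\mathrm{BR}}\sim F_{\mathrm{BS_n}}$ and $T_{\mathrm{RU}}\sim F_{\mathrm{RN}}$, with densities $\tilde{F}_{\mathrm{BS_n}}=\mathrm{d}F_{\mathrm{BS_n}}/\mathrm{d}r$ and $\tilde{F}_{\mathrm{RN}}$.

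The second ingredient is independence. Since $\Phi_{\mathrm{BS}}$ and $\Phi_{\mathrm{RN}}$ are independent PPPs, $T_{\mathrm{RU}}$ is independent of $(T_{\mathrm{BU}},T_{\mathrm{BR}})$; and since $T_{\mathrm{BU}}$ and $T_{\mathrm{BR}}$ are determined by $\Phi_{\mathrm{BS}_n}$ observed from the two distinct reference points $\mathrm{UE}_0$ and $\mathrm{RN}_0$, I would treat them as independent as well --- this is exactly the stochastic-geometry decoupling on which the reformulation of \textbf{P1} rests. Granting this, the conditioning on the ``given'' link in each of (\ref{user_association_chi_BU})--(\ref{user_association_chi_RU}) is conditioning on a variable independent of the event, so it may be replaced by the plain (unconditional) probability with $r$ kept as a fixed threshold, and every joint law factors through the marginals $F_{\mathrm{BS_n}}$ and $F_{\mathrm{RN}}$.

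The three evaluations are then short. For $\chi_{\mathrm{BU}}$ the event is the union $\{T_{\mathrm{RU}}>r\}\cup\{T_{\mathrm{BR}}>r\}$; inclusion--exclusion plus independence gives $\overline{F}_{\mathrm{RN}}(r)+\overline{F}_{\mathrm{BS_n}}(r)-\overline{F}_{\mathrm{RN}}(r)\overline{F}_{\mathrm{BS_n}}(r)$, and substituting $\overline{F}=1-F$ rewrites this as $\overline{F}_{\mathrm{RN}}(r)F_{\mathrm{BS_n}}(r)+\overline{F}_{\mathrm{BS_n}}(r)$ (equivalently $1-F_{\mathrm{RN}}(r)F_{\mathrm{BS_n}}(r)$), which is (\ref{path-loss-distribution-actualBU}). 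For $\chi_{\mathrm{BR}}$, the event reduces to $\{T_{\mathrm{BU}}>r\}\cap\{T_{\mathrm{RU}}<T_{\mathrm{BU}}\}$; conditioning on $T_{\mathrm{BU}}=x$ contributes $\tilde{F}_{\mathrm{BS_n}}(x)\,\mathrm{d}x$, forces $x>r$, and carries the inner probability $\mathbb{P}(T_{\mathrm{RU}}<x)=F_{\mathrm{RN}}(x)$, so integrating over $x$ yields $\int_r^{\infty}\tilde{F}_{\mathrm{BS_n}}(x)F_{\mathrm{RN}}(x)\,\mathrm{d}x$, i.e., (\ref{path-loss-distribution-actualBR}). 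For $\chi_{\mathrm{RU}}$ the computation is identical after swapping the roles of the relay and the anchor BS, i.e., replacing $T_{\mathrm{RU}}$ by $T_{\mathrm{BR}}$ (CDF $F_{\mathrm{BS_n}}$), which gives $\int_r^{\infty}\tilde{F}_{\mathrm{BS_n}}(x)F_{\mathrm{BS_n}}(x)\,\mathrm{d}x$ and hence (\ref{path-loss-distribution-actualRU}).

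I expect the main obstacle to be the independence claim used in the second paragraph, not any of the algebra: $T_{\mathrm{BU}}$ and $T_{\mathrm{BR}}$ are both driven by the \emph{same} thinned process $\Phi_{\mathrm{BS}_n}$, only observed from different points, so their true joint distribution is correlated; asserting independence --- which is what lets the conditioning in (\ref{user_association_chi_BR}) drop out and makes $\tilde{F}_{\mathrm{BS_n}}$ in (\ref{path-loss-distribution-actualBR})--(\ref{path-loss-distribution-actualRU}) the \emph{un}conditioned density --- should be stated and defended as the key decoupling approximation. The only other point needing care is purely bookkeeping: translating the inequality events in (\ref{user_association_chi_BU})--(\ref{user_association_chi_RU}) faithfully into statements about $T_{\mathrm{BU}},T_{\mathrm{BR}},T_{\mathrm{RU}}$ that match the biased-power criterion (\ref{one-two-hop-condition}), in particular checking that whether the conditioning link is normalized by $\overline{P}_{\mathrm{BS}}$ or by $\overline{P}_{\mathrm{RN}}$ is immaterial for $\bm{\chi}$ because that link is independent of the event, the mismatch being absorbed by the normalizers in (\ref{weighted-distribution}).
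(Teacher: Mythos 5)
Your proposal is correct and follows essentially the same route as the paper's proof: inclusion--exclusion for $\chi_{\mathrm{BU}}$, and conditioning on the BU inverse biased power followed by integration against $\tilde{F}_{\mathrm{BS_n}}$ for $\chi_{\mathrm{BR}}$ and $\chi_{\mathrm{RU}}$, with the marginals supplied by Lemma \ref{lemma1} and the conditioning dropped by independence. The decoupling caveat you flag (that the BU and BR variables are both driven by the same thinned process $\Phi_{\mathrm{BS}_n}$ observed from different points) is a genuine approximation, but the paper's proof makes it silently as well, so it is not a gap relative to the paper's argument.
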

\begin{proof}
According to Lemma \ref{lemma1} and  (\ref{user_association_chi_BU}), $\chi_{\mathrm{BU}}(p_n, r)$ can be calculated by using the additive law of probability, which is given by
\begin{align}
\chi_{\mathrm{BU}}(p_n, r) = \overline{F}_{\mathrm{RN}}(r) + \overline{F}_{\mathrm{BS_n}}(r) - \overline{F}_{\mathrm{RN}}(r) \overline{F}_{\mathrm{BS_n}}(r)  = \overline{F}_{\mathrm{RN}}(r) F_{\mathrm{BS_n}}(r) + \overline{F}_{\mathrm{BS_n}}(r).  \nonumber
\end{align}
To calculate $\chi_{\mathrm{BR}}(p_n, r)$, let the random variable $\frac{ |r_{\mathrm{RU}}|^{\alpha} } { \overline{P}_{\mathrm{BS}} } = X$ and $\frac{ |r_{\mathrm{BU}}|^{\alpha} } { \overline{P}_{\mathrm{BS}} } = Y$, then we have
\begin{small}
\begin{align}
& \ \mathbb{P} \left\{ \frac{ |r_{\mathrm{BU}}|^{\alpha} } { \overline{P}_{\mathrm{BS}} } > r  \bigcap \frac{ |r_{\mathrm{RU}}|^{\alpha} } { \overline{P}_{\mathrm{BS}} } <  \frac{ |r_{\mathrm{BU}}|^{\alpha} } { \overline{P}_{\mathrm{BS}} } \right\}  =    \int_{r}^{\infty} \tilde{F}_{\mathrm{BS_n}}(y) \int_{0}^{y} \tilde{F}_{\mathrm{RN}}(x) \mathrm{d} x \mathrm{d} y = \int_{r}^{\infty} \tilde{F}_{\mathrm{BS_n}}(y) F_{\mathrm{RN}}(y) \mathrm{d} y.  \nonumber
\end{align}
\end{small}
Also, $\chi_{\mathrm{RU}}(p_n, r)$ can be calculated by a similar method of calculating $\chi_{\mathrm{BR}}(p_n, r)$. Then the proof is concluded.
\end{proof}}
\textcolor{blue}{
(\ref{path-loss-distribution-actualBU})--(\ref{path-loss-distribution-actualRU}) are the relationships between the user association and relaying probabilities and the caching placement probabilities, which reflect the spatial correlation due to the user association and relaying criterion in (\ref{one-two-hop-condition}).
The distances between communicating nodes are not only affected by the densities of BSs and RNs, but also by whether the UE$_0$ is served via a one- or a two-hop link. For example, when the distance between UE$_0$ and its nearest BS storing the requested file increases, the probability of obtaining the requested file via one-hop transmission decreases, which will affect the actual distance distribution between a UE and its associated BS or RN.
To this end, it is meaningful to derive the partial derivative of $\left| \chi_{\ell}(p_n, \beta) \right|, \ell \in \{ \text{BU, BR, RU} \}$ with respect to $r$, and investigate the impact of blockage effects or the caching probabilities on the changing trend of  user association and relaying probabilites given $r = r_0$, which is denoted by
$\widetilde{\chi}_{\ell}(p_n, \beta) =  \frac{\partial \, \chi_{\ell} (p_n, r)} {\partial \,  r} \big| _{r = r_0}  $.
}
\textcolor{blue}{
\begin{corollary}\label{limit-prop}
Assume that no link is established through  NLOS links, then $\left| \widetilde{\chi}_{\mathrm{BU}}(p_n, \beta) \right|$ decreases monotonically with the blockage parameter $\beta$ and increases monotonically with the caching probability $p_n$, and   $\lim_{\beta \rightarrow \infty } \tilde{f}_{\mathrm{BU}}(p_n, r) =  f_{\mathrm{BS_n}}(r)$ holds.
%\begin{align}
%  & \lim_{\beta \rightarrow \infty } \tilde{f}_{\mathrm{BU}}(p_n, r, \beta) = \lim_{\beta \rightarrow \infty } \tilde{f}_{\mathrm{BR}}(p_n, r, \beta) =  f_{\mathrm{BS_n}}(r),  \label{limit-infty1} \\
%  & \lim_{\beta \rightarrow \infty } \tilde{f}_{\mathrm{RU}}(p_n, r, \beta) =  f_{\mathrm{RN}}(r). \label{limit-infty2}
%\end{align}
\end{corollary}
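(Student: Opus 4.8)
The plan is to first collapse $\chi_{\mathrm{BU}}$ into a single closed form and then reduce the three assertions to elementary monotonicity facts about the CDF $F_\omega$ of Lemma~\ref{lemma1} and its density $\tilde{F}_\omega$. Under the ``no NLOS link'' hypothesis the second integral in (\ref{mapping-theorem}) disappears, so the relevant intensity measure is the single integral $\Lambda_\omega(r)\triangleq\int_0^{(r\overline{P}_\omega)^{1/\alpha_{\mathrm{L}}}}2\pi\lambda_\omega v\,\mathrm{e}^{-\beta v}\,\mathrm{d}v$ and $F_\omega(r)=1-\mathrm{e}^{-\Lambda_\omega(r)}$, i.e.\ only the $\vartheta_{\mathrm{L}}$ term of (\ref{least-pathloss-distribution}) survives. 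Substituting $\overline{F}_\omega=1-F_\omega$ into (\ref{path-loss-distribution-actualBU}) gives the compact identity $\chi_{\mathrm{BU}}(p_n,r)=\overline{F}_{\mathrm{RN}}(r)F_{\mathrm{BS_n}}(r)+\overline{F}_{\mathrm{BS_n}}(r)=1-F_{\mathrm{BS_n}}(r)F_{\mathrm{RN}}(r)$. Since each $F_\omega$ is nondecreasing in $r$, $\chi_{\mathrm{BU}}$ is nonincreasing, so differentiating at $r_0$ gives
\[
  \widetilde{\chi}_{\mathrm{BU}}(p_n,\beta)= -\bigl[\tilde{F}_{\mathrm{BS_n}}(r_0)F_{\mathrm{RN}}(r_0)+F_{\mathrm{BS_n}}(r_0)\tilde{F}_{\mathrm{RN}}(r_0)\bigr]\le 0 ,
\]
and $\bigl|\widetilde{\chi}_{\mathrm{BU}}(p_n,\beta)\bigr|$ equals the bracketed quantity. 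Everything now rests on how the four nonnegative functions $F_{\mathrm{BS_n}}(r_0),F_{\mathrm{RN}}(r_0),\tilde{F}_{\mathrm{BS_n}}(r_0),\tilde{F}_{\mathrm{RN}}(r_0)$ move in $\beta$ and in $p_n$.

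For the $\beta$-dependence I would argue: (i) $\partial_\beta\Lambda_\omega(r_0)=-2\pi\lambda_\omega\int_0^{c_\omega}v^2\mathrm{e}^{-\beta v}\,\mathrm{d}v<0$ with $c_\omega\triangleq(r_0\overline{P}_\omega)^{1/\alpha_{\mathrm{L}}}$, so $F_\omega(r_0)=1-\mathrm{e}^{-\Lambda_\omega(r_0)}$ is strictly decreasing in $\beta$; (ii) since $\tilde{F}_\omega(r_0)=\Lambda_\omega'(r_0)\mathrm{e}^{-\Lambda_\omega(r_0)}$ with $\Lambda_\omega'(r_0)=\tfrac{2\pi\lambda_\omega}{\alpha_{\mathrm{L}}r_0}(r_0\overline{P}_\omega)^{2/\alpha_{\mathrm{L}}}\mathrm{e}^{-\beta c_\omega}$, one gets $\partial_\beta\ln\tilde{F}_\omega(r_0)=-c_\omega+2\pi\lambda_\omega\int_0^{c_\omega}v^2\mathrm{e}^{-\beta v}\,\mathrm{d}v$, and the bound $\int_0^{c_\omega}v^2\mathrm{e}^{-\beta v}\mathrm{d}v\le c_\omega\int_0^{c_\omega}v\,\mathrm{e}^{-\beta v}\mathrm{d}v=c_\omega\Lambda_\omega(r_0)/(2\pi\lambda_\omega)$ makes the right-hand side $\le c_\omega(\Lambda_\omega(r_0)-1)\le 0$ whenever $\Lambda_\omega(r_0)\le 1$, which holds under the moderate-density operating conditions used in the paper since $\Lambda_\omega(r_0)\le\pi\lambda_\omega(r_0\overline{P}_\omega)^{2/\alpha_{\mathrm{L}}}$. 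Hence all four factors are nonincreasing in $\beta$; because a finite sum of products of nonnegative nonincreasing functions is nonincreasing, $\bigl|\widetilde{\chi}_{\mathrm{BU}}\bigr|$ decreases monotonically with $\beta$.

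For the $p_n$-dependence, only $F_{\mathrm{BS_n}}$ and $\tilde{F}_{\mathrm{BS_n}}$ vary (through $\lambda_{\mathrm{BS_n}}=p_n\lambda_{\mathrm{BS}}$): writing $\Lambda_{\mathrm{BS_n}}(r_0)=p_n A(r_0)$ with $A(r_0)>0$ free of $p_n$, one has $\partial_{p_n}F_{\mathrm{BS_n}}(r_0)=A(r_0)\mathrm{e}^{-p_n A(r_0)}>0$ and $\partial_{p_n}\tilde{F}_{\mathrm{BS_n}}(r_0)$ carrying the sign of $1-p_n A(r_0)=1-\Lambda_{\mathrm{BS_n}}(r_0)\ge 0$ in the same regime; since $F_{\mathrm{RN}}(r_0),\tilde{F}_{\mathrm{RN}}(r_0)$ are $p_n$-free and nonnegative, $\bigl|\widetilde{\chi}_{\mathrm{BU}}\bigr|$ increases monotonically with $p_n$. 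For the limit I would fix $r>0$ and let $\beta\to\infty$: then $\vartheta_{\mathrm{L}}=\mathrm{e}^{-\beta(r\overline{P}_\omega)^{1/\alpha_{\mathrm{L}}}}(1+\beta(r\overline{P}_\omega)^{1/\alpha_{\mathrm{L}}})\to 0$ and the prefactor $2\pi\lambda_\omega/\beta^2\to 0$, so $\Lambda_\omega(r)\to 0$ and $F_\omega(r)\to 0$ pointwise; hence $\chi_{\mathrm{BU}}(p_n,r)=1-F_{\mathrm{BS_n}}(r)F_{\mathrm{RN}}(r)\to 1$ for every $r$. Since $0\le\chi_{\mathrm{BU}}\le 1$ and $f_{\mathrm{BS_n}}$ is a probability density (hence integrable with unit mass), dominated convergence gives $\int_0^{\infty}f_{\mathrm{BS_n}}(r)\chi_{\mathrm{BU}}(p_n,r)\,\mathrm{d}r\to 1$, so by (\ref{distribution-considering-relay-BU}), $\tilde{f}_{\mathrm{BU}}(p_n,r)\to f_{\mathrm{BS_n}}(r)$.

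The main obstacle is step (ii) and its $p_n$-analogue: showing that the density factor $\tilde{F}_\omega(r_0)$ --- a product of a decaying exponential $\mathrm{e}^{-\beta c_\omega}$ and a growing factor $\mathrm{e}^{-\Lambda_\omega(r_0)}$ (resp.\ of a $p_n$-linear factor and a decaying exponential) --- is genuinely monotone, and that the possibly wrong-signed term in the $p_n$-derivative cannot overturn the sum. Both reduce to the single scalar inequality $\Lambda_\omega(r_0)\le 1$, i.e.\ $2\pi\lambda_\omega\int_0^{c_\omega}v^2\mathrm{e}^{-\beta v}\,\mathrm{d}v\le c_\omega$; the ``no NLOS link'' hypothesis is exactly what collapses $\Lambda_\omega$ to a single clean integral making this transparent, and this bound is where the moderate-density/noise-limited setting of the paper is used. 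Everything else --- the collapse of $\chi_{\mathrm{BU}}$, the sign of $\widetilde{\chi}_{\mathrm{BU}}$, and the dominated-convergence step for the limit --- is routine differentiation and sign-tracking.
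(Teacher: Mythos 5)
Your proposal follows essentially the same route as the paper's proof: both rewrite $\chi_{\mathrm{BU}}(p_n,r)=\overline{F}_{\mathrm{RN}}(r)F_{\mathrm{BS_n}}(r)+\overline{F}_{\mathrm{BS_n}}(r)=1-F_{\mathrm{BS_n}}(r)F_{\mathrm{RN}}(r)$, both reduce the sign and monotonicity claims to how the four factors $F_{\mathrm{BS_n}},F_{\mathrm{RN}},\tilde{F}_{\mathrm{BS_n}},\tilde{F}_{\mathrm{RN}}$ at $r_0$ move with $\beta$ and $p_n$, and both obtain the $\beta\to\infty$ limit from degeneracy of the weighting in (\ref{distribution-considering-relay-BU}). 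Where you diverge is at the one genuinely delicate step, and there your treatment is the more careful one. The paper computes only $\partial\vartheta_{\mathrm{L}}/\partial\beta<0$ and from this asserts that $\tilde{F}_{\mathrm{BS_n}}(r)$ decreases in $\beta$ and increases in $p_n$; but, as you note, $\tilde{F}_{\omega}(r_0)=\Lambda'_{\omega}(r_0)\,\mathrm{e}^{-\Lambda_{\omega}(r_0)}$ is a product of a factor that decreases in $\beta$ and a factor that increases in $\beta$ (since $\Lambda_{\omega}$ itself decreases in $\beta$), and likewise $\partial_{p_n}\tilde{F}_{\mathrm{BS_n}}(r_0)$ carries the sign of $1-\Lambda_{\mathrm{BS_n}}(r_0)$, so the paper's one-line justification does not by itself settle the sign. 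You close this with the explicit sufficient condition $\Lambda_{\omega}(r_0)\le 1$ via the bound $\int_0^{c_{\omega}}v^2\mathrm{e}^{-\beta v}\,\mathrm{d}v\le c_{\omega}\Lambda_{\omega}(r_0)/(2\pi\lambda_{\omega})$, which is correct and makes the argument honest, but it also means your version proves a conditional statement: the corollary as stated has no density/distance hypothesis, and without one the monotonicity of the $\tilde{F}_{\omega}(r_0)$ factors (hence of $|\widetilde{\chi}_{\mathrm{BU}}|$) is not guaranteed. You should flag that this is a gap in the corollary itself rather than merely in your proof.

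On the limit, your argument is cleaner and fully rigorous where the paper's is heuristic: you show $\Lambda_{\omega}(r)\to 0$, hence $F_{\omega}(r)\to 0$ and $\chi_{\mathrm{BU}}(p_n,r)\to 1$ pointwise, and conclude by dominated convergence in (\ref{distribution-considering-relay-BU}); the paper instead argues that $|\widetilde{\chi}_{\ell}|\to 0$ at the single point $r_0$ and infers that $\chi_{\ell}$ "should be a constant," which is a weaker inference. Overall: same decomposition and same key quantities as the paper, with a more defensible (if conditionally stated) handling of the density-factor monotonicity and of the limit.
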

\begin{proof}
We first derive the first-order derivative of $F_{\mathrm{BS_n}}(r)$  without considering the NLOS transmission by omitting the second term in (\ref{mapping-theorem}) , which is given as
%\begin{small}
%\begin{align}
%   \frac{\partial \, \chi_{\mathrm{BU}} (p_n, r, \beta)} {\partial \,  r} = & - \mathrm{exp}  \bigg( - \lambda_\omega \Big(  \pi  (r \mathrm{\overline{P}_{\omega}})^{\frac{2}{\alpha_{\mathrm{N}}}}  - \frac{2 \pi }{\beta^2} \left( \vartheta_{\mathrm{L}} -\vartheta_{\mathrm{N}}   \right)  \Big) \bigg) \left( - \frac{2}{\alpha_{\mathrm{N}}} \lambda_\omega \pi \overline{P}_{\omega}^{\frac{2}{\alpha_{\mathrm{N}}}} r^{\frac{2}{\alpha_{\mathrm{N}}} - 1} + 2 \pi \lambda_\omega \left( \tilde{\vartheta}_{\mathrm{L}} -\tilde{\vartheta}_{\mathrm{N}}   \right)   \right)
%\end{align}
%\end{small}
\begin{small}
\begin{align}
   %\tilde{F}_{\mathrm{BS}} (r)  = \frac{\partial \, F_{\mathrm{BS_n}}(r)} {\partial \,  r} = &  \mathrm{exp}  \bigg(-  \frac{2 \pi \lambda_\omega (1 - \vartheta_{\mathrm{L}})}{\beta^2}   \bigg) \left(  + 2 \pi \lambda_\omega  \frac{1}{\alpha_{\mathrm{L}}} \overline{P}_{\omega}^{\frac{2}{\alpha_{\mathrm{L}}}} \mathrm{e}^{- \beta (r \overline{P}_{\omega} )^{\frac{1}{\alpha_{\mathrm{L}}}}} r^{\frac{2}{\alpha_{\mathrm{L}}} - 1}\right)
   \tilde{F}_{\mathrm{BS_n}} (r)  = \frac{\mathrm{d} F_{\mathrm{BS_n}}(r)} {\mathrm{d}  r} = &    \frac{2 \pi \lambda_\omega }{\alpha_{\mathrm{L}}} \overline{P}_{\omega}^{\frac{2}{\alpha_{\mathrm{L}}}} r^{\frac{2}{\alpha_{\mathrm{L}}} - 1} \exp \left(\frac{2 \pi \lambda_\omega  \vartheta_{\mathrm{L}}}{\beta^2} - \beta (r \overline{P}_{\omega} )^{\frac{1}{\alpha_{\mathrm{L}}}}\right),
\end{align}
\end{small}
where $\vartheta_{\mathrm{L}} = \mathrm{e}^{-\beta(r \mathrm{\overline{P}_{\omega}})^{\frac{1}{\alpha_{\mathrm{L}}}}}\left(1+\beta(r \mathrm{\overline{P}_{\omega}})^{\frac{1}{\alpha_{\mathrm{L}}}}\right)$, then the partial derivative of $\vartheta $ with respect to $\beta$ can be calculated as
$\frac{\partial \vartheta_{\mathrm{L}}}{\partial \beta} = - \beta \mathrm{e}^{- \beta (r \overline{P}_{\omega} )^{\frac{1}{\alpha_{\mathrm{L}}}}  } (r \overline{P}_{\omega} )^{\frac{2}{\alpha_{\mathrm{L}}}} < 0$. Thus $\tilde{F}_{\mathrm{BS_n}} (r)$ is monotonically decreasing with $\beta$, and monotonically increasing with $p_n$. Applying a similar method and we can obtain the same conclusion for $\tilde{F}_{\mathrm{RN}} (r)$. Then according to (\ref{path-loss-distribution-actualBU}), it can be derived that  $\tilde{\chi}_{\ell} (p_n,  \beta) = \tilde{\overline{F}}_{\mathrm{RN}}(r)  F_{\mathrm{BS_n}}(r) + \overline{F}_{\mathrm{RN}}(r) \tilde{F}_{\mathrm{BS_n}}(r)  + \tilde{\overline{F}}_{\mathrm{BS_n}}(r) | _{r = r_0} $ , where $ \tilde{\overline{F}}_{\mathrm{RN}}(r), \tilde{\overline{F}}_{\mathrm{BS_n}}(r) <0$, $0 \leq F_{\mathrm{BS_n}}(r), \overline{F}_{\mathrm{RN}}(r) \leq 1$, and $\tilde{\overline{F}}_{\mathrm{BS_n}}(r) = - \tilde{F}_{\mathrm{BS_n}}(r)$. Thus $\tilde{\chi}_{\mathrm{BU}}(p_n, \beta) < 0$ and  $| \tilde{\chi}_{\mathrm{BU}}(p_n, \beta) |$ is monotonically decreasing with $\beta$, and monotonically increasing with $p_n$.
When $\beta \rightarrow \infty $, it can be inferred that $|\tilde{\chi}_{\ell} (p_n,  \beta)| \rightarrow 0$, then $|\chi_{\ell} (p_n,  \beta)| $ should be a constant. According to the weighted PDFs in (\ref{distribution-considering-relay-BU}), $\lim_{\beta \rightarrow \infty } \tilde{f}_{\mathrm{BU}}(p_n, r) =  f_{\mathrm{BS_n}}(r)$ holds.
\end{proof}
Corollary \ref{limit-prop} shows that the absolute value of gradient of $\chi_{\mathrm{BU}}$ in the direction of $r$ at $r_0$ decreases monotonically with $\beta$. This implies that for $\beta \rightarrow 0$, the average distance between one-hop communicating nodes is minimized, which maximizes the advantage of introducing relays in mmWave networks, i.e., more communicating nodes can transmission data through two-hop links. While for $\beta \rightarrow \infty$, the PDFs of distance between communicating nodes will remain unchanged, i.e., $\tilde{f}_{\mathrm{BU}}(p_n, r) =  f_{\mathrm{BS_n}}(r)$, which implies that deploying relay nodes cannot improve the system performance in this situation. This reflects a trade-off between mmWave frequency and the effect of relays in mmWave networks. Adopting higher mmWave frequency can obtain larger bandwidth, but at the expense of deterioration of mitigating blockage effects by relays.
}

\textcolor{blue}{Substituting (\ref{path-loss-distribution-actualBU})--(\ref{path-loss-distribution-actualRU}) into (\ref{objective-function1}), problem \textbf{P1} can be transformed into a caching placement optimization problem as follows, }
\textcolor{blue}{
\begin{subequations}
\begin{align}
\textcolor{blue}{\textbf{P2: } \max_{\textbf{p}} } \label{objective-function}  &\quad \textcolor{blue}{\mathcal{P}_{\mathrm{s}}  }\\
\textcolor{blue}{ \text{s.t.}   } &\quad \textcolor{blue}{ \sum\nolimits_{n=1}^{F} p_n \leq C, } \label{constraint:cachesize-constraint}  \\
&\quad \textcolor{blue}{0 \leq p_n \leq 1 ,\quad \forall n\in \mathcal{F}. }  \label{constraint:probability-constraint}
\end{align}
\end{subequations}}
\vspace{-10mm}
\textcolor{blue}{
\subsection{Derivation of SBOP   }}
In the following, we aim to derive the expression of the conditional SBOP $\mathcal{P}_{\mathrm{s}}^{\mathrm{(1hop)}}$ and $\mathcal{P}_{\mathrm{s}}^{\mathrm{(2hop)}}$.

%The data transmission via the one-hop link is considered to be successful if the data rate of the BS$_0$-UE$_0$ link is greater than the target data rate $\tau_n$. Then the SBOP when UE$_0$ is served via the one-hop link can be defined as,
%\begin{equation}\label{def-PsBS}
%  \mathcal{P}_{\mathrm{s}}^{\mathrm{(1hop)}}(\tau_n ) = \mathbb{P} \left[\mathcal{R}_n^{\mathrm{BU}} > \tau_n \right] = \mathbb{P} \left[B \log_2 (1 + \text{SINR}_n^{\mathrm{BU}}) > \tau_n \right] .
%\end{equation}

\begin{proposition}\label{prop-PBS}
The SBOP in the relay-assisted mmWave network when UE$_0$ requests the $n$-th file and is served via a one-hop link is given by

\vspace{-4mm}
\textcolor{blue}{
\begin{small}
\begin{align}\label{prop-equation-PBS}
  \mathcal{P}_{\mathrm{s}}^{\mathrm{(1hop)}}(p_n, \tau_n ) \approx & \ \sum_{i_1 = 1}^{q_1} w_{i_1} \mathrm{e}^{r_{i_1}}  W^{\mathrm{BS}}(p_n, \tau_n, r_{i_1})  \mathcal{L}_{\mathcal{I}_n^{\mathrm{BU}} }(p_n, r_{i_1})        \tilde{f}_{\mathrm{BU}}(p_n, r_{i_1}) ,
\end{align}
\end{small}
$\!\!\!$where  $w_{i_1} = \frac{r_{i_1}} {(q_1 + 1)^2 [L_{q_1 + 1}(r_{i_1})]^2 }$, $r_{i_1}$ is the $i_1$-th zero of $L_{q_1}(r)$, $L_{q_1}(r)$ denotes the Laguerre polynomials, and $q_1$ is a  parameter balancing the accuracy and complexity.  $W^{\omega}(p_n, \tau_n, r) = \sum_{k \in \{ \mathrm{L,N}\}} \rho_k(r) \sum_{u=1}^{N_k} (-1)^{u+1} \binom{N_k}{u}  \mathrm{e}^{ -\frac{u \eta_k \nu_n r^{\alpha_k}  \sigma^2 }{P_{\omega} G_{0}}}   $,  $G_0 = MM$,  $\eta_k = N_k (N_k !)^{-\frac{1}{N_k}}$,     and
\begin{small}
\begin{align}\label{Laplace}
  &  \mathcal{L}_{\mathcal{I}_n^{\mathrm{BU}} }(p_n, r)   =  \prod_{i \in \{ \mathrm{L,N}\}} \prod_{G} Q_i^{\mathrm{BS}}(p_n, r)  Q_i^{\mathrm{BS}}(\overline{p}_n, 0)  Q_i^{\mathrm{RN}}(1, 0),
\end{align}
\end{small}
\begin{small}
\begin{align}\label{Q-r-def}
   Q_i^{\omega}(p_n, r)  & = \mathrm{exp} \left[ -2 \pi p_n \lambda_{\omega} p_G \sum_{u=1}^{N_i} \binom{N_i}{u} \frac{r^{-\frac{1}{\alpha_i} \left(u - \frac{2}{\alpha_i}\right) }} {u \alpha_{i} - 2}  {_2}\textit{F}_{1}  \left(N_i, u - \frac{2}{\alpha_i}; 1 + u - \frac{2}{\alpha_i}; -s   r^{-\frac{1}{\alpha_i}} \right)    \right],
\end{align}
\end{small}
\begin{small}
\begin{align}\label{Q-0-def}
   Q_i^{\omega}(p_n, 0)  & = \mathrm{exp} \left[ -\frac{2}{\alpha_i} \pi p_n \lambda_{\omega} p_G \sum_{u=1}^{N_i} \binom{N_i}{u} s^{-(u - \frac{2}{\alpha_i})}  \textit{B}\left(u - \frac{2}{\alpha_i}, N_i - u + \frac{2}{\alpha_i}\right)     \right],
\end{align}
\end{small}
$\!\!$where $s = \frac{u \eta_k \nu_n  P_{\omega} G} {r^{-\alpha_k} N_i}  $, $G \in \{ MM, Mm, mm \} $ with probability $p_G$, $\overline{p}_n = 1 - p_n$, ${_2}\textit{F}_{1} (\cdot, \cdot; \cdot; \cdot) $ is the Gauss hypergeometric function, and $B(\cdot, \cdot)$ is the Beta function.}
\end{proposition}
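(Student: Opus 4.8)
The plan is to start from the integral representation (\ref{SBOP-expansion-1hop}), evaluate the conditional coverage probability $\mathbb{P}[\text{SINR}_n^{\mathrm{BU}}>\nu_n\mid r]$ down to a Laplace transform of the aggregate interference, and then discretize the remaining one-dimensional integral over $r$ by Gauss--Laguerre quadrature. First I would condition on the LOS/NLOS state of the serving BU link, which occur with probabilities $\rho_{\mathrm{L}}(r)$ and $\rho_{\mathrm{N}}(r)$ from (\ref{link-state-def}); in state $k\in\{\mathrm{L},\mathrm{N}\}$ the serving channel gain $|h_n|^2$ is a normalized Gamma variable of integer parameter $N_k$ and the path-loss exponent is $\alpha_k$. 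With perfect beam alignment $G_{\mathrm{BS}}=G_0=MM$, the event $\{\text{SINR}_n^{\mathrm{BU}}>\nu_n\}$ is $\{|h_n|^2>\nu_n r^{\alpha_k}(\mathcal{I}_n^{\mathrm{BU}}+\sigma^2)/(P_{\mathrm{BS}}G_0)\}$ (the path-loss constant being folded into the normalization), and the standard tight estimate $\mathbb{P}[g>x]\approx 1-(1-e^{-\eta_k x})^{N_k}$ for a normalized Gamma variable with $\eta_k=N_k(N_k!)^{-1/N_k}$, expanded by the binomial theorem, gives $\sum_{u=1}^{N_k}(-1)^{u+1}\binom{N_k}{u}e^{-u\eta_k x}$. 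Separating the deterministic noise part $e^{-u\eta_k\nu_n r^{\alpha_k}\sigma^2/(P_{\mathrm{BS}}G_0)}$ from the random part produces exactly the prefactor $W^{\mathrm{BS}}(p_n,\tau_n,r)$, leaving $\mathbb{E}[e^{-s\,\mathcal{I}_n^{\mathrm{BU}}}]=\mathcal{L}_{\mathcal{I}_n^{\mathrm{BU}}}(s)$ with $s$ proportional to $u\eta_k\nu_n r^{\alpha_k}$.

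Next I would compute this Laplace transform using the independence of $\Phi_{\mathrm{BS}}$ and $\Phi_{\mathrm{RN}}$ together with the thinning theorem. The interfering BSs split into those caching $c_n$ (density $p_n\lambda_{\mathrm{BS}}$, necessarily at biased distances larger than the serving one, hence beyond $r$ in the mapped domain of Lemma \ref{lemma1}) and those not caching $c_n$ (density $\overline{p}_n\lambda_{\mathrm{BS}}$, unconstrained, hence from $0$), while the interfering RNs (density $\lambda_{\mathrm{RN}}$) are also unconstrained. Since $\mathcal{I}_n^{\mathrm{BU}}=\mathcal{I}_n^{\mathrm{BU}}\{\Phi_{\mathrm{BS}}\}+\mathcal{I}_n^{\mathrm{BU}}\{\Phi_{\mathrm{RN}}\}$ is a sum over independent point processes, the Laplace transform factors as the product of the three corresponding PGFL terms, which is precisely (\ref{Laplace}): $Q_i^{\mathrm{BS}}(p_n,r)$ for the cache-hit interfering BSs, $Q_i^{\mathrm{BS}}(\overline{p}_n,0)$ for the cache-miss ones, and $Q_i^{\mathrm{RN}}(1,0)$ for the RNs, with the outer product over $i\in\{\mathrm{L},\mathrm{N}\}$ (interferer LOS/NLOS thinning with densities $\lambda\rho_{\mathrm{L}}(v),\lambda\rho_{\mathrm{N}}(v)$) and over the antenna patterns $G\in\{MM,Mm,mm\}$ carrying the probabilities $p_G$ from (\ref{Gx-definition}).

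For each factor I would average the interfering Gamma fading via its MGF, $\mathbb{E}[e^{-s P_\omega G|h|^2 v^{-\alpha_i}}]=(N_i/(N_i+sP_\omega G v^{-\alpha_i}))^{N_i}$, insert it into the PGFL $\exp(-2\pi\lambda p_G\int(1-\mathbb{E}[e^{-s\cdot}])v\,\mathrm{d}v)$, and expand $1-(\,\cdot\,)^{N_i}$ by the binomial theorem; the resulting integrals of the type $\int v^{a-1}(1+cv^{-\alpha_i})^{-N_i}\,\mathrm{d}v$ evaluate over $[0,\infty)$ to a Beta function, giving $Q_i^\omega(p_n,0)$ in (\ref{Q-0-def}), and over $[r,\infty)$ to a Gauss hypergeometric form $_2F_1$, giving $Q_i^\omega(p_n,r)$ in (\ref{Q-r-def}), after passing to the $r$-domain variable. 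Multiplying $W^{\mathrm{BS}}(p_n,\tau_n,r)\,\mathcal{L}_{\mathcal{I}_n^{\mathrm{BU}}}(p_n,r)$ by the weighted density $\tilde{f}_{\mathrm{BU}}(p_n,r)$ from (\ref{distribution-considering-relay-BU}) yields the integrand of (\ref{SBOP-expansion-1hop}); writing $\int_0^\infty G(r)\,\mathrm{d}r=\int_0^\infty e^{-r}\,(e^{r}G(r))\,\mathrm{d}r$ and applying the $q_1$-point Gauss--Laguerre rule with nodes $r_{i_1}$ the zeros of $L_{q_1}$ and weights $w_{i_1}=r_{i_1}/[(q_1+1)^2 L_{q_1+1}(r_{i_1})^2]$ produces the claimed finite sum, the $\approx$ absorbing both the Gamma-tail estimate and the quadrature truncation.

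The step I expect to be the main obstacle is the Laplace-transform bookkeeping: fixing the correct lower integration limit for each interferer class (this is where the spatial correlation from conditioning on the nearest cache-storing BS and from the bias coefficients of Lemma \ref{lemma1} enters), carrying the LOS/NLOS-dependent exponents $\alpha_{\mathrm{L}},\alpha_{\mathrm{N}}$ consistently through both the serving and the interfering links, and reducing the exponent integrals to the $_2F_1$ and Beta closed forms. I would also flag that, strictly speaking, $\mathcal{L}_{\mathcal{I}_n^{\mathrm{BU}}}$ depends on the summation indices $u$ and $k$ through $s$, so the displayed factorization in (\ref{prop-equation-PBS})--(\ref{Q-0-def}) is a compact rendering of a term-by-term identity inside the double sum hidden in $W^{\mathrm{BS}}$.
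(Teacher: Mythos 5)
Your proposal follows essentially the same route as the paper's proof in Appendix \ref{AppendixB}: conditioning on the LOS/NLOS state of the serving link, applying Alzer's gamma-tail approximation with a binomial expansion to isolate the noise factor $W^{\mathrm{BS}}$ and the interference Laplace transform, factoring that transform over the thinned independent sub-PPPs (cache-hit BSs beyond $r$, cache-miss BSs and RNs from $0$) via the PGFL and Nakagami MGF, reducing the resulting integrals to Beta and ${_2}F_1$ forms, and finishing with Gauss--Laguerre quadrature. Your closing remark that $\mathcal{L}_{\mathcal{I}_n^{\mathrm{BU}}}$ actually depends on the indices $u$ and $k$ through $s$, so the displayed product is a term-by-term identity inside the sums of $W^{\mathrm{BS}}$, is a correct and worthwhile clarification of the paper's compact notation.
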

\begin{proof}
%Due to the space limitations, the detailed proof is provided in \cite[Appendix A]{my-arxiv-version}.
Please refer to Appendix \ref{AppendixB}.
\end{proof}

Likewise, the data transmission via a two-hop link is considered to be successful if both the data rates of BS$_0$-RN$_0$ link and RN$_0$-UE$_0$ link are greater than the target data rate of the $n$-th file $\tau_n$. Using a similar way as the proof of Proposition \ref{prop-PBS}, we have the following proposition.
\textcolor{blue}{
\begin{proposition}\label{prop-PRN}
The SBOP in the relay-assisted mmWave network when UE$_0$ is served via a two-hop link aided by RN$_0$ is given by
\begin{small}
\begin{align}\label{prop-equation-PRN}
%  \mathcal{P}_{\mathrm{s}}^{\mathrm{(2hop)}}(\{\tau_n \}) \approx & \  \int_{0}^{\infty} \Bigg\{ \sum_{k \in \{ \mathrm{L,N}\}} Z_k^{\mathrm{RN}}(\nu_n, r) \prod_{i \in \{ \mathrm{L,N}\}} \prod_{G} Q_i^{\mathrm{RN}}(1, r) Q_i^{\mathrm{BS}}(1, 0) \Bigg\} \tilde{f}_{\mathrm{RU}}(r) \mathrm{d} r  \nonumber \\
%  &  \times \int_{0}^{\infty} \Bigg\{ \sum_{k \in \{ \mathrm{L,N}\}} Z_k^{\mathrm{BS}}(\nu_n,r) \prod_{i \in \{ \mathrm{L,N}\}} \prod_{G} Q_i^{\mathrm{BS}}(p_n, r)  Q_i^{\mathrm{BS}}(\overline{p}_n, 0)  \Bigg\} \tilde{f}_{\mathrm{BR}}(r) \mathrm{d} r. \\
& \ \mathcal{P}_{\mathrm{s}}^{\mathrm{(2hop)}}(\{\tau_n \})   \nonumber \\
  \approx & \ \sum_{i_1 = 1}^{q_1}  \sum_{j_1 = 1}^{q_1} w_{i_1} w_{j_1} \mathrm{e}^{r_{i_1} + r_{j_1}}  W^{\mathrm{BS}}(p_n, \tau_n, r_{i_1})    \mathcal{L}_{\mathcal{I}_n^{\mathrm{BR}} }(p_n, r_{i_1})      \tilde{f}_{\mathrm{BR}}(p_n, r_{i_1})         W^{\mathrm{RN}}(p_n, \tau_n, r_{j_1})     \mathcal{L}_{\mathcal{I}_n^{\mathrm{RU}} }(p_n, r_{j_1})     \tilde{f}_{\mathrm{RU}}(p_n, r_{j_1}),
\end{align}
\end{small}
\end{proposition}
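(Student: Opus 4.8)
The plan is to mirror the derivation of Proposition~\ref{prop-PBS}, exploiting the product form already established in (\ref{SBOP-expansion-2hop}). By step~(a) of (\ref{SBOP-expansion-2hop}), the independence of $\Phi_{\mathrm{BS}}$ and $\Phi_{\mathrm{RN}}$ lets us write $\mathcal{P}_{\mathrm{s}}^{\mathrm{(2hop)}}$ as the product of the BR-link success probability $\int_{r>0}\mathbb{P}[\mathrm{SINR}_n^{\mathrm{BR}}>\nu_n\mid r]\,\tilde{f}_{\mathrm{BR}}(p_n,r)\,\mathrm{d}r$ and the RU-link success probability $\int_{r>0}\mathbb{P}[\mathrm{SINR}_n^{\mathrm{RU}}>\nu_n\mid r]\,\tilde{f}_{\mathrm{RU}}(p_n,r)\,\mathrm{d}r$, with the weighted distance PDFs of (\ref{distribution-considering-relay-BR}) and (\ref{distribution-considering-relay-RU}). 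So it suffices to evaluate each conditional coverage probability and then apply Gauss--Laguerre quadrature twice, one index per link.

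For the BR link, conditioning on $|r_{\mathrm{BS}_0,\mathrm{RN}_0}|=r$, the fading $|h_n|^2$ is a normalized Gamma variable with integer parameter $N_k$, $k\in\{\mathrm{L,N}\}$, so I would apply the tight alternating bound on the Gamma CCDF, turning $\mathbb{P}[\mathrm{SINR}_n^{\mathrm{BR}}>\nu_n\mid r]$ into an alternating sum of Laplace transforms of the aggregate interference-plus-noise evaluated at $s\propto\nu_n r^{\alpha_k}/P_{\mathrm{BS}}$, averaged over the LOS/NLOS link state with weights $\rho_{\mathrm{L}}(r),\rho_{\mathrm{N}}(r)$; the noise term factors out as the exponential defining $W^{\mathrm{BS}}(p_n,\tau_n,r)$. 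For the interference Laplace transform $\mathcal{L}_{\mathcal{I}_n^{\mathrm{BR}}}(p_n,r)$ I would split $\Phi_{\mathrm{BS}}$ into the sub-PPP of density $p_n\lambda_{\mathrm{BS}}$ caching $c_n$ and the complementary sub-PPP of density $(1-p_n)\lambda_{\mathrm{BS}}$, together with $\Phi_{\mathrm{RN}}\setminus\{\mathrm{RN}_0\}$, each further thinned into LOS/NLOS tiers and into the three antenna-gain patterns $G\in\{MM,Mm,mm\}$ with probabilities $p_G$; applying the PGFL of each independent PPP and evaluating the radial integrals yields the $Q_i^{\mathrm{BS}}(p_n,r)$ factor with lower limit $r$ (since the UAR criterion (\ref{BSR0-rule}) forces any interfering $c_n$-caching BS to have inverse biased received power exceeding $r$) and the $Q_i^{\mathrm{BS}}(\overline{p}_n,0)$, $Q_i^{\mathrm{RN}}(1,0)$ factors with lower limit $0$, the integrals producing the Gauss hypergeometric and Beta functions exactly as in (\ref{Q-r-def})--(\ref{Q-0-def}).

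For the RU link I would repeat the argument with $\Phi_{\mathrm{RN}}$ as the serving tier and $|r_{\mathrm{RN}_0,\mathrm{UE}_0}|=r$, using $P_{\mathrm{RN}}$ and $G_{\mathrm{RN}}$; here the BS interference in (\ref{SINR-def3}) runs over the whole $\Phi_{\mathrm{BS}}$ with no caching-based thinning of the interferers, and the RN interference over $\Phi_{\mathrm{RN}}\setminus\{\mathrm{RN}_0\}$, which gives the analogue $\mathcal{L}_{\mathcal{I}_n^{\mathrm{RU}}}(p_n,r)$ and the factor $W^{\mathrm{RN}}(p_n,\tau_n,r)$. Finally I would approximate each single integral over $[0,\infty)$ by an $q_1$-point Gauss--Laguerre rule, $\int_0^{\infty}g(r)\,\mathrm{d}r\approx\sum_{i_1}w_{i_1}\mathrm{e}^{r_{i_1}}g(r_{i_1})$ with the nodes and weights of Proposition~\ref{prop-PBS}, and multiplying the two quadrature sums yields the double sum in (\ref{prop-equation-PRN}). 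The main obstacle will be the interference Laplace-transform bookkeeping for the RU link: correctly identifying which interfering point processes are range-restricted by the UAR criterion and which are not, and ensuring the substitutions that reduce the radial integrals to $_2F_1$ and Beta functions are applied consistently across LOS/NLOS tiers and antenna patterns; the independence invoked for the product form in (\ref{SBOP-expansion-2hop}) is the other point requiring care, but it has already been argued in the footnote.
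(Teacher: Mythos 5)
Your proposal follows essentially the same route as the paper: the paper's own justification of Proposition~\ref{prop-PRN} is a one-line appeal to the product decomposition in (\ref{SBOP-expansion-2hop}) together with the machinery of Appendix~\ref{AppendixB} (Alzer's bound, LOS/NLOS and antenna-gain thinning, PGFL evaluation) applied separately to the BR and RU links, each followed by a Gauss--Laguerre quadrature, which is exactly what you describe. The only cosmetic difference is that you retain a $Q_i^{\mathrm{RN}}(1,0)$ factor for the RN interference on the BR link, which the paper's stated $\mathcal{L}_{\mathcal{I}_n^{\mathrm{BR}}}$ in (\ref{Laplace-2hop}) omits even though the SINR definition (\ref{SINR-def2}) includes that interference term; your version is the more faithful one.
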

where
\begin{small}
\begin{align}\label{Laplace-2hop}
&  \mathcal{L}_{\mathcal{I}_n^{\mathrm{BR}} }(p_n, r) = \prod_{i \in \{ \mathrm{L,N}\}} \prod_{G} Q_i^{\mathrm{BS}}(p_n, r)  Q_i^{\mathrm{BS}}(\overline{p}_n, 0), \\
&  \mathcal{L}_{\mathcal{I}_n^{\mathrm{RU}} }(p_n, r)  =  \prod_{i \in \{ \mathrm{L,N}\}} \prod_{G} Q_i^{\mathrm{RN}}(1, r) Q_i^{\mathrm{BS}}(1, 0).
\end{align}
\end{small}
}
\textcolor{blue}{
The approximation in (\ref{prop-equation-PBS}) and (\ref{prop-equation-PRN}) adopts the Alzer's approximation \cite{On-some-inequalities}, which takes the complementary CDF of a gamma random variable as a weighted sum of the CDFs of exponential random variables.  Note that $Q_i^{\omega}(p_n, r)$  in  (\ref{prop-equation-PBS}) and (\ref{prop-equation-PRN}) represents the Laplace transform of the interference from all the other BSs or RNs. We can observe that the interference signal dynamically changes with the caching placement probability $p_n$. Thus, both the user association considering the spatial correlation of nodes and the dynamic network interference should be optimally controlled to obtain the optimal caching placement solution.}

%%%prove that SBOP  monotonically increases with respect to p_n.
%Remark: The analytical expressions in Proposition \ref{prop-PBS} and \ref{prop-PRN} indicate that

On the other hand, if we assume that mmWave transmissions are noise-limited \cite{Coverage-in-heterogeneous, Tractable-model-for-rate}, the expression of SBOP will become more tractable, which can also provide some design insights into the caching placement strategy.

%The expressions of conditional SBOP in Proposition \ref{prop-PBS} and \ref{prop-PRN} are not in the closed-form with respect to the caching probability variables $\{ p_n \}$, which is not conducive to the caching placement optimization. Nevertheless, assuming that mmWave transmissions are noise-limited , we can derive closed-form expressions of conditional SBOP as follows.
\begin{proposition}\label{prop4}
The SBOP in the relay-assisted mmWave network can be transformed into a closed-form expression in the noise-limited scenario, which is given by
\begin{align}\label{SBOP-simple}
  \mathcal{P}_{\mathrm{s}}^{\mathrm{NL}} = \sum_{n=1}^{F}  K_n a_n \Big(1 - \exp \big( - p_n T_n \big) \Big),
\end{align}
where $K_n = p_{\mathrm{1hop}} +  \left( 1 - \exp \left(  - c_{\mathrm{RN}} \left(\xi_{\mathrm{RN}}\right)^{\kappa_{\mathrm{N}}} - Y_n\left(\xi_{\mathrm{RN}}\right) \right) \right) p_{\mathrm{2hop}}$,  $c_{\omega} =  \pi \lambda_{\omega} \kappa_{\mathrm{N}}  \frac{\Gamma(\kappa_{\mathrm{N}} + N_\mathrm{L})}{(N_\mathrm{L})^{\kappa_{\mathrm{N}}} \Gamma(N_\mathrm{L})}$, $\kappa_{\mathrm{N}} = \frac{2}{\alpha_{\mathrm{N}}}$, $\Gamma(\cdot)$ is the gamma function, $\xi_{\omega} = \frac{P_{\omega} G_{\omega} }{\sigma^2 \nu_n}$, $T_n = c_{\mathrm{BS}} \left(\xi_{\mathrm{BS}}\right)^{\kappa_{\mathrm{N}}} + Y_n (\xi_{\mathrm{BS}})$, and
\begin{equation}\label{Yj-def}
  Y_n(\xi_\omega) = \sum_{k \in \{ \mathrm{L, N} \}}  \frac{2\pi (N_k)^{N_k} \lambda_{\omega}}{\alpha_k \Gamma(N_k)}   \int_0^{\infty} \int_0^{\xi_\omega} \psi^{(2/\alpha_k+N_k-1)} \mathrm{e}^{\big(-\frac{N_k}{\hat{\xi}}\psi - \beta \psi^{\frac{1}{\alpha_k}} \big)} \big/ \hat{\xi}^{(N_k+1)}  \mathrm{d} \hat{\xi} \mathrm{d} \psi.
\end{equation}
\end{proposition}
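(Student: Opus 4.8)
The plan is to specialise the coverage analysis of Propositions~\ref{prop-PBS} and~\ref{prop-PRN} to the noise‑limited regime and then evaluate the resulting radial integrals in closed form. In the noise‑limited scenario every interference sum in (\ref{SINR-def1}), (\ref{SINR-def2}), (\ref{SINR-def3}) is dropped, so $\mathrm{SINR}^{\ell}_n\to\mathrm{SNR}^{\ell}_n$; equivalently, all Laplace‑transform factors $\mathcal{L}_{\mathcal I_n^{\ell}}$ (hence every $Q_i^{\omega}$) in (\ref{prop-equation-PBS})--(\ref{prop-equation-PRN}) become $1$. It then suffices to obtain closed forms for $\mathcal{P}_{\mathrm{s}}^{(\mathrm{1hop})}$ and $\mathcal{P}_{\mathrm{s}}^{(\mathrm{2hop})}$ and assemble them via (\ref{SCDP-def}). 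The structural observation that drives the $1-\exp(\cdot)$ shape in (\ref{SBOP-simple}) is that, in the noise‑limited regime, the cache‑hit‑and‑rate event on each link is the event that \emph{at least one} node of the relevant thinned point process delivers the target rate: for the one‑hop and BR links that process is $\Phi_{\mathrm{BS}_n}$ (every candidate BS already caches $c_n$ by (\ref{BS0-rule}), (\ref{BSR0-rule})), and for the RU link it is $\Phi_{\mathrm{RN}}$. Since ``a node at distance $r$ delivers rate $\ge\tau_n$'' is, conditioned on the LOS/NLOS state (probability $\rho_k(r)$ from (\ref{link-state-def})), an independent mark with probability $\mathbb{P}[\,|h_k|^2>\nu_n\sigma^2 r^{\alpha_k}/(\gamma P_\omega G_\omega)\,]$, the surviving ``rate‑covering'' nodes form a non‑homogeneous PPP and the desired probability is $1-e^{-\mu}$, $\mu$ being its mean measure.

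For the one‑hop term I would compute, via the thinning theorem applied to the LOS/NLOS split of $\Phi_{\mathrm{BS}_n}$,
\[
\mu^{\mathrm{BU}}_n \;=\; p_n\lambda_{\mathrm{BS}}\sum_{k\in\{\mathrm L,\mathrm N\}} 2\pi\!\int_0^\infty\! r\,\rho_k(r)\,\mathbb{P}\!\Big[\,|h_k|^2>\tfrac{\nu_n\sigma^2 r^{\alpha_k}}{\gamma P_{\mathrm{BS}}G_{\mathrm{BS}}}\,\Big]\,\mathrm{d}r .
\]
Substituting $\psi=r^{\alpha_k}$, rewriting the Gamma complementary CDF $\mathbb{P}[|h_k|^2>x]$ as an integral of its density (which introduces the auxiliary variable $\hat\xi$), and using $\rho_{\mathrm L}(r)=e^{-\beta r}$, $\rho_{\mathrm N}(r)=1-e^{-\beta r}$, the integral splits into a blockage‑free piece that evaluates to a Gamma‑function closed form — which, after collecting constants in $\xi_{\mathrm{BS}}=P_{\mathrm{BS}}G_{\mathrm{BS}}/(\sigma^2\nu_n)$, equals $p_n\,c_{\mathrm{BS}}(\xi_{\mathrm{BS}})^{\kappa_{\mathrm N}}$ — and a residual blockage‑dependent double integral that is exactly $p_n\,Y_n(\xi_{\mathrm{BS}})$ of (\ref{Yj-def}). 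Hence $\mathcal{P}_{\mathrm{s}}^{(\mathrm{1hop})}(p_n,\tau_n)=1-e^{-p_nT_n}$ with $T_n=c_{\mathrm{BS}}(\xi_{\mathrm{BS}})^{\kappa_{\mathrm N}}+Y_n(\xi_{\mathrm{BS}})$. The BR link is the same computation with UE$_0$ replaced by the a.s.\ well‑defined (by (\ref{RN0-rule})) point RN$_0$; by the independence of $\Phi_{\mathrm{BS}}$ and $\Phi_{\mathrm{RN}}$ and Slivnyak's theorem the same value $1-e^{-p_nT_n}$ results. The RU link repeats the computation with $\Phi_{\mathrm{RN}}$ (density $\lambda_{\mathrm{RN}}$, no $p_n$‑thinning) and $\xi_{\mathrm{RN}}$, giving $1-\exp(-c_{\mathrm{RN}}(\xi_{\mathrm{RN}})^{\kappa_{\mathrm N}}-Y_n(\xi_{\mathrm{RN}}))$. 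By step $(a)$ of (\ref{SBOP-expansion-2hop}), $\mathcal{P}_{\mathrm{s}}^{(\mathrm{2hop})}$ is the product of the BR and RU values; inserting both conditional SBOPs into (\ref{SCDP-def}) and factoring the common $1-e^{-p_nT_n}$ yields (\ref{SBOP-simple}) with $K_n=p_{\mathrm{1hop}}+\big(1-\exp(-c_{\mathrm{RN}}(\xi_{\mathrm{RN}})^{\kappa_{\mathrm N}}-Y_n(\xi_{\mathrm{RN}}))\big)p_{\mathrm{2hop}}$.

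The main obstacle is the first structural step: justifying that, in the noise‑limited regime, the conditional SBOP of a link equals the void probability of its rate‑covering PPP rather than merely the coverage of the distinguished nearest‑biased serving node — the two coincide when association is by instantaneous received power, and in the noise‑limited setting this is the natural and standard simplification, but it must be stated as such, and it is precisely what pins down the $1-\exp(\cdot)$ form of (\ref{SBOP-simple}). The rest — the change of variables $\psi=r^{\alpha_k}$, the rewriting of the Gamma CCDF that manufactures the $\hat\xi$ integral, and the bookkeeping of the constants $c_\omega$, $\kappa_{\mathrm N}=2/\alpha_{\mathrm N}$ and $\xi_\omega$ — is routine but must be carried out carefully so that the blockage‑free piece lands exactly on $c_\omega(\xi_\omega)^{\kappa_{\mathrm N}}$ and the remainder on $Y_n(\xi_\omega)$ of (\ref{Yj-def}). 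One also needs to remark that the randomness of RN$_0$'s position does not disturb the BR‑ and RU‑link computations, since conditioning on RN$_0$ leaves $\Phi_{\mathrm{BS}}$ (respectively the remaining relays) Poisson by independence.
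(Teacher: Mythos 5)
Your proposal is correct and follows essentially the same route as the paper's Appendix C: both reduce the noise-limited conditional SBOP of each link to the void probability of the point process of ``rate-covering'' nodes, split the intensity by the LOS/NLOS blockage weights to produce the $c_{\omega}(\xi_{\omega})^{\kappa_{\mathrm N}}+Y_n(\xi_{\omega})$ decomposition, and assemble $K_n$ from the product form of the two-hop term. The only cosmetic difference is that you obtain the mean measure by independent marking/thinning of $\Phi_{\mathrm{BS}_n}$ (and $\Phi_{\mathrm{RN}}$), whereas the paper invokes the Mapping Theorem followed by the Displacement Theorem for the process $\Xi_n=\{r^{\alpha}/g\}$ --- the resulting integrals are identical, and the ``strongest-node vs.\ serving-node'' identification you flag is exactly the (implicit) simplification the paper also makes.
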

\begin{proof}
Please refer to Appendix \ref{AppendixC}.
%Due to the space limitations, the detailed proof is provided in \cite[Appendix B]{my-arxiv-version}.
\end{proof}
Note that (\ref{SBOP-simple}) is a convex function with respect to caching probability vector $\textbf{p}$.
\textcolor{blue}{In addition, the value of (\ref{SBOP-simple}) increases slower as $p_n$ increases, which can reflect a trade-off between caching the most popular files and file diversity of cached files. Specifically, caching popular files with larger probability $p_n$ will shorten the average distance $r$ between communicating nodes, thus beneficial for SBOP performance. However, this benefit will increase less as $\lambda_{\mathrm{RN}}$ increases, because it is more likely to be a LOS link for a shorter $r$, resulting in a better channel condition.
In this case, considering that the caching capacity is limited, reducing the caching probability of some files and increasing those of other files may result in a higher SBOP performance.}

%In other words, balancing the cached file diversity with caching the most popular files is vital for designing a caching placement algorithm. Besides, we can infer that the density of RNs will affect the caching placement decision. For example, the average distance of communicating nodes is shorter with a higher density of RNs, so increasing the cached file diversity can achieve higher performance gain rather than caching the most popular files in this case.

Next, we calculate the probabilities that UE$_0$ is served via a one-hop link or two-hop link. Based on the CDF of reverse maximum biased received power derived in Lemma 1 and the property of the quotient of random variables, the probability that UE$_0$ is served via a two-hop link can be calculated as,
\begin{small}
\begin{align}\label{two-hop-condition}
  p_{\mathrm{(2hop)}} = & \mathbb{P} \left\{ \frac {|r_{\mathrm{RU}}|^{\alpha}} {\overline{P}_{\mathrm{RN}}} <  \frac {|r_{\mathrm{BU}}|^{\alpha}} {\overline{P}_{\mathrm{BS}} } \bigcap \frac {|r_{\mathrm{BR}}|^{\alpha}}  {\overline{P}_{\mathrm{BS}} } <  \frac {|r_{\mathrm{BU}}|^{\alpha}}  {\overline{P}_{\mathrm{BS}} } \right\}  =  \frac{1}{2}  \int_{0}^{\infty} \tilde{F}_{\mathrm{BS}}(r) F_{\mathrm{RN}}(r) \mathrm{d} r.
\end{align}
\end{small}
$\!\!$   Otherwise, a typical UE is served via a one-hop link with probability $p_{\mathrm{(1hop)}} = 1 - p_{\mathrm{(2hop)}}$.

In the next section, two caching placement algorithms will be proposed to maximize SBOP.
%\vspace{2mm}

\section{Caching Placement Optimization}

The main difficulty to solve problem \textbf{P2} is that the expression of SBOP does not have a closed form and it is a non-convex optimization problem due to the existence of binomial term $(-1)^{m+1}$. Fortunately, we find that an algorithm based on monotonic optimization can be proposed to optimally solve this problem.

\subsection{Optimal caching placement algorithm based on monotonic optimization}
\textit{1) Monotonic optimization}:
First, we introduce some mathematical definitions that will be useful for monotonic optimization \cite{Monotonic, Monotonic-optimization-in}.

\textit{Definition 2 (Box)}: If $\mathbf{a} \preceq \mathbf{b}$, then box $[\mathbf{a}, \mathbf{b}]$ is the set of all $\mathbf{z} \in \mathbb{R}^{n}$ satisfying $\mathbf{a} \preceq \mathbf{z} \preceq \mathbf{b}$.

\textit{Definition 3 (Normal set)}: A set $\mathcal{Z} \subset \mathbb{R}_{+}^n $ ($\mathbb{R}_{+}$ denotes the set of non-negative real numbers) is normal if for any element $\mathbf{z} \in \mathcal{Z}$, all other elements $\mathbf{z}'$ such that $\mathbf{0}\preceq \mathbf{z}' \preceq\mathbf{z}$ are also in set $\mathcal{Z}$.

\textit{Definition 4 (Projection)}: Given any non-empty normal set $\mathcal{Z} \subset \mathbb{R}_{+}^n$ and any vector $\mathbf{z} \in \mathbb{R}_{+}^n\setminus \mathcal{Z}$, $\mathbf{\Upsilon}(\textbf{z})$ is the projection of $\mathbf{z}$ onto the boundary of $\mathcal{Z}$, i.e., $\mathbf{\Upsilon} (\mathbf{z}) = \gamma_0  \mathbf{z}$, where $\gamma_0 = \max \{ \gamma > 0 | \gamma \mathbf{z} \in \mathcal{Z} \}$.

\textit{Definition 5}: An optimization problem belongs to the class of monotonic optimization problems if it can be formulated in the following form,
\vspace{-4mm}
\begin{align}\label{monotonic-def}
& \max_{\mathbf{z}}    \quad \psi (\mathbf{z})  \\
 & \quad \text{s.t.} \quad  \mathbf{z} \in \mathcal{Z}, \nonumber
\end{align}
where $\psi (\mathbf{z})$ is an increasing function on $\mathbb{R}_{+}^n$ and set $\mathcal{Z} \subset \mathbb{R}_{+}^n$ is a non-empty normal set.

\begin{spacing}{1.35}
\textit{2) Optimal caching placement algorithm}:
To apply the monotonic optimization, we first show that the objective function in (\ref{objective-function}) is an increasing function with respect to $\mathbf{p}$. Notice that in (\ref{prop-equation-PBS}), the items containing the optimization variables $\mathbf{p}$ are $Q_i^{\mathrm{BS}}(p_n, r)$,  $ Q_i^{\mathrm{BS}}(\overline{p}_n, 0) $ and $\tilde{f}_{\mathrm{BU}}(r)$. Since the CDF of $f_{\mathrm{BU}}(r)$ is $1 - \mathrm{e}^{- p_n \lambda_{\mathrm{BS}} \pi r^2}$, and the weighting factor $\chi_{\mathrm{BU}}$ further reduces the average distance between communicating nodes, thus the CDF of   $\tilde{f}_{\mathrm{BU}}(r)$ is monotonically increasing with $\mathbf{p}$.  In addition, taking the derivative of $Q_i^{\mathrm{BS}}(p_n, r) Q_i^{\mathrm{BS}}(\overline{p}_n, 0) $, we have
\begin{equation}\label{derivative-QQ}
  \frac{\partial Q_i^{\mathrm{BS}}(p_n, r) Q_i^{\mathrm{BS}}(\overline{p}_n, 0)}{\partial p_n} =  U_i(r) \mathrm{exp} \left[  \left(  p_n U_i(r) - U_i(\infty) \right) \right]  > 0,
\end{equation}
where $U_i(r) = 2 \pi \lambda_{\mathrm{BS}} p_G \int_{0}^{r} \left( 1 - \frac{1}{(1 + s_k P_{\mathrm{BS}} G t^{-\alpha_i}/N_i)^{N_i}}\right) \rho_{i}(t) t \mathrm{d} t$. Thus, we can conclude that the conditional SBOP where UE$_0$ is served via a one-hop link $\mathcal{P}_{\mathrm{s}}^{\mathrm{BS}}(\{\tau_n \})$ is monotonically increasing with $p_n$. Similarly, we can get the conclusion that the conditional SBOP where UE$_0$ is served via a two-hop link $\mathcal{P}_{\mathrm{s}}^{\mathrm{RN}}(\{\tau_n \})$ is also monotonically increasing with $p_n$.

%\begin{figure*}
%\centering
%\includegraphics [width = 6in]{polyblock-illustration4}
%\caption{Illustration of the caching placement based on polyblock outer approximation for $F = 2$.}
%\label{fig:poly-illustration}
%\end{figure*}
%\begin{spacing}{1.5}
With the analysis above, the problem \textbf{P2} can be written as a standard monotonic optimization problem as shown in (\ref{monotonic-def}), which is expressed as follows,
\vspace{-3mm}
\begin{align}\label{convert2monotonic}
\textbf{P3: } & \max_{\mathbf{z}}    \quad \psi (\textbf{z})=\mathcal{P}_{\mathrm{s}}  \\
 & \quad \text{s.t.} \quad  \mathbf{z} \in \mathcal{Z}, \nonumber
\end{align}
where $\mathcal{Z}=\{\mathbf{z}|0\leq z_i \leq p_i,(\ref{constraint:cachesize-constraint}), (\ref{constraint:probability-constraint}),i=1,\cdots,F\}$. $\psi (\textbf{z})$ is an increasing function on $\mathbb{R}_{+}^F$ and feasible set $\mathcal{Z}$ is a non-empty normal set. We now can design the caching placement algorithm to solve the monotonic optimization problem in (\ref{convert2monotonic}) based on the polyblock outer approximation approach \cite{Monotonic}. A polyblock $\mathcal{B} \subset \mathbb{R}_{+}^F $ is the union of all the boxes $[\mathbf{0},\mathbf{z}]$, $\mathbf{z}\in \mathcal{Z}$, where $\mathcal{Z}$ is the vertex set of the polyblock. A polyblock is clearly a normal set. According to \cite{Monotonic}, since the objective function in (\ref{convert2monotonic}) is a monotonically increasing function, the optimal solution is always obtained at the boundary of the feasible set $\mathcal{Z}$. Therefore, the basic idea of polyblock outer approximation is to construct a sequence of polyblocks to approach the boundary of the feasible set with increasing accuracy. Then, the optimal solution exists at one vertex of a polyblock.

\renewcommand{\algorithmicrequire}{\textbf{Input:}}
\renewcommand{\algorithmicensure}{\textbf{Output:}}

\begin{algorithm}
\caption{\mbox{Caching placement algorithm based on polyblock outer approximation (CP-POA)}}
\label{alg:PA}
%\setstretch{1.2}
\LinesNumbered
%\begin{algorithmic}[1]
\KwIn {$\mathcal{P}_{\mathrm{s}}$, $F$, $C$, $\epsilon$;}
\KwOut {Optimal solution $\textbf{p}^{*}=(p^*_i)_{i\in \mathcal{F}}$;}
\textbf{initialization:} Set the iteration index $k=1$. Construct the polyblock $\mathcal{B}^{(1)}$ with vertex set $\mathcal{V}^{(1)} = \{ \textbf{v}^{(1)}\}$, where the entries of $\textbf{v}^{(1)}$ are set as $v_i^{(1)} = 1(i=1,\cdots,F)$;

\Repeat {$\frac{\left\|\mathbf{v}^{(k)}-\mathbf{\Upsilon}\left(\mathbf{v}^{(k)}\right)\right\|}{\left\|\mathbf{v}^{(k)}\right\|} \leq \epsilon$}
{
 $k = k + 1$;

 Compute $\mathbf{\Upsilon}\left(\mathbf{v}^{(k-1)}\right)$ based on Definition 3;

 Construct a smaller polyblock $\mathcal{B}^{(k)}$ with the vertex set $\mathcal{V}^{(k)}$ by replacing $\mathbf{v}^{(k-1)}$ in $\mathcal{V}^{(k-1)}$ with $F$ new vertices $\big\{\tilde{\mathbf{v}}_{1}^{(k-1)}, \cdots, \tilde{\mathbf{v}}_{F}^{(k-1)}\big\}$, where the $i$-th new vertex is obtained by
\begin{equation*}\label{alg-equ1}
  \tilde{\mathbf{v}}_{i}^{(k-1)}=\mathbf{v}^{(k-1)}+\left(\Upsilon_{i}\left(\mathbf{v}^{(k-1)}\right)-v_{i}^{(k-1)}\right) \mathbf{e}_{i}.
\end{equation*}

 Find $\textbf{v}^{(k)}$ as that vertex from $\mathcal{V}^{(k)}$ whose projection maximizes the objective function \\ of the problem (\ref{convert2monotonic}), i.e., $\mathbf{v}^{(k)}=\arg \max \left\{\psi(\mathbf{\Upsilon}(\textbf{v}) ) | \mathbf{v} \in \mathcal{V}^{(k)}\right\}$;
}

The optimal solution is obtained as $\boldsymbol{p}^{*} = \mathbf{\Upsilon} \left(\mathbf{v}^{(k)}\right)$.
\end{algorithm}

The construction method of polyblocks works as follows. First, we construct a polyblock $\mathcal{B}^{(1)}$ that encloses the feasible set $\mathcal{Z}$ with vertex set $\mathcal{V}^{(1)}$. $\mathcal{V}^{(1)}$ includes only one vertex $\textbf{v}^{(1)}$, where the entries of $\textbf{v}^{(1)}$ can be initialized as $v_i^{(1)}=1,(i=1,\cdots,F)$.
According to Definition 4, the projection of $\textbf{v}^{(1)}$ on the boundary of $\mathcal{Z}$ can be found by bisection search, denoted as $\mathbf{\Upsilon}(\textbf{v}^{(1)})$.
Then a smaller polyblock $\mathcal{B}^{(2)}$ is constructed based on $\mathcal{B}^{(1)}$ by replacing $\textbf{v}^{(1)}$ with $F$ new vertices $\tilde{\mathcal{V}}^{(1)}=\left\{\tilde{\mathbf{v}}_{1}^{(1)}, \cdots, \tilde{\mathbf{v}}_{F}^{(1)}\right\}$. Thus we can get new vertice set $\mathcal{V}^{(2)}=\left(\mathcal{V}^{(1)} \backslash \mathbf{v}^{(1)}\right) \cup \tilde{\mathcal{V}}^{(1)}$, which constitutes $\mathcal{B}^{(2)}$ that still encloses $\mathcal{Z}$.
Note that the new vertex $\tilde{\mathbf{v}}_{i}^{(1)}$ is generated by replacing the $i$-th entry of $\mathbf{v}^{(1)}$ with the $i$-th entry of $\mathbf{\Upsilon} (\mathbf{v}^{(1)})$, which is given by
\begin{equation}\label{replace-formula}
  \tilde{\mathbf{v}}_{i}^{(1)}=\mathbf{v}^{(1)}+\left(\Upsilon_{i}\left(\mathbf{v}^{(1)}\right)-v_{i}^{(1)}\right) \mathbf{e}_{i},
\end{equation}
where $\Upsilon_{i}\left(\mathbf{v}^{(1)}\right)$ is the $i$-th entry of $\mathbf{\Upsilon}(\textbf{v}^{(1)})$, and $\mathbf{e}_{i}$ is the $i$-th unit vector of $\mathbb{R}^{F\times 1}$ with a non-zero entry only at index $i$. Then, we choose the optimal vertex from $\mathcal{V}^{(2)}$ whose projection maximizes the objective function of the problem (\ref{convert2monotonic}), i.e., $\mathbf{v}^{(2)}=\arg \max \left\{\psi(\mathbf{\Upsilon}(\mathbf{v}) ) | \mathbf{v} \in \mathcal{V}^{(2)}\right\}$.
%To better understand the algorithm, we illustrate it for a case of $F=2$ in Fig. \ref{fig:poly-illustration}.
Repeating this procedure, we can construct a sequence of polyblocks that gradually outer approximate the feasible set, i.e.,
\begin{equation}\label{similar-procedure}
  \mathcal{B}^{(1)} \supset \mathcal{B}^{(2)} \supset \cdots \supset \mathcal{B}^{(k)} \supset \cdots \supset \mathcal{Z}.
\end{equation}

%\end{spacing}

The algorithm terminates when $\frac{\left\|\mathbf{v}^{(k)}-\mathbf{\Upsilon}\left(\mathbf{v}^{(k)}\right)\right\|}{\left\|\mathbf{v}^{(k)}\right\|} \leq \epsilon$, where $\epsilon\geq0$ is the given error tolerance specifying the accuracy of the approximation.
The algorithm is outlined in Algorithm \ref{alg:PA} (CP-POA). From the optimal solution $\boldsymbol{p}^{*}$ obtained with CP-POA, we can get the optimal caching placement.

According to \cite{Monotonic}, the convergence of CP-POA is guaranteed if the monotonic optimization problem like (\ref{monotonic-def}) satisfies the following three conditions: $\psi (\textbf{z})$ is upper semicontinuous, $\mathcal{Z}$ has a nonempty interior and $\mathcal{Z} \in \mathbb{R}^n_{++}$ ($\mathbb{R}_{++}$ denotes the set of positive real numbers). Considering problem (\ref{convert2monotonic}), we can see that $\psi (\textbf{z})$ is a continuous function, $\mathcal{Z}$ has a nonempty interior but $\mathcal{Z} \in \mathbb{R}^{F}_{+}$. To guarantee convergence, we can define $\mathbf{y} = \mathbf{z} + \mathbf{1} $ so that $\mathcal{Z}=\{\mathbf{y}|1\leq y_i \leq p_i+1,(\ref{constraint:cachesize-constraint}), (\ref{constraint:probability-constraint}),i=1,\cdots,F\}$ satisfying $\mathcal{Z} \in \mathbb{R}^{F}_{++}$. Correspondingly, the initialization in Algorithm \ref{alg:PA} is changed to $v_i^{(1)} = 2(i=1,\cdots,F)$.
Besides, from \cite{Monotonic, Monotonic-optimization-in}, we know that the optimal solution of monotonic optimization problems such as problem (\ref{convert2monotonic}) can be obtained via the CP-POA algorithm.
\textcolor{blue}{However, the computational complexity increases exponentially with the number of vertices $F$ generated in each iteration. In the following, we propose a suboptimal caching placement algorithm to strike a balance between computational complexity and system performance.}

%\begin{figure*}
%	\centering
%	\subfigure[]{\includegraphics[width = 2in] {polyblock1} }
%	\subfigure[]{\includegraphics[width = 2in] {polyblock2} }
%
%	\subfigure[]{\includegraphics[width = 2in] {polyblock3} }
%	\subfigure[]{\includegraphics[width = 2in] {polyblock4} }	
%	\caption{Illustration of the xxx based on polyblock outer approximation approach for $J = 2$.}
%	\label{fig:polyblock}
%\end{figure*}

%\vspace{-3mm}

\subsection{Suboptimal caching placement algorithm based on convex optimization}
The existing literature has shown that mmWave transmissions tend to be noise-limited and the interference is weak \cite{Coverage-in-heterogeneous, Tractable-model-for-rate}, this is due to the fact that in the presence of blockages, the signals received from unintentional sources are close to negligible. Hence, in this subsection we investigate caching placement in the relay-assisted mmWave network under the noise-limited scenario. In this case, we can derive the closed-form expression of SBOP presented in (\ref{SBOP-simple}), and then propose a suboptimal caching placement algorithm using convex optimization (CP-CO).

Under the noise-limited scenario, the objective function in the optimization problem \textbf{P2} can be transformed into a convex function, which is written as
\vspace{-3mm}
\begin{align}
\textbf{P4: } \max_{\bm{p}}   &\quad \mathcal{P}_{\mathrm{s}}^{\mathrm{NL}} = \sum_{n=1}^{F}  K_n a_n \Big(1 - \exp \big( - p_n T_n \big) \Big)   \\
&\quad \text{s.t.} \quad (\text{\ref{constraint:cachesize-constraint}}), (\text{\ref{constraint:probability-constraint}}) ,  \nonumber
\end{align}
The Lagrangian function of this optimization problem is
\begin{align}\label{Lagrangian-func}
  & \mathcal{L} (\{p_n\},  \varepsilon, \{\mu_n  \})  =  - \sum_{n=1}^{F}  K_n a_n \Big(1 - \exp \big( - p_n T_n \big) \Big)  +  \varepsilon (\sum_{n=1}^{F} p_n - C) + \sum_{n=1}^{F} \mu_n (p_n - 1),
\end{align}
where $\varepsilon, \mu_n$ are the Lagrangian multipliers associated with the constraints (\text{\ref{constraint:cachesize-constraint}}), (\text{\ref{constraint:probability-constraint}}), respectively.
This constrained optimization problem can be solved by applying the Karush-Kuhn-Tucker (KKT) conditions. After differentiating $\mathcal{L} (\{p_n\},  \varepsilon, \{\mu_n  \})$ with respect to $p_n$, we can obtain all the necessary KKT conditions for the optimal caching probability, which is given by
\vspace{-6mm}

\begin{small}
\begin{subequations}
\begin{numcases}{}
   \frac{\partial \mathcal{L} (\{p_n\},  \varepsilon, \{\mu_n  \}) }{\partial p_n} = -K_n a_n T_n \exp(-p_n T_n) + \varepsilon + \mu_n \geq 0, \label{KKT1} \\
   \big( -K_n a_n T_n \exp(-p_n T_n) + \varepsilon + \mu_n \big) p_n = 0,  \label{optimal-condition} \\
   \varepsilon \left(\sum\nolimits_{n=1}^{F} p_n - C\right) = 0,  \\
   \mu_n(p_n-1) = 0,  \\
   \varepsilon, \mu_n \geq 0. \label{KKT5}
\end{numcases}
\end{subequations}
\end{small}
Then, the optimal caching probability is derived from the constraint in (\ref{optimal-condition}), which is given by
\begin{equation}\label{optimal-caching-prob} \small
  p_n = \left[\frac{1}{T_n} \ln \left(\frac{K_n a_n T_n}{\varepsilon + \mu_n}\right)\right]^{+},
\end{equation}
where $[x]^{+} = \max \{ 0,x \}$. The caching probability of file $c_n$ increases as the file popularity $a_n$ becomes larger, but is regulated by the Lagrangian multipliers $\ln(\varepsilon + \mu_n)$. According to the KKT conditions in (\ref{KKT1}) - (\ref{KKT5}), the Lagrangian multipliers $\varepsilon$ and $\mu_n$ range with respect to $p_n$, which is given by
\begin{equation} \label{piecewise-func} \small
\begin{cases}
  \varepsilon \leq K_n a_n T_n \mathrm{e}^{-T_n}, \quad \qquad \qquad \mu_n = [K_n a_n T_n \mathrm{e}^{- T_n} - \varepsilon ]^{+}  \ \text{for} \ p_n = 1,  \\
  K_n a_n T_n \mathrm{e}^{-T_n} \leq \varepsilon \leq K_n a_n T_n , \ \mu_n = 0, \ \text{for} \ 0 < p_n < 1,  \\
  \varepsilon \geq K_n a_n T_n , \quad \qquad \qquad \qquad \mu_n = 0, \ \text{for} \ p_n = 0.
  \end{cases}
\end{equation}

%\begin{spacing}{1.5}
%\newgeometry{left=0.78in,right=0.78in,top=0.6in,bottom=0.6in}
(\ref{piecewise-func}) indicates that the caching probability $p_n$ is determined by the Lagrangian multiplier $\varepsilon$ since $\mu_n$ is a function of $\varepsilon$. Specifically, if $\varepsilon \leq \min \{ K_n a_n T_n \mathrm{e}^{-T_n}| n \in \mathcal{F} \}$, then $p_n = 1, n \in \mathcal{F}$ and $\sum_{n=1}^{F}p_n(\varepsilon, \mu_n ) = F$; if $\varepsilon \geq \max \{ K_n a_n T_n | n \in \mathcal{F} \}$, then $p_n = 0, n \in \mathcal{F}$ and $\sum_{n=1}^{F}p_n(\varepsilon, \mu_n ) = 0$; when $\min \{ K_n a_n T_n \mathrm{e}^{-T_n}| n \in \mathcal{F} \} \leq \varepsilon \leq \max \{ K_n a_n T_n | n \in \mathcal{F} \} $, $\sum_{n=1}^{F}p_n(\varepsilon, \mu_n ) $ is bounded by $0 \leq \sum_{n=1}^{F}p_n(\varepsilon, \mu_n ) \leq F$ since $\sum_{n=1}^{F}p_n(\varepsilon, \mu_n ) $ is decreasing with $\varepsilon$. Hence, due to the fact that $\sum_{n=1}^{F}p_n(\varepsilon^{*}, \mu_n^{*} ) = C$, the optimal Lagrangian multiplier $\varepsilon^{*}$ can be efficiently found by the bisection method, which is presented in Algorithm \ref{alg:bisection}.

\textcolor{blue}{The complexity of Algorithm \ref{alg:bisection} in steps 5-10 is logarithmic due to the bisection method, i.e., $\mathcal{O} \left(\log (K_n T_n) \right)$. Observing (\ref{SBOP-simple}) and (\ref{Yj-def}), we find the proportional relationship between  $K_n$, $T_n$ and the scale of the problem, i.e., $K_n \propto \exp(\lambda_{\mathrm{BS}} + \lambda_{\mathrm{RN}})$, $T_n \propto (\lambda_{\mathrm{BS}} + \lambda_{\mathrm{RN}})$. Thus the complexity of the bisection method is $\mathcal{O} (\lambda_{\mathrm{BS}} + \lambda_{\mathrm{RN}}) $. Since steps 2-4 is the inner loop of the bisection method, the overall complexity of Algorithm \ref{alg:bisection} is $\mathcal{O} (F (\lambda_{\mathrm{BS}} + \lambda_{\mathrm{RN}}))$.}

\begin{algorithm} \label{alg:bisection}
%\setstretch{1.3}
\caption{A bisection method for finding $\{p_n^{*}\}$}
\LinesNumbered
\KwIn{$ a_n, K_n, T_n , n \in \mathcal{F}$;}
\KwOut{the optimal caching probabilities $\{p_n^{*}\}$;}
\textbf{Initialization:} {$l \leftarrow \min \{ K_n a_n T_n \mathrm{e}^{-T_n}| n \in \mathcal{F} \}, u \leftarrow \max \{ K_n a_n T_n | n \in \mathcal{F} \}, \varepsilon \leftarrow \frac{l+u}{2}$;}

\For {$n = 1,2,\cdots,F$ }
{ \label{bisection-step2}
   $\mu_n \leftarrow [K_n a_n T_n \mathrm{e}^{-T_n} - \varepsilon]^{+}$;

   Calculate $p_n(\varepsilon, \mu_n)$ according to (\ref{optimal-caching-prob}); \label{bisection-step4}
}

\While {$|\sum_{n=1}^{F}p_n(\varepsilon, \mu_n) - C | \geq \epsilon$}
{
    \uIf{$|\sum_{n=1}^{F}p_n(\varepsilon, \mu_n) > C $}
    {$l \leftarrow \varepsilon$ ;}

    \uElseIf{$|\sum_{n=1}^{F}p_n(\varepsilon, \mu_n) < C $}
    { $u \leftarrow \varepsilon$ ;}
   $\varepsilon \leftarrow \frac{l+u}{2}$, then update $\mu_n$ by repeating steps \ref{bisection-step2}-\ref{bisection-step4};
}

$p_n^{*} = p_n(\varepsilon, \mu_n)$, for $\forall n \in \mathcal{F}$;
\end{algorithm}

\subsection{BS and RN selection algorithm}
After determining the caching probabilities by the proposed algorithms in the caching placement phase as mentioned above, we give the BS and RN selection algorithm during the file delivery phase. In this paper, we consider the node selection algorithm depending on both cached files and maximum biased received power. Before the BS and RN selection algorithm terminates, the node that may be selected as the associated node is termed as the candidate node.

\begin{algorithm} \label{selection-algorithm}
%\setstretch{1.26}
\caption{BS and RN selection algorithm}
\LinesNumbered
\KwIn{$ \Phi_{\mathrm{BS}}, \Phi_{\mathrm{RN}}, P_{\omega}, B_{\omega}, G_{\omega}, c_n, p_n,\omega \in \{ \text{BS, RN} \}, n \in \mathcal{F} $;}
\KwOut{The selection of the associated BS$_{R0}$, RN$_0$ or BS$_0$ for UE$_0$;}
\textbf{Initialization:} {set maximum biased received power $P_{\mathrm{BS_0}} = 0$, $P_{\mathrm{RN_0}} = 0$, $P_{\mathrm{BS}_\mathrm{R0}} = 0$;}

\mbox{Each BS independently caches files based on $\{ p_n\}$ obtained by Algorithm 1 or Algorithm 2.}

UE$_0$ searches for the BSs that cache the requested file $c_n$, and the set is denoted as $\Phi_{\mathrm{BS}_n}$;

\If{ $\Phi_{\mathrm{BS}_n} = \emptyset$ }
{
    fail in offloading backhaul for $c_n$ and go to step \ref{fail-offloading};
}

Select BS$_0$ from $\Phi_{\mathrm{BS}_n}$ based on (\ref{BS0-rule}); \label{step5}

Eliminating the RNs from $\Phi_{\mathrm{RN}}$ whose biased received power are below $P_{\mathrm{BS_0}}$, and denote the remaining set as $\Phi_{(\mathrm{RN>BS_0})}$;

\ForEach {$\mathrm{RN}_i \in \Phi_{(\mathrm{RN>BS_0})}$ }
{ \label{step8}
    Select the $\mathrm{BS_{R0}}$ based on (\ref{BSR0-rule});

    \If {$\min \{ P_{\mathrm{RN}_{i}}, P_{\mathrm{BS}_{\mathrm{R0}}} \} > P_{\mathrm{RN}_{0}} $}
    {
         $P_{\mathrm{RN}_{0}} \Leftarrow \min \{ P_{\mathrm{RN}_{i}}, P_{\mathrm{BS}_\mathrm{R0}} \} $ , record the corresponding index of RN$_0$ and BS$_\mathrm{R0}$;  \label{step10}
    }
}

\eIf {$P_{\mathrm{RN}_{0}} < P_{\mathrm{BS}_{0}}$}
{
    \eIf{the rate requirement of $c_n$ is satisfied by the link BS$_{0}$-UE$_0$}
    {
        UE$_0$ is served by BS$_0$ via the one-hop link;
    }
    {
       fail in offloading backhaul for $c_n$ and go to step \ref{fail-offloading};
    }
}
{
    \eIf{the rate requirement of $c_n$ is satisfied by the link BS$_\mathrm{R0}$-RN$_0$-UE$_0$ }
    {
        UE$_0$ is served by RN$_0$ and BS$_\mathrm{R0}$ via the two-hop link;
    }
    {
        fail in offloading backhaul for $c_n$ and go to step \ref{fail-offloading};
    }

     \mbox{Repeat steps \ref{step5}-\ref{step10} by replacing $\Phi_{\mathrm{BS}_n}$ in step \ref{step5} with $\Phi_{\mathrm{BS}}$ to reselect BS$_0$, RN$_0$ and BS$_\mathrm{R0}$.} \label{fail-offloading}
}
\end{algorithm}

The proposed BS and RN selection algorithm is outlined in Algorithm \ref{selection-algorithm}. In the caching placement phase, each BS independently fetches files via the backhaul and caches them according to the caching probabilities obtained by the proposed CP-POA or CP-CO algorithm. In the file delivery phase, the associated BS is preferentially selected from the BSs that cache the requested file by UE$_0$. Whether UE$_0$ is served via a one-hop link or a two-hop link is based on the relaying and association criterion described in Section \ref{association-criterion}. Specifically, the BS caching the requested file $c_n$ and having the largest biased received power is selected as the candidate BS$_0$ first, and the corresponding biased received power is denoted by $P_{\mathrm{BS_0}}$. Next, according to Eq. (\ref{one-two-hop-condition}), RNs with lower biased received power than BS$_0$ is excluded from the candidate RN$_0$. Then, the candidate RN$_0$ and BS$_\mathrm{R0}$ is determined by the loop in steps \ref{step8}-\ref{step10}, which is in line with Eqs. (\ref{RN0-rule}) and (\ref{BSR0-rule}), maximizing the minimum of $P_{\mathrm{RN}_{0}}$ and $P_{\mathrm{BS}_\mathrm{R0}}$. Finally, whether UE$_0$ is served via a one- or a two-hop link is determined based on the obtained maximum biased received power of the candidate BS$_0$, RN$_0$ and BS$_\mathrm{R0}$. In addition, whether the traffic is successfully offloaded from the backhaul depends on whether the rate requirement is satisfied.
\textcolor{blue}{The complexity of Algorithm \ref{selection-algorithm} is dominated by steps \ref{step8}-\ref{step10}, where the outside loop depends on  $|\Phi_{(\mathrm{RN>BS_0})} |$, and the inner loop depends on $|\Phi_{\mathrm{BS}} |$. Thus the worst-case complexity of Algorithm \ref{selection-algorithm} is $\mathcal{O} (\lambda_{\mathrm{BS}} \lambda_{\mathrm{RN}})$.  }

Note that when the file requested by UE$_0$ is not cached in any BS or the rate requirement is not satisfied, UE$_0$ will select the associated BS/RN only based on the maximum biased received power, which is not ideal for backhaul offloading.  The purpose of the caching placement algorithms we designed in the previous subsections is to reduce the probability of occurrence of this operating case, and will be verified in the next section.

%\end{spacing}

%~\\  %空一行

%\vspace{8mm}

\section{Performance Evaluation}

%\begin{spacing}{1.5}
In this section, we validate our analytical work and evaluate the performance of SBOP using the proposed caching placement algorithms, CP-POA and CP-CO. Unless otherwise stated, most of the system parameters and their corresponding values are given in Table \ref{simulationsetting}. We conduct both numerical experiments and Monte Carlo simulations in various scenarios. In the Monte Carlo simulations, the performance is averaged over 2000 network deployments, where all the BSs, RNs and UEs are randomly distributed in a square area of 800 m $\times$ 800 m.
For the sake of comparison, \textcolor{blue}{ we consider three benchmark algorithms as follows}: 1) caching $C$ most popular files (MPC) \cite{Caching-policy-toward-maximal}, 2) caching files uniformly (UC) \cite{Caching-at-the-wireless-edge}, \textcolor{blue}{and 3) constrained cross-entropy optimization (CCEO) \cite{Content-placement-in-cache-enabled}, which is a heuristic algorithm based on stochastic searching.  }
%For simplicity, a uniform target data rate for each file is considered throughout the performance evaluation.
\begin{table}
  \centering
  \caption{Parameter Values}
  \label{simulationsetting}
  \small
  \begin{tabular}{|p{2.2cm}|p{9.8cm}|p{2.9cm}|}
  %\begin{tabular}{|l|l|l|}
    \hline
    % after \\: \hline or \cline{col1-col2} \cline{col3-col4} ...
    \textbf{Parameters}   & \textbf{Physical meaning}        & \textbf{Values}\\\hline
     $P_{\mathrm{BS}}$  &   Transmit power of each BS      &30 dBm           \\\hline
     $P_{\mathrm{RN}}$  &   Transmit power of each RN      &30 dBm           \\\hline
     $B$  & The bandwidth assigned to each UE &  100 MHz  \\\hline
     $\alpha_{\mathrm{L}} / \alpha_{\mathrm{N}}$  & Path loss exponent of LoS and NLoS  & 2.5 / 4     \\\hline
     $\theta$  & Mainlobe beamwidth  &  30$^\circ$      \\\hline
     $M / m$  & Mainlobe antenna gain / sidelobe antenna gain & 10 dB / -10 dB  \\\hline
     $\beta$   &  Blockage density  &  4$\times$10$^{-4}$   \\\hline
     $N_\mathrm{L} / N_\mathrm{N}$   &   Nakagami fading parameter for LoS and NLoS channel & 3 / 2  \\\hline
     $\lambda_{\mathrm{BS}}$  &  mmWave BS density  &  10$^{-5}$ nodes/m$^2$   \\\hline
     $\lambda_{\mathrm{RN}}$  &  mmWave RN density  &  10$^{-5}$ nodes/m$^2$   \\\hline
     $\tau_n, \forall n \in \mathcal{F}$  &  Target data rate for the $n$-th file  &  0.04$-$1 Gbps   \\\hline
     $\delta$  &  Skewness of the file popularity   &  0.8  \\\hline
     $F$  &  The number of files  &  20  \\\hline
     $C$  &  Maximum number of files cached by each BS  &  10  \\\hline
     %$\epsilon$  &    &    \\\hline
  \end{tabular}
\end{table}

We begin by validating the convergence of the proposed CP-POA algorithm for different number of files and cache size. As can be seen from Fig. 2, the CP-POA algorithm converges to the optimal solution for different network parameters. We note that CP-POA starts from an infeasible initial probability decision policy which violates some of the constraints, but may yield a high SBOP. Ultimately CP-POA converges to the feasible optimal solution. \textcolor{blue}{In general, the convergence of the proposed CP-POA algorithm becomes slower as the number of files increases. Moreover, extensive simulations suggest that the speed of convergence is not sensitive to other network parameters (e.g., blockage parameter, caching size, etc.).}
\begin{figure}
  %\centering
  %\hspace{0.1in}
  \begin{minipage}[h]{0.29\linewidth}
  \includegraphics[width=1.05\textwidth]{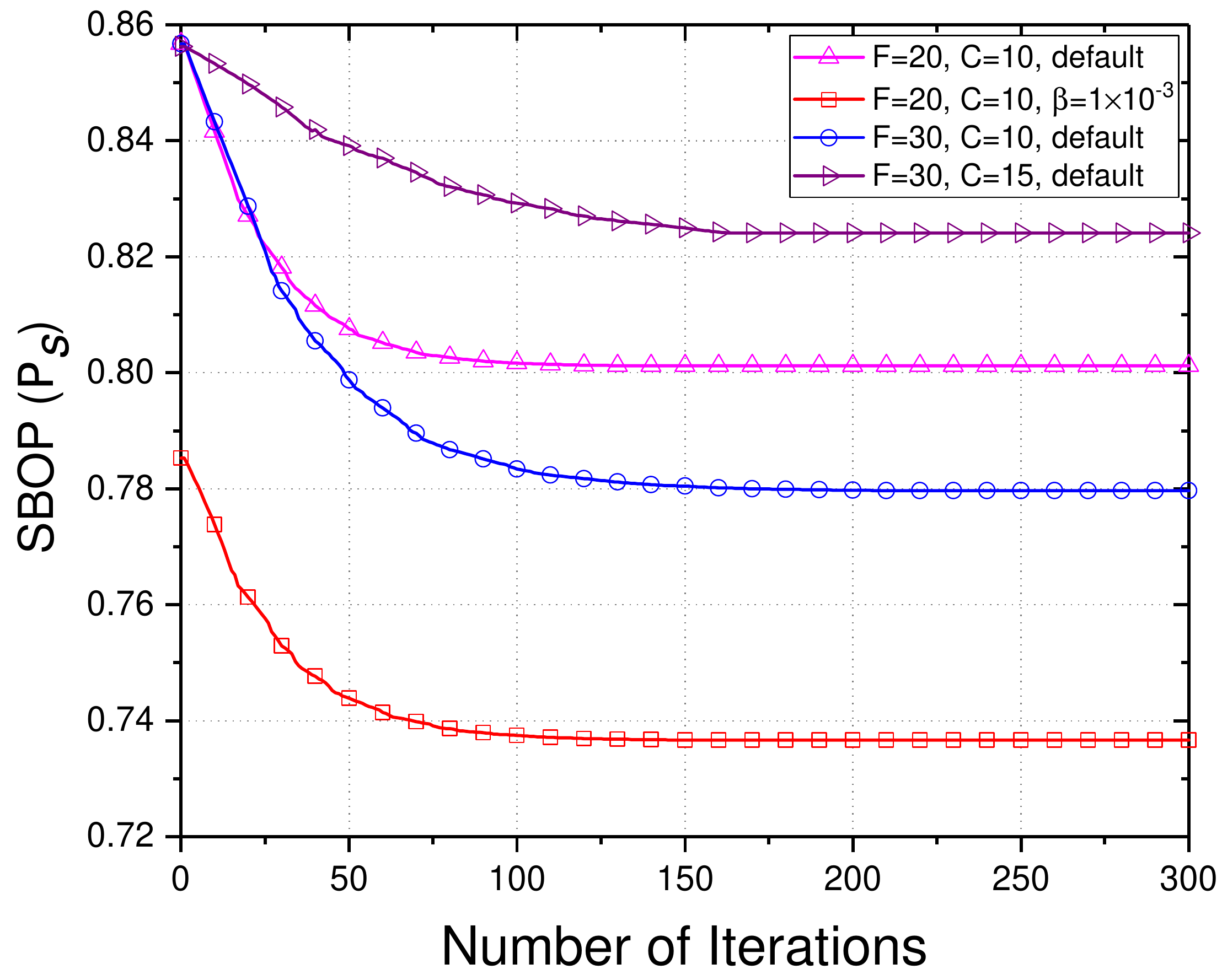}
  \label{fig:SBOP-converge}
  %\vspace{0.01mm}
  \textcolor{blue}{\caption{Convergence of the proposed CP-POA algorithm for different $F$ and $C$. }}
  \end{minipage}
  \hspace{0.28in}
  \begin{minipage}[h]{0.70\linewidth}
  \subfigure[]{
    \label{fig:mmw-pn-lambdaR} %% label for first subfigure
    \includegraphics[width=0.44\textwidth]{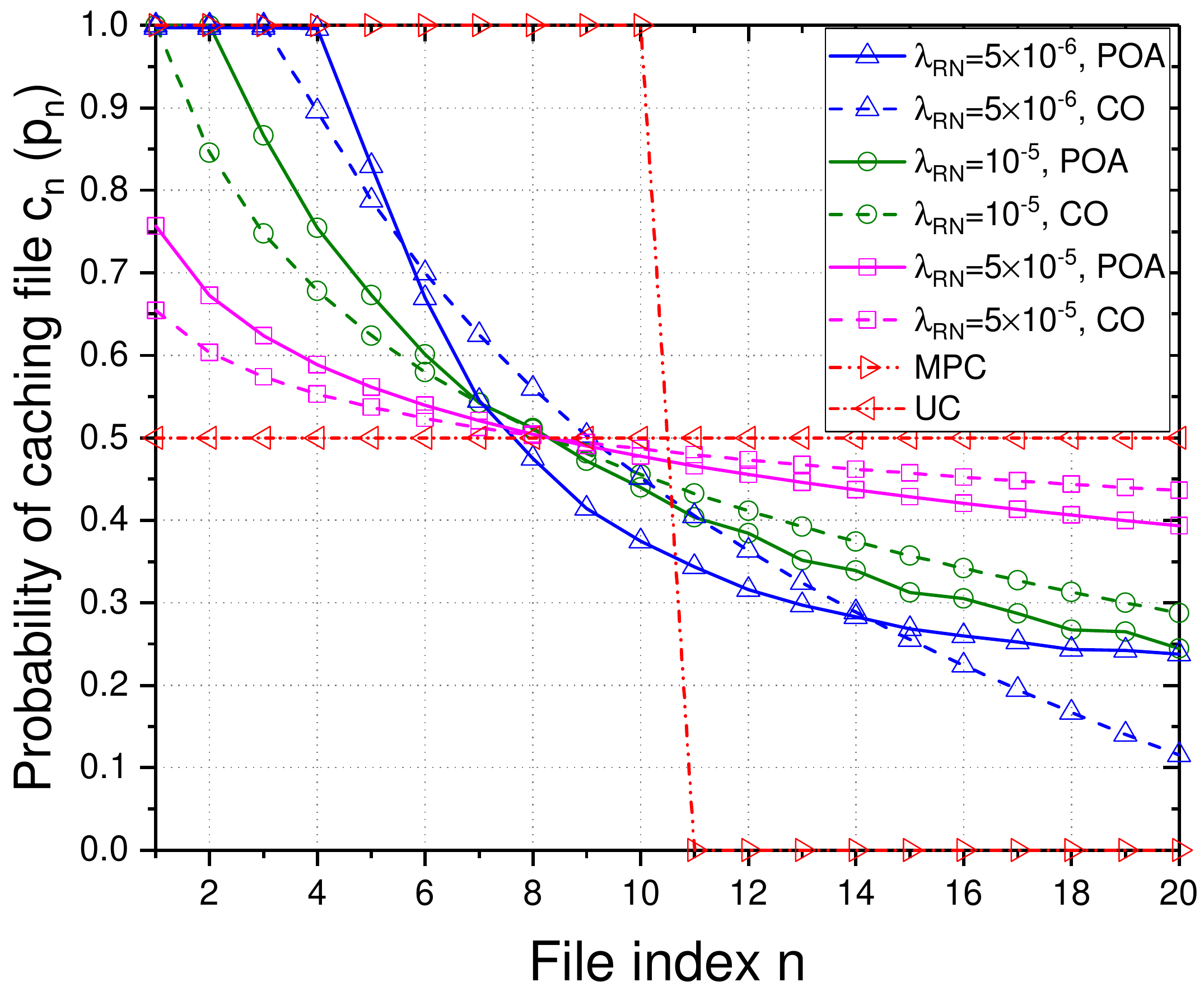}}
  \subfigure[]{
    \label{fig:mmw-pn-delta} %% label for second subfigure
    \includegraphics[width=0.44\textwidth]{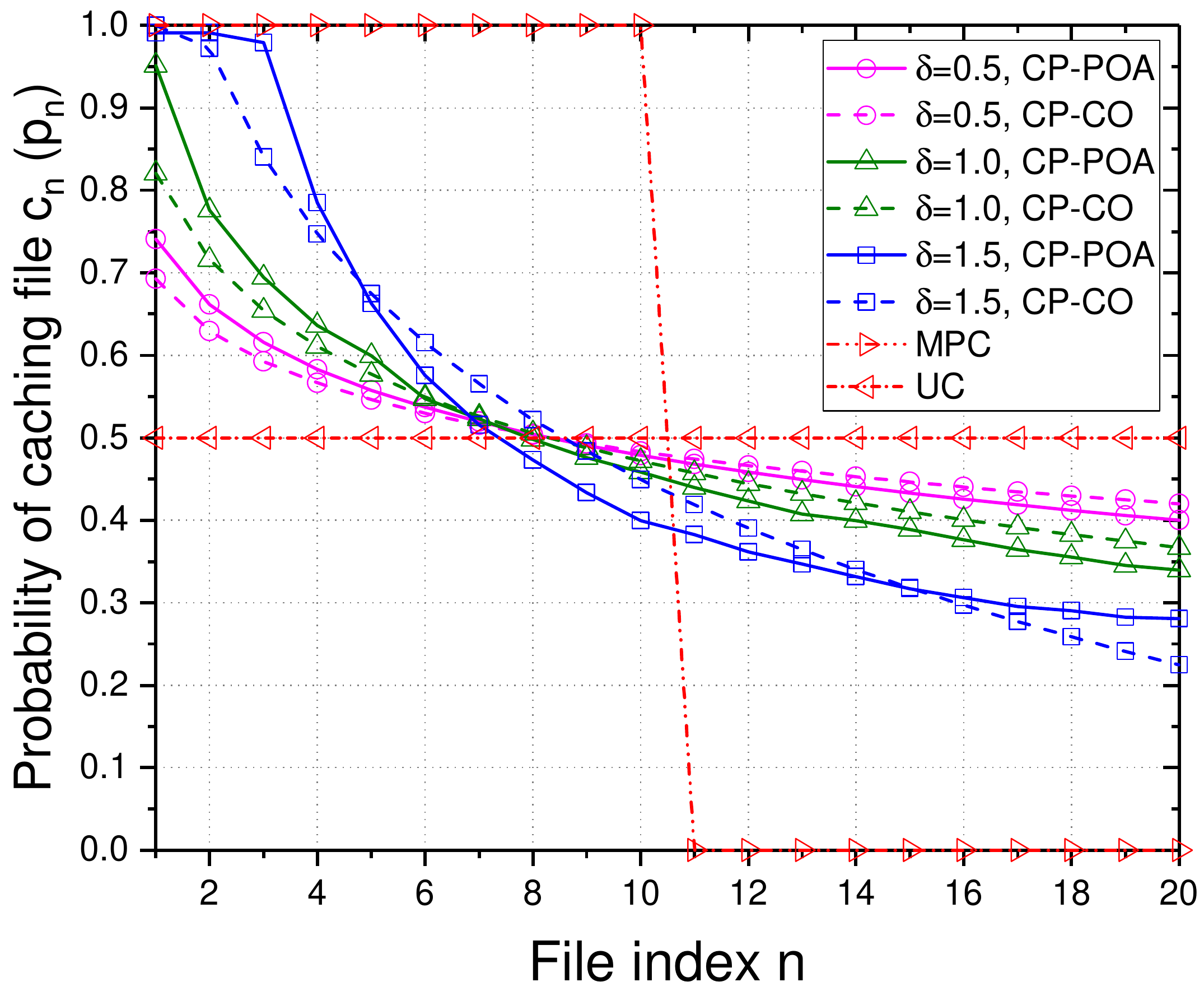}}
  \textcolor{blue}{\caption{Results of the proposed algorithms v.s. file index for (a) different densities \newline of RNs $\lambda_{\mathrm{RN}}$, and (b) different skewnesses of file popularity $\delta$.}}
  \label{fig:mmw-pn} %% label for entire figure
  \end{minipage}
\end{figure}

%\end{spacing}

We now evaluate the performance of the proposed caching placement algorithms.
We first evaluate the optimal caching probabilities for varying densities of RNs as shown in Fig. \ref{fig:mmw-pn-lambdaR}. Note that these files are sorted by the popularity, i.e., a file with smaller index owns higher popularity.
In general, when the density of RNs is relatively low, the most popular files are cached with a higher probability using the proposed algorithms. In contrast, when the density of RNs is high, the caching probabilities for the most popular files decrease, and the optimal caching probabilities become more uniform. According to the thinning theorem, the higher the caching probability of the most popular files, the more BSs will cache them, thus offering shorter geometric communication distance for the specific file request, which is dominant when the density of RNs is low. However, with the increase of RN density, the average geometric communication distance will be further shortened, so it is better to increase the caching probabilities for the less popular files so that the file diversity in the network can be increased. Besides, in Fig. \ref{fig:mmw-pn-delta}, we also evaluate the effect of Zipf parameter $\delta$ reflecting the skewness of file popularity. It can be seen that the caching probabilities tend to be more uniformly distributed with a smaller $\delta$. This is because the requests for files are more decentralized with a smaller $\delta$, in which case considering file diversity is more beneficial than caching popular files.
\begin{figure}
  \centering
  \subfigure[]{
    \label{fig:SBOP-lambdaRN} %% label for first subfigure
    \includegraphics[width=0.3\textwidth]{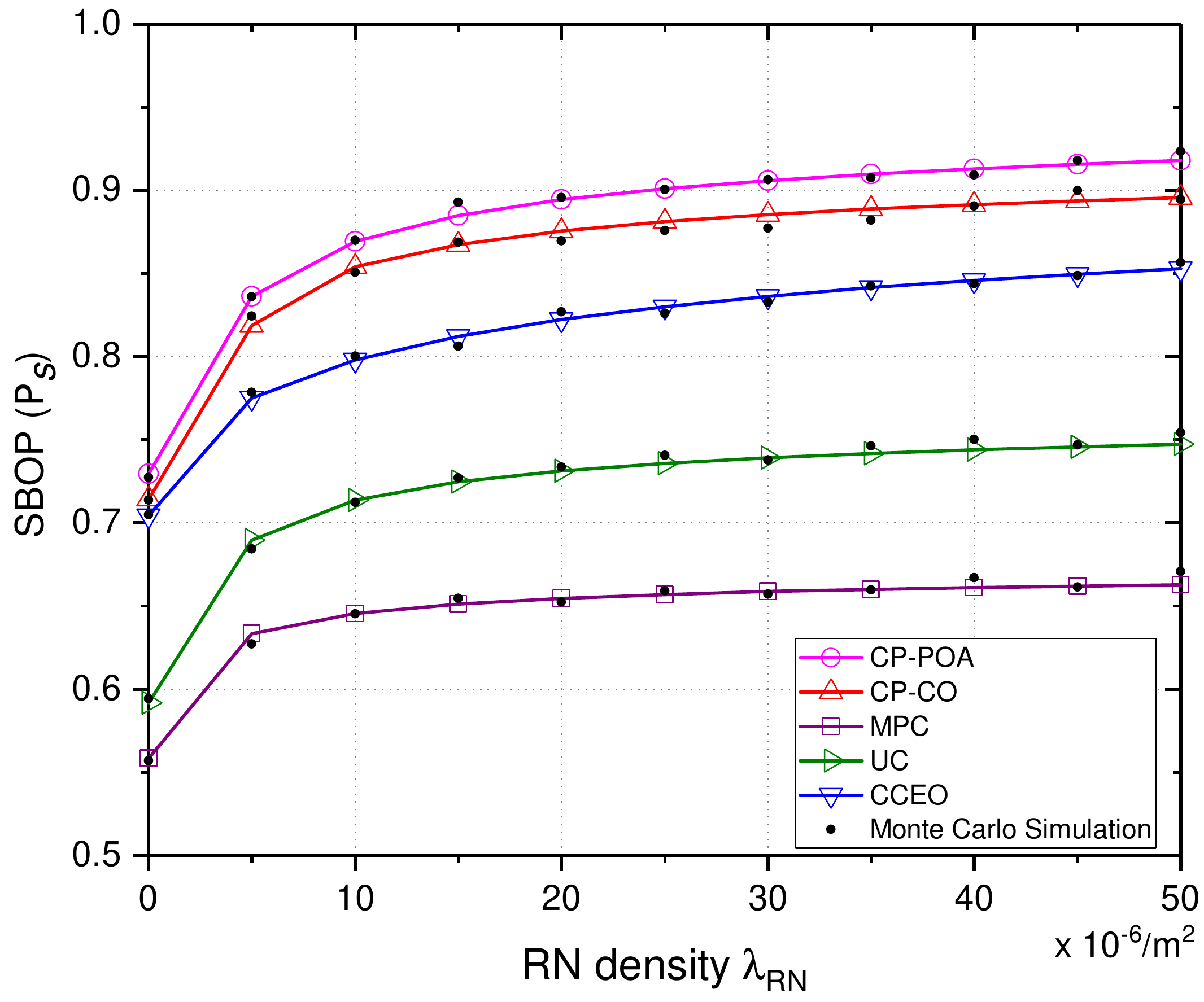}}
  \hspace{0.05in}
  \subfigure[]{
    \label{fig:SBOP-SINRt} %% label for second subfigure
    \includegraphics[width=0.295\textwidth]{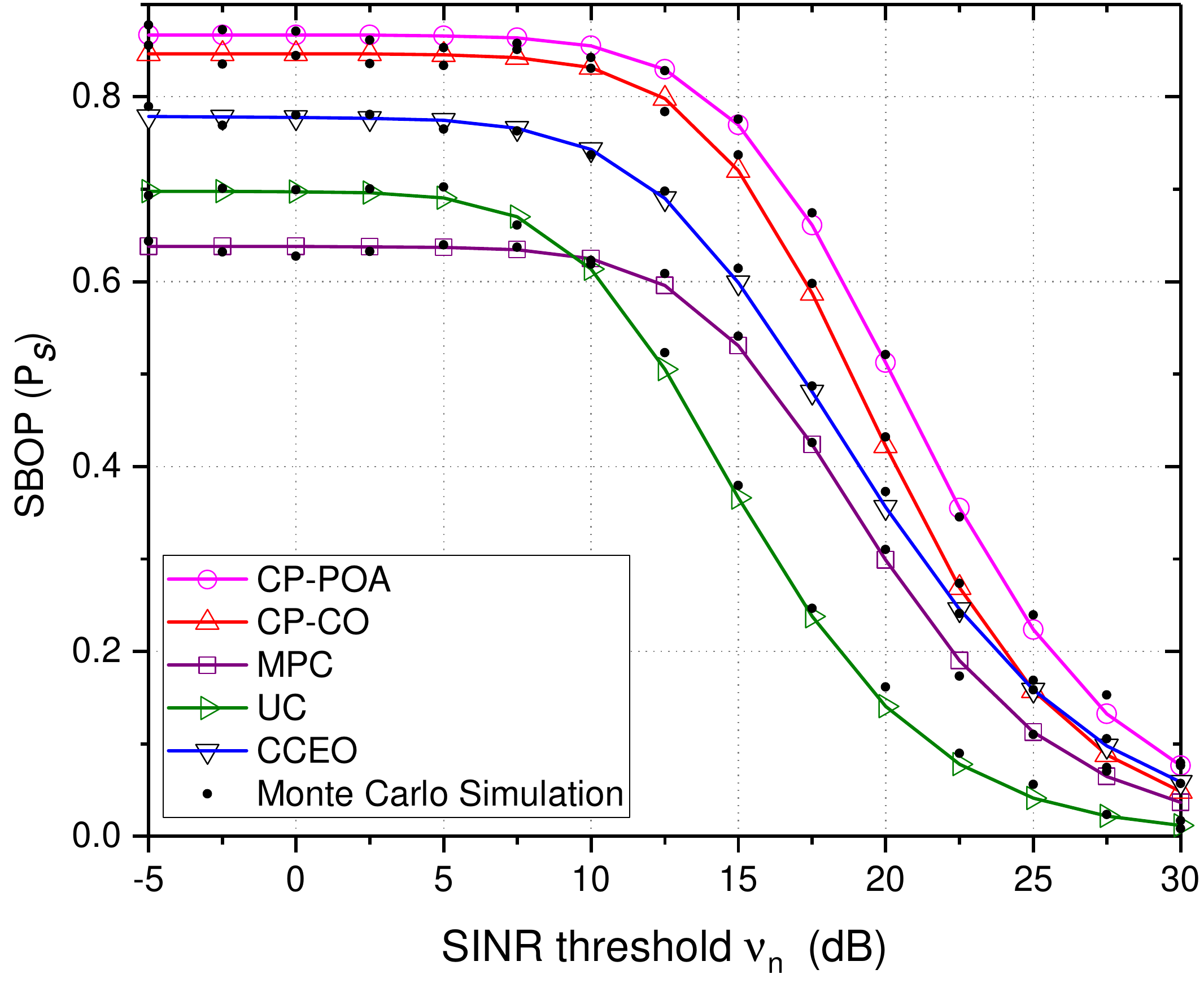}}
    \hspace{0.05in}
  \subfigure[]{
    \label{fig:SBOP-beta} %% label for second subfigure
    \includegraphics[width=0.3\textwidth]{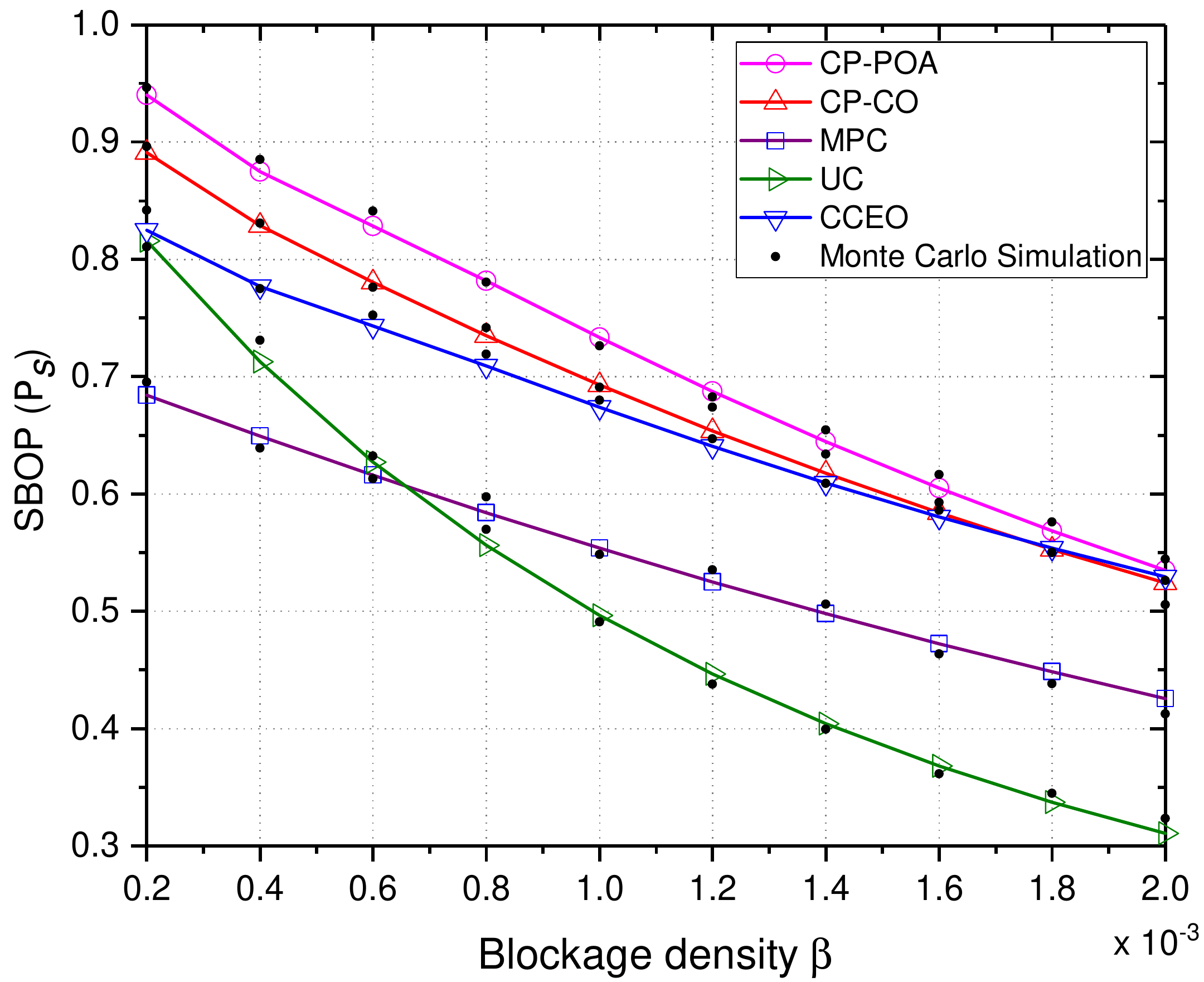}}
  \textcolor{blue}{\caption{Comparison of the performance of SBOP under different caching placement algorithms. }}
  \label{fig:comparison-of-different-strategies} %% label for entire figure
\end{figure}

\subsection{Comparison with existing benchmark algorithms}
We then evaluate the performance of SBOP using different caching placement algorithms. It is noticed that the Monte Carlo simulation results closely match the numerical ones.
It is evident from Fig. \ref{fig:SBOP-lambdaRN} that the proposed caching algorithms using CP-POA and CP-CO are both superior to the MPC, UC, respectively, for varying densities of RNs.
It can be observed that the performance of using CP-CO is quite close to that of using CP-POA when the density of RNs is relatively low. This is due to the fact that the interference to a typical UE is smaller when the density of RNs is lower. In this case, adopting CP-CO that ignores the impact of interference will achieve a near-optimal performance compared with the optimal CP-POA algorithm that requires higher computational complexity.
\textcolor{blue}{
It is also observed that adopting CCEO can achieve a near-optimal performance when there is no relay in the network, while the performance improvement is limited when RNs are introduced into the network. This is because the CCEO algorithm does not take into consideration of the spatial correlation caused by the coexistence of BSs and RNs, which makes it hard to obtain the optimal caching placement decision.
}

%\restoregeometry

In Fig. \ref{fig:SBOP-SINRt}, the performance of the SBOP under different algorithms is shown, with the SINR threshold ranging from -5 to 30 dB, which reflects different rate requirements of files. It can be seen that the proposed algorithms outperform the other algorithms for varying SINR thresholds. \textcolor{blue}{Due to the randomly initialized parameters of the CCEO algorithm, it achieves a lower SBOP. }  It can be concluded that a proper caching algorithm is essential for the relay-assisted mmWave network to offload more backhaul traffic, especially for SINR threshold less than 20 dB.

In Fig. \ref{fig:SBOP-beta}, the performance of the SBOP is shown for varying blockage density $\beta$. It can be seen that the SBOP using UC is close to the proposed algorithms in the case of a lower blockage density. This is due to the fact that lower blockages make the channel condition better, which makes the file diversity be dominant to generate the performance gain. Hence, our proposed algorithms will be more inclined to the file diversity, which makes the caching placement result more similar to UC. However, with the increase of $\beta$, the performance of UC declines significantly, while that of the proposed algorithms declines much more steadily. This is because the channel condition becomes worse and more signal blockages will occur, so our proposed algorithms are more inclined to cache popular files to shorten geometric communication distance between the transceivers, which makes links more probability of LOS than NLOS. Therefore, it can be concluded that our proposed algorithms are blockage-aware.

%\vspace{-4.5mm}

\subsection{Impact of various network parameters on SBOP}
\begin{figure}
  %\centering
  \subfigure[]{
    \label{fig:SBOP-3d} %% label for second subfigure
    \includegraphics[width=0.36\textwidth]{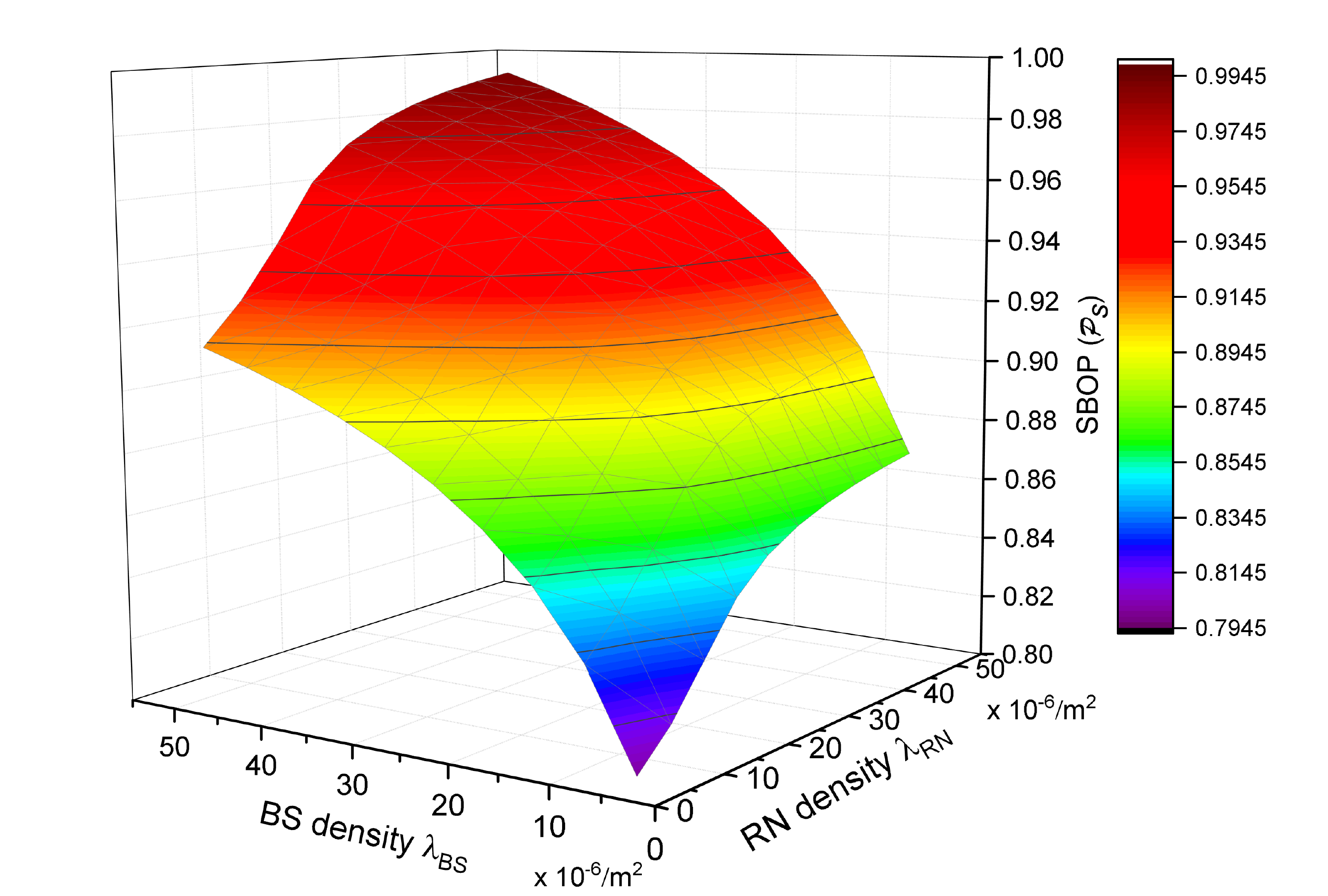}}
  %\hspace{0.05in}
  \subfigure[]{
    \label{fig:SBOP-delta-POA} %% label for first subfigure
    \includegraphics[width=0.29\textwidth]{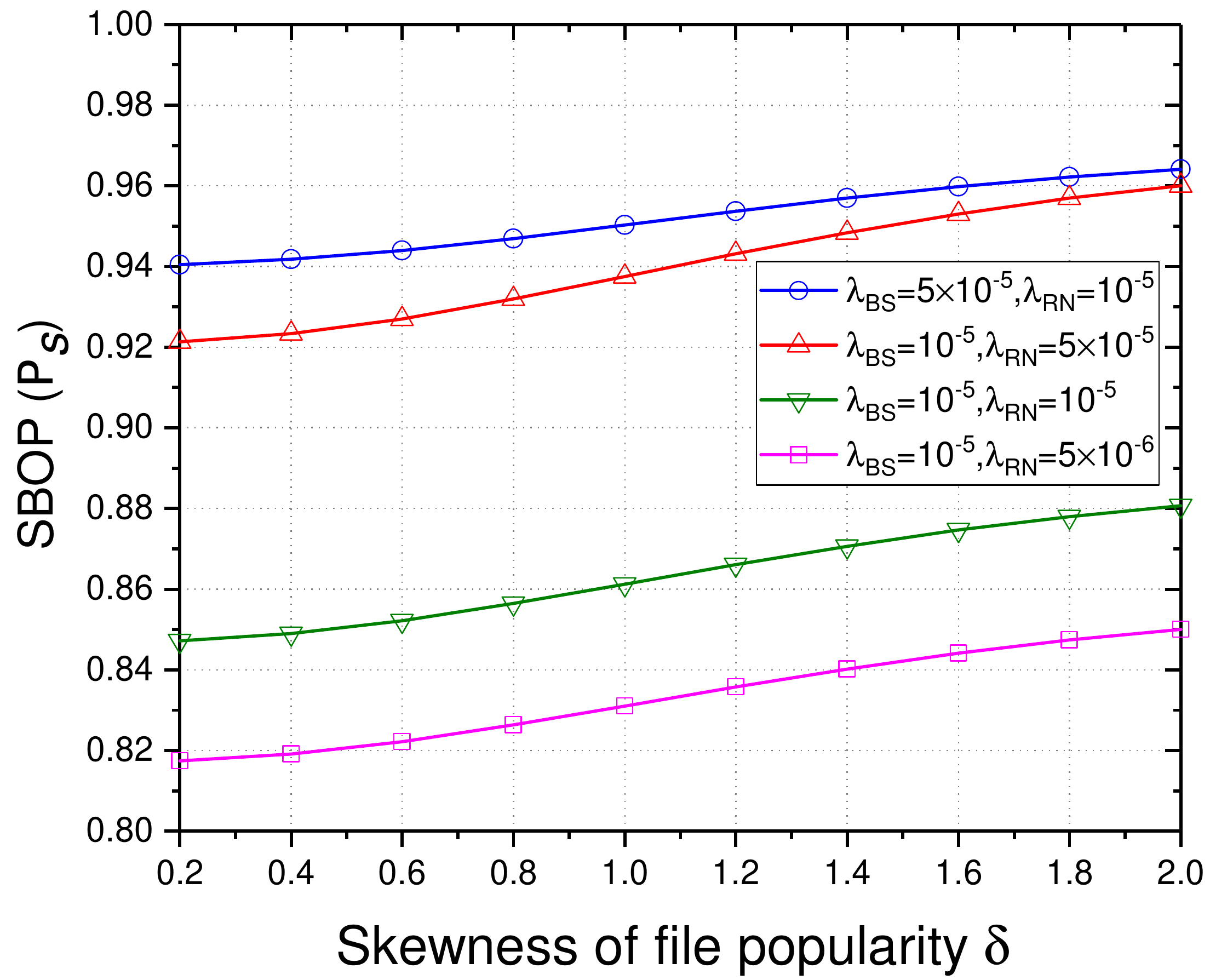}}
  %\hspace{0.05in}
  \subfigure[]{
    \label{fig:SBOP-cachesize-POA} %% label for second subfigure
    \includegraphics[width=0.285\textwidth]{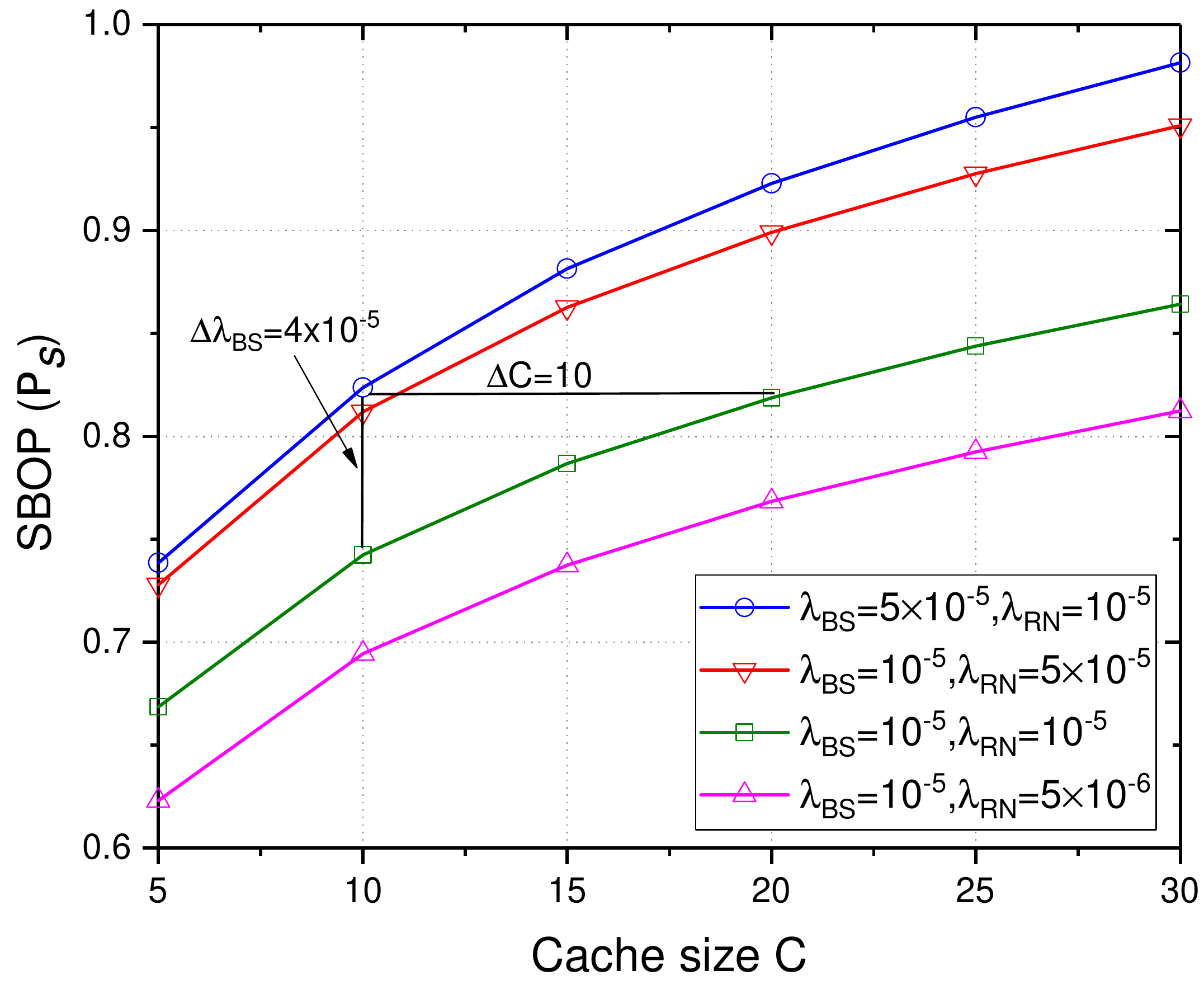}}
  \caption{The performance of SBOP using CP-POA, (a) under various densities of BSs and RNs, (b) under different cache sizes with F = 50, and (c) under different skewnesses of file popularity.  }
  \label{fig:SBOP-POA-performance} %% label for entire figure
\end{figure}
To obtain further insights, we evaluate the impact of various network parameters on the SBOP of the proposed CP-POA algorithm. In Fig. \ref{fig:SBOP-3d}, the performance of SBOP for various densities of BSs and RNs is shown. It is observed that SBOP increases with larger densities of BSs or RNs.
Interestingly, the case with $\lambda_{\mathrm{BS}}=$ 10$^{-5}/$m$^2$ and $\lambda_{\mathrm{RN}}=$5$\times$10$^{-5}/$m$^2$ achieves the similar SBOP performance to that with more BS deployed, i.e., $\lambda_{\mathrm{BS}}=$ 5$\times$10$^{-5}/$m$^2$ and $\lambda_{\mathrm{RN}}=$10$^{-5}/$m$^2$, which can be observed in Fig. \ref{fig:SBOP-delta-POA}.
This gives the insights that deploying more relay nodes requiring lower cost than BSs in the mmWave network can also bring comparable system performance, which is a more cost-effective alternative than deploying more mmWave BSs. Fig. \ref{fig:SBOP-cachesize-POA} shows that the SBOP performance increases as the cache size increases. It is observed that instead of increasing the BS density to 5$\times$10$^{-5}/$m$^2$, the system can achieve the same SBOP of 0.78 while keeping the BS density of 10$^{-5}/$m$^2$ just by increasing the cache size from 10 to 20.
These results show great promise of deploying caches in the relay-assisted mmWave cellular networks because it is possible to trade off the relatively cheap storage for reduced expensive infrastructure.

\begin{figure}
  \centering
  \subfigure[CP-POA with RNs]{
    \label{fig:scatter-POA} %% label for second subfigure
    \includegraphics[width=0.29\textwidth]{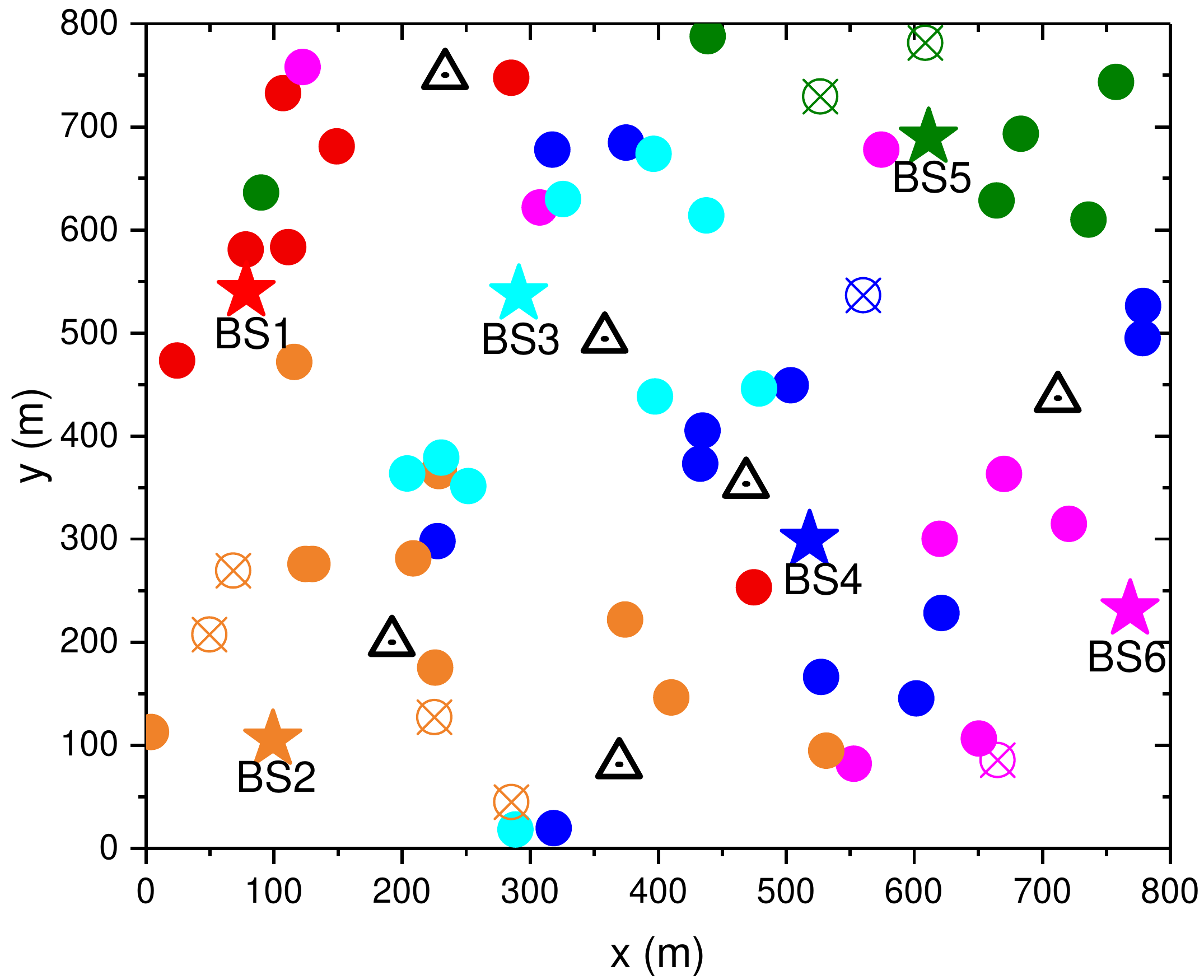}}
  \subfigure[MPC with RNs]{
    \label{fig:scatter-MPC} %% label for first subfigure
    \includegraphics[width=0.29\textwidth]{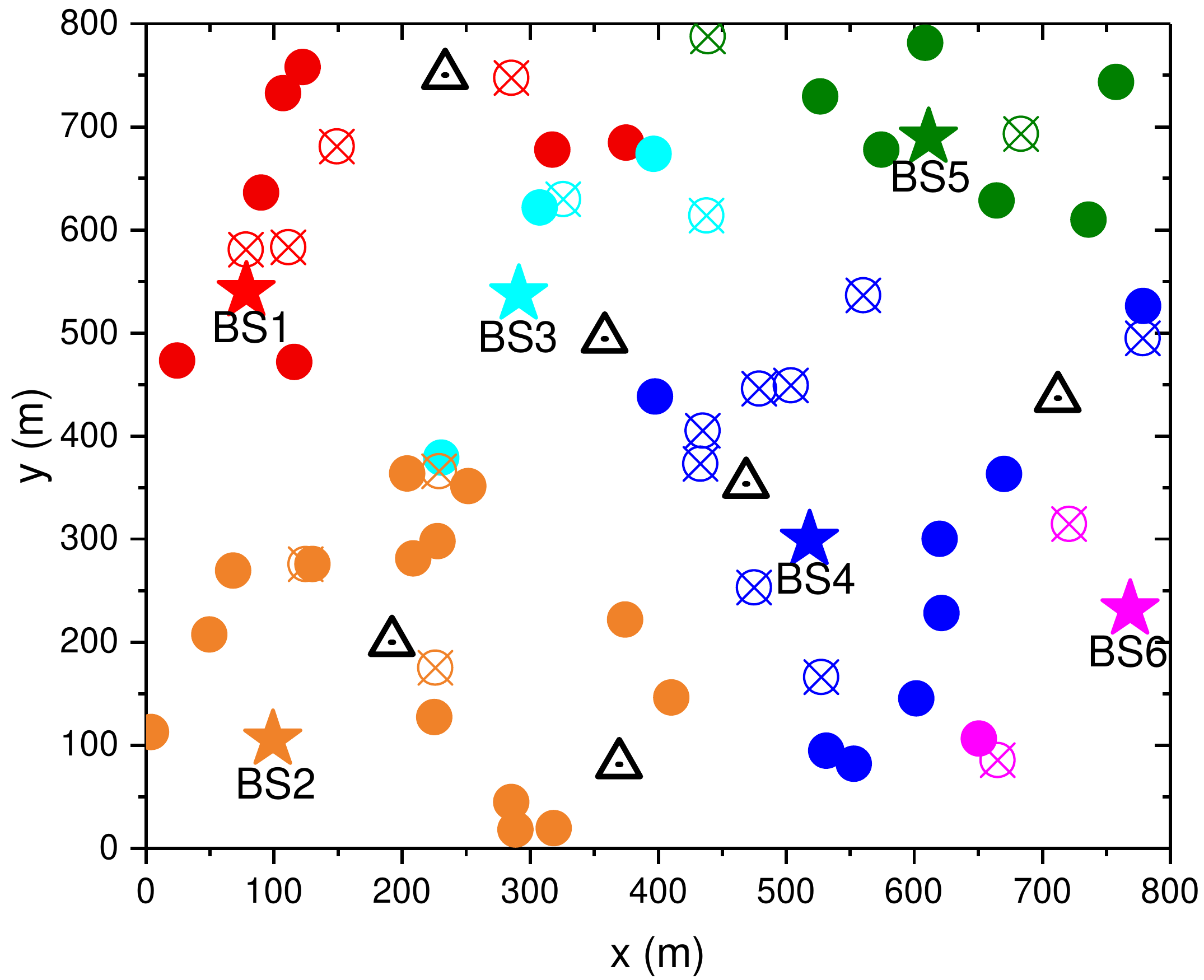}}
  \vspace{3mm}
  \subfigure[UC with RNs]{
    \label{fig:scatter-UC} %% label for second subfigure
    \includegraphics[width=0.29\textwidth]{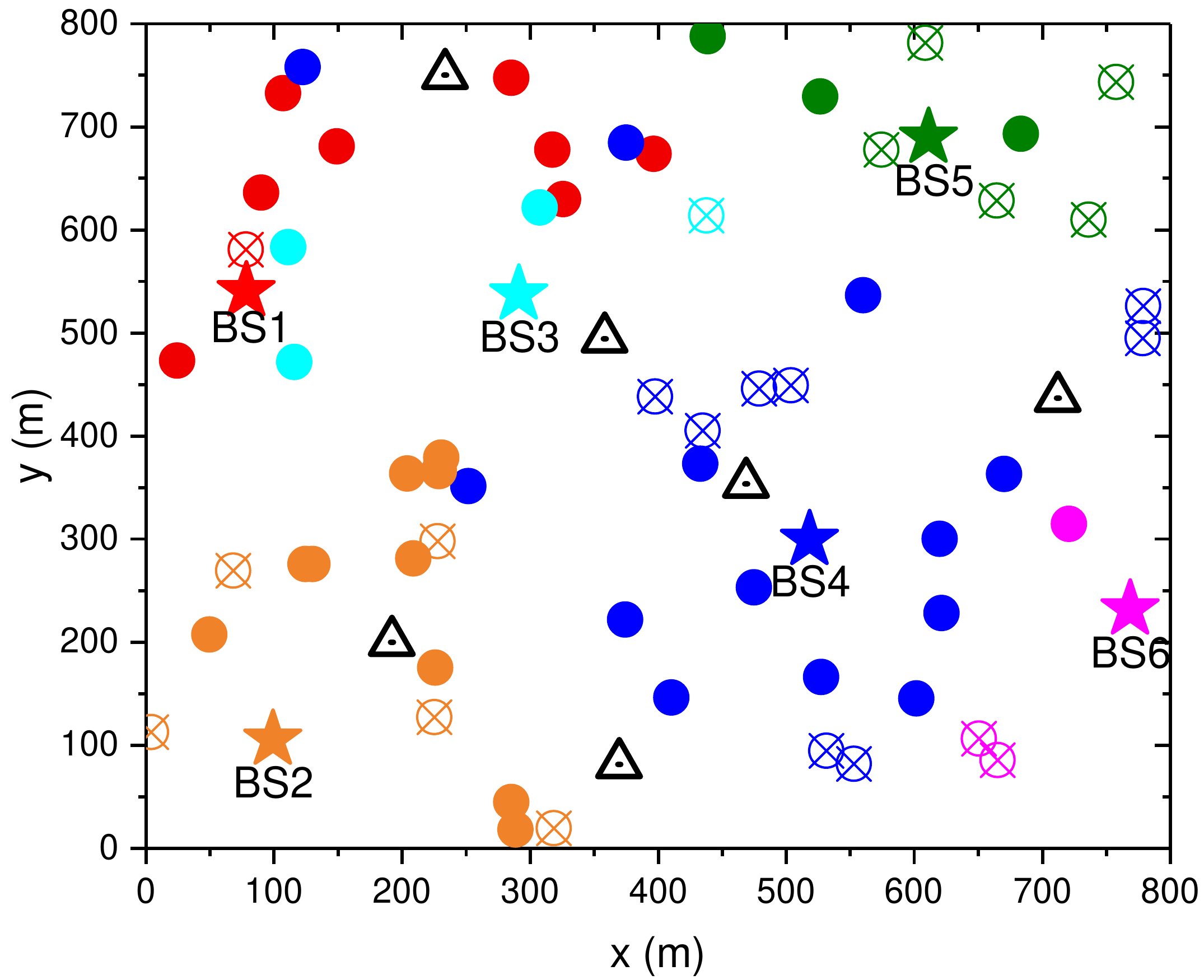}}
  \subfigure[CP-POA without RNs]{
    \label{fig:scatter-POA-norelay} %% label for second subfigure
    \includegraphics[width=0.295\textwidth]{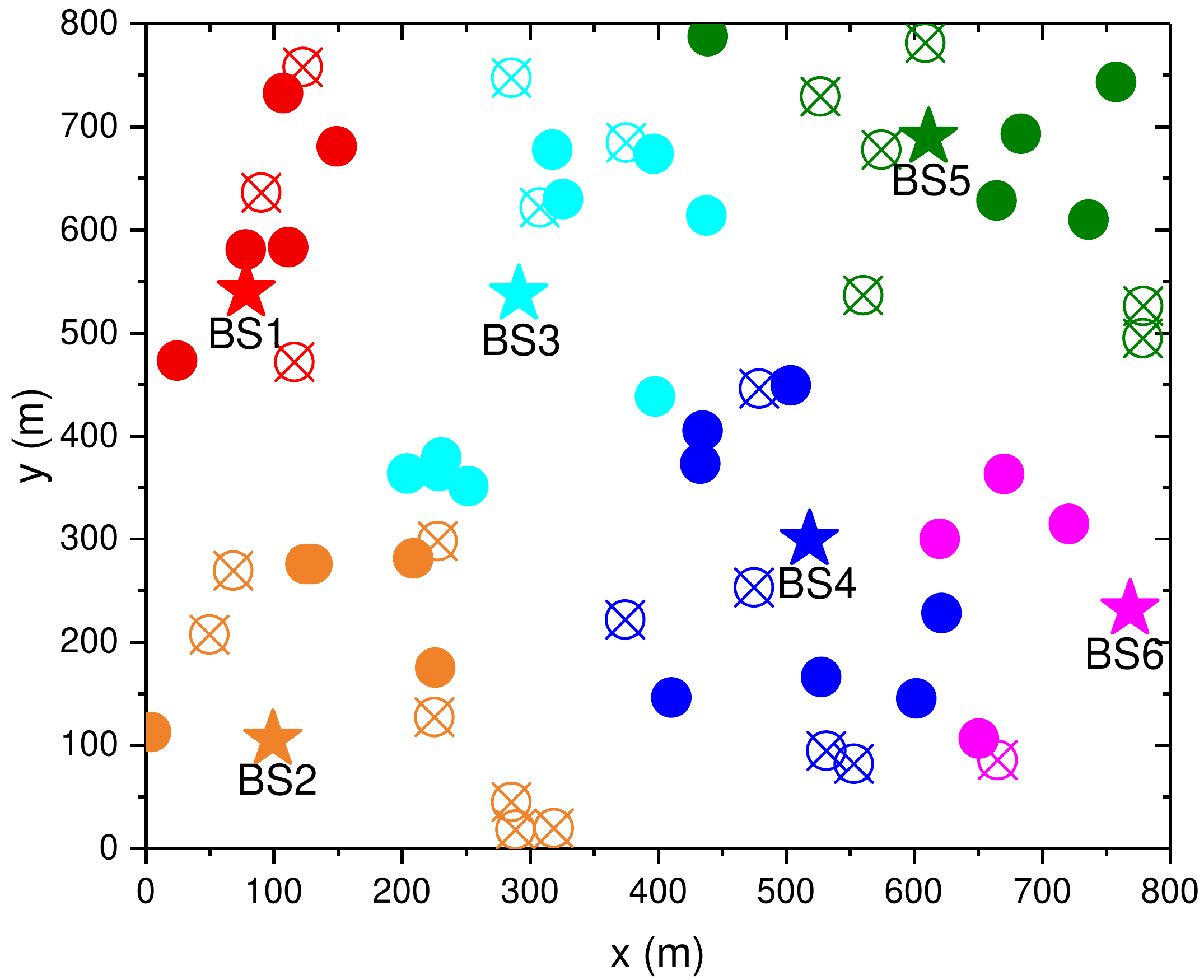}}
    \hspace{-2mm}
  \subfigure[Number of UEs with different \mbox{association types}]{
    \label{fig:numberofUEs} %% label for second subfigure
    \includegraphics[width=0.289\textwidth]{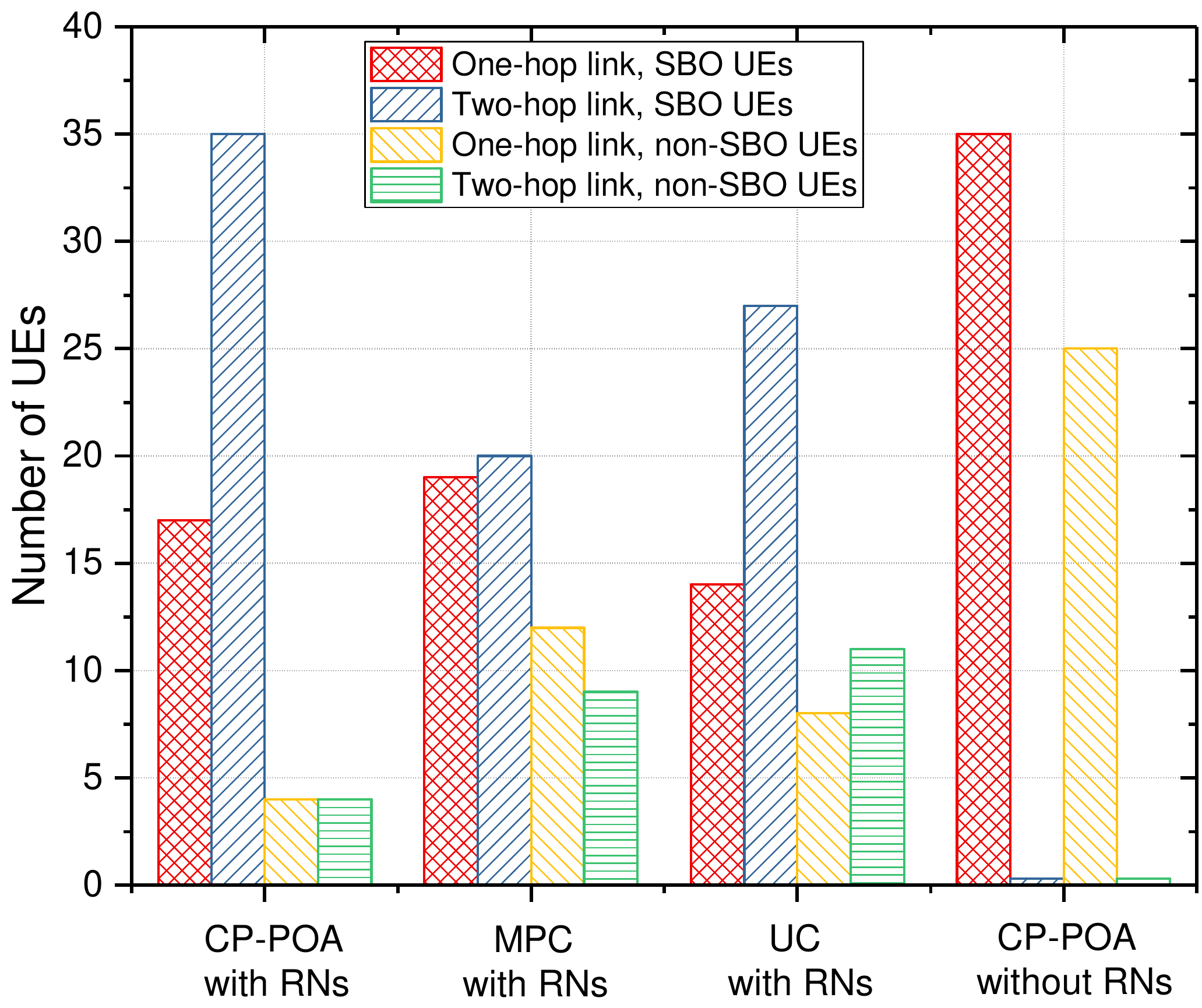}}
    \hspace{-2mm}
  \subfigure{
    \label{fig:legend} %% label for second subfigure
    \includegraphics[width=0.305\textwidth]{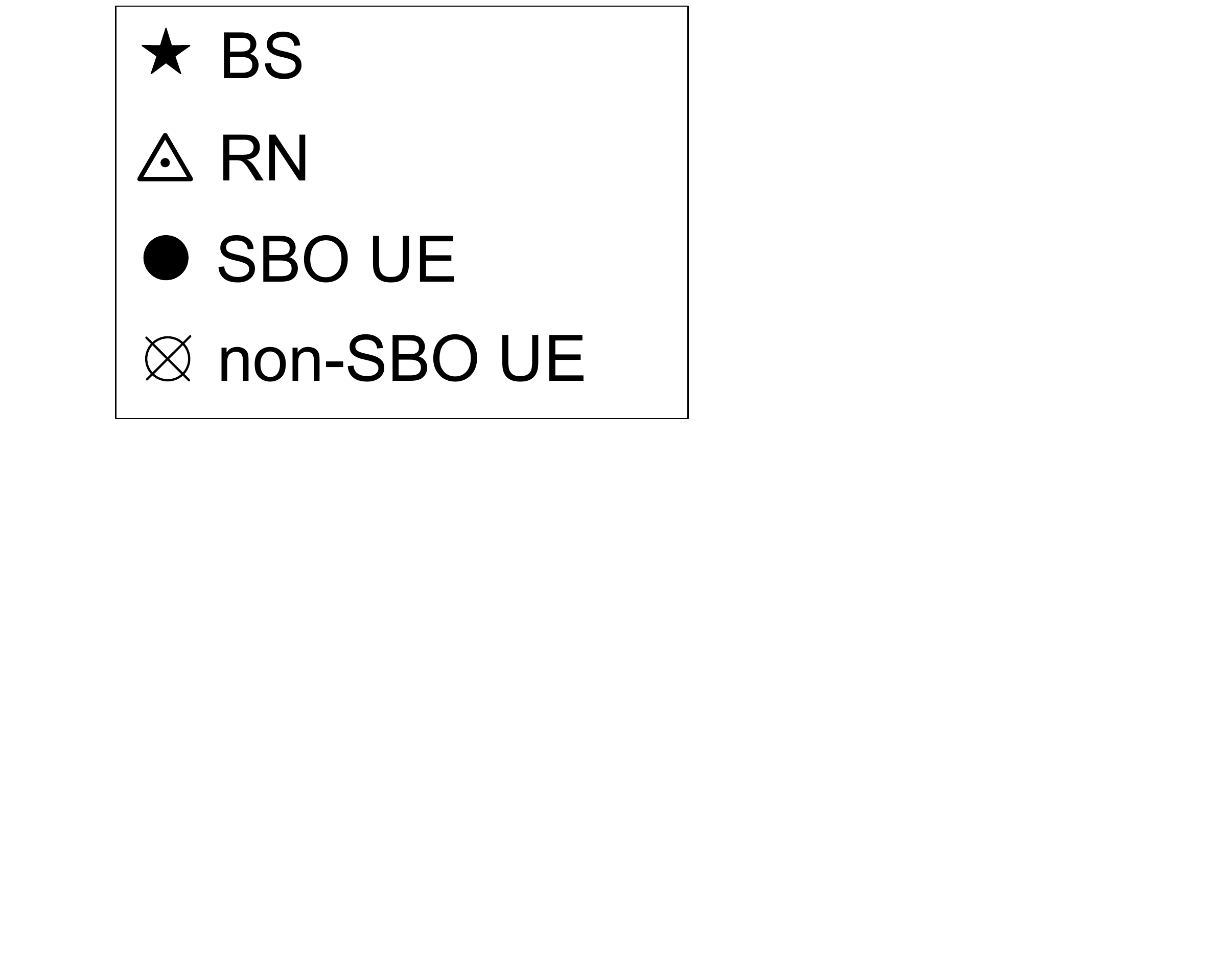}}
  \caption{The impact of RNs and caches on the association of UEs.  }
  \label{fig:scatter-merge} %% label for entire figure
\end{figure}

\subsection{Impact of caching and relaying on the association of UEs}
Fig. \ref{fig:scatter-merge} shows the impact of caching and relaying on the association of UEs. A total of 60 UEs are set in the simulation. The same color is used to indicate the association between the BS and the UEs, via either a one- or a two-hop link. UEs who successfully offloaded the backhaul traffic are termed as SBO UEs, otherwise they are termed as non-SBO UEs.
The impact of different caching placement algorithms on the association of UEs is shown in Fig. \ref{fig:scatter-merge}(a), Fig. \ref{fig:scatter-merge}(b) and Fig. \ref{fig:scatter-merge}(c). In general, the number of SBO UEs is the highest using the proposed CP-POA algorithm, as shown in Fig. \ref{fig:scatter-merge}(e), because it takes into account essential factors in the mmWave network such as blockage effects, deployment of RNs, and file diversity gain. It is observed that a UE is most likely to be associated with its nearest BS when using MPC algorithm. This is because all the BSs cache the same popular files, thus reducing the file diversity gain, so that there is no benefit for a UE to associated with a farther BS with the aid of RNs. As a result, the number of SBO UEs via two-hop link decreases from 35 to 20. Therefore with MPC algorithm, introducing RNs will not be fully exploited to improve the performance in the mmWave network. Similarly, as shown in Fig. \ref{fig:scatter-merge}(c) and Fig. \ref{fig:scatter-merge}(e), the number of SBO UEs decreases from 52 to 41 compared with CP-POA. Thus, failing to consider the file popularity when caching files using UC algorithm will also reduce the benefit of introducing RNs.
In addition, Fig. \ref{fig:scatter-merge}(d) shows the association of UEs without RNs, using the proposed CP-POA algorithm. As expected, there are more non-SBO UEs when RNs are not used. For example, this phenomenon is more obvious to the UEs around BS4. Fortunately, with the aid of RNs, blockage effects can be greatly alleviated and the non-SBO UEs around BS4 in Fig. \ref{fig:scatter-merge}(d) can be served by BS1, BS2, and BS3 shown in Fig. \ref{fig:scatter-merge}(a), thus converted to SBO UEs.

%\vspace{5mm}

\section{Conclusion}
\textcolor{blue}{In this paper, we presented the study on the joint UAR and caching placement optimization in relay-assisted mmWave downlink networks to improve backhaul offloading capability.} We proposed an analytical system model based on stochastic geometry to obtain the insights for the practicality of cache-enabled and relay-assisted mmWave networks.
%Essential factors such as blockages in the mmWave network, stochastic natures of node locations, channel fading and path loss are incorporated in the network model.
\textcolor{blue}{To solve the joint UAR and caching placement problem, we first obtained the relationship between UAR probabilities and caching placement probabilities by taking into consideration the spatial correlation caused by the coexistence of BSs and RNs and caching status at BSs. We then transformed the joint optimization problem into a caching placement problem to improve SBOP.} An optimal algorithm with polyblock outer approximation is developed. Furthermore, a suboptimal algorithm based on convex optimization is also designed with low computational complexity. Detailed numerical analysis is performed to validate the effectiveness of the proposed caching algorithms. The results show great potentials of such cost-effective method to deploy caches in relay-assisted mmWave networks to achieve superior backhaul offloading performance.

%\vspace{-4mm}

\appendices
\section{Proof of Proposition \ref{prop-PBS}} \label{AppendixB}
To proof this proposition, we substitute (\ref{SINR-def1}) and (\ref{Shannon-formula}) into (\ref{SCDP-def}). Accordingly, the conditional SBOP by the BSs can be calculated as
\vspace{-3mm}
\begin{align}
\mathcal{P}_{\mathrm{s}}^{(\mathrm{1hop})}(\{\tau_n \}) =  \mathbb{P} \left[\text{SINR}_n^{\mathrm{BU}} > \nu_n \right] \overset{(a)} = \bm{\rho}_\mathrm{L} \mathbb{P} \left[\text{SINR}_n^{\mathrm{BU,L}} > \nu_n \right] + \bm{\rho}_\mathrm{N} \mathbb{P} \left[\text{SINR}_n^{\mathrm{BU,N}} > \nu_n \right], \label{total-probability}
\end{align}
where $\nu_n = 2^{\tau_n / B} - 1$, $\bm{\rho}_\mathrm{L}$ and $\bm{\rho}_\mathrm{N}$ denote the probability that the BU link is in LOS or NLOS state, respectively. $\mathbb{P} \left[\text{SINR}_n^{\mathrm{BU,L}} > \nu_n \right]$ denotes the probability that the SINR of the BU link is above the minimum rate requirement $\nu_n$ when the link is in LOS state. Likewise, $\mathbb{P} \left[\text{SINR}_n^{\mathrm{BU,N}} > \nu_n \right]$ denotes the probability that the SINR of the BU link is above the minimum rate requirement $\nu_n$ when the link is in NLOS state. (a) is obtained by using the law of total probability. Then the first item in (a) can be reduced to

\vspace{-8mm}
\begin{small}
\begin{align}
%& \bm{\rho}_\mathrm{L} \mathbb{P} \left[\text{SINR}_n^{\mathrm{BU,L}} > \nu_n \right] \nonumber \\
& \ \bm{\rho}_\mathrm{L} \mathbb{P} \Bigg[\frac{P_{\mathrm{BS}} |h_0|^2 G_{\mathrm{BS}} r^{- \alpha_\mathrm{L}}}{ \underbrace{ \mathcal{I}_{n}^{\mathrm{BU}}\{ \Phi_{\mathrm{BS}}\} + \mathcal{I}_{n}^{\mathrm{BU}}\{ \Phi_{\mathrm{RN}}\} }_{\mathcal{I}_n^{\mathrm{BU}}} + \sigma^2 } > \nu_n \Bigg] \nonumber \\
\overset{(b)}\approx & \ \int_{0}^{\infty} \rho_\mathrm{L}(r) \left\{ 1-\mathbb{E}_{\mathcal{I}_n^{\mathrm{BU}}} \left[ \left( 1-\mathrm{exp} \left( -\frac{\eta_\mathrm{L} \nu_n r^{\alpha_\mathrm{L}}  (\sigma^2 + \mathcal{I}_n^{\mathrm{BU}})}{P_{\mathrm{BS}} G_{\mathrm{BS}}}\right) \right)^{N_\mathrm{L}} \right] \right\} \tilde{f}_{\mathrm{BU}}(r)  \mathrm{d} r  \qquad \nonumber \\
%\overset{(c)} = & \ \int_{0}^{\infty} \rho_\mathrm{L}(r) \left\{ \sum_{u=1}^{N_\mathrm{L}} (-1)^{u+1} \binom{N_\mathrm{L}}{u} \mathbb{E}_{\mathcal{I}_n^{\mathrm{BU}}} \left[ \mathrm{exp} \left( -\frac{u \eta_\mathrm{L} \nu_n r^{\alpha_\mathrm{L}}  (\sigma^2 + \mathcal{I}_n^{\mathrm{BU}})}{P_{\mathrm{BS}} G_{\mathrm{BS}}}\right)  \right] \right\} \tilde{f}_{\mathrm{BU}}(r) \mathrm{d} r \nonumber \\
\overset{(c)} = & \ \int_{0}^{\infty} \rho_\mathrm{L}(r) \left\{ \sum_{u=1}^{N_\mathrm{L}} (-1)^{u+1} \binom{N_\mathrm{L}}{u} \mathrm{e}^{ -\frac{u \eta_\mathrm{L} \nu_n r^{\alpha_\mathrm{L}}  \sigma^2 }{P_{\mathrm{BS}} G_{\mathrm{BS}}}} \mathcal{L}_{\mathcal{I}_n^{\mathrm{BU}}} \left( \frac{u \eta_\mathrm{L} \nu_n r^{\alpha_\mathrm{L}} }{P_{\mathrm{BS}} G_{\mathrm{BS}}} \right) \right\} \tilde{f}_{\mathrm{BU}}(r) \mathrm{d} r, \label{BULOS}
\end{align}
\end{small}
$\!\!\!$where $\eta_{\mathrm{L}} = N_\mathrm{L}(N_\mathrm{L}!)^{-\frac{1}{N_\mathrm{L}}}$, $\mathcal{L}_{\mathcal{I}_n^{\mathrm{BU}}}(s)$ is the Laplace transform of $\mathcal{I}_n^{\mathrm{BU}}$ evaluated at $s$. Particularly, \textcolor{blue}{(b) follows from the Alzer's approximation of a gamma random variable \cite{On-some-inequalities}, which is shown to be tight with different system parameters.} Further, (c) follows by using Binomial theorem and the assumption that $N_\mathrm{L}$ is an integer, as well as the Laplace transform of random variable $\mathcal{I}_n^{\mathrm{BU}}$. Next, to obtain the complete expression of the interference point process $\mathcal{I}_n^{\mathrm{BU}}$, we apply the thinning theorem of a PPP by considering blockages and effective antenna gains, then $\mathcal{I}_n^{\mathrm{BU}}$ can be divided into several independent sub-PPPs as shown in the following,
\vspace{-5mm}

\begin{small}
\begin{align}
\mathcal{I}_n^{\mathrm{BU}}  =  \sum_{G} \bigg\{ \mathcal{I}_{n,G}^{\mathrm{BU}}\{ \Phi_{\mathrm{BS}_n,\mathrm{L}}\} \!+ \! \mathcal{I}_{n,G}^{\mathrm{BU}}\{ \Phi_{\mathrm{BS}_n,\mathrm{N}}\} \!+\!  \mathcal{I}_{n,G}^{\mathrm{BU}}\{ \bar{\Phi}_{\mathrm{BS}_n,\mathrm{L}}\} \!+\! \mathcal{I}_{n,G}^{\mathrm{BU}}\{ \bar{\Phi}_{\mathrm{BS}_n,\mathrm{N}}\} \!+\! \mathcal{I}_{n,G}^{\mathrm{BU}}\{ \Phi_{\mathrm{RN,L}}\} \!+\! \mathcal{I}_{n,G}^{\mathrm{BU}}\{ \Phi_{\mathrm{RN,N}}\} \bigg\}, \label{sub-PPP}
\end{align}
\end{small}
$\!\!$where $G \in \{ MM,Mm,mm\}$ denotes the effective antenna gains which is defined in (\ref{Gx-definition}).
Now, the Laplace transform for the interfering links can be expressed as
\vspace{-3mm}
\begin{small}
\begin{align}
& \mathcal{L}_{\mathcal{I}_n^{\mathrm{BU}} } (s_{\mathrm{L}}) = \prod_{i \in \{ \mathrm{L,N}\}} \prod_{G} \Bigg\{ \mathbb{E}_{\mathcal{I}_{n,G}^{\mathrm{BU}}\{ \Phi_{\mathrm{BS}_n,i}\}} \bigg[\mathrm{exp}\Big(-s_{\mathrm{L}}\mathcal{I}_{n,G}^{\mathrm{BU}}\{ \Phi_{\mathrm{BS}_n,i}\}\Big)\bigg]  \nonumber \\
& \qquad \qquad \qquad \mathbb{E}_{\mathcal{I}_{n,G}^{\mathrm{BU}}\{ \bar{\Phi}_{\mathrm{BS}_n,i}\}} \bigg[\mathrm{exp}\Big(-s_{\mathrm{L}}\mathcal{I}_{n,G}^{\mathrm{BU}}\{ \bar{\Phi}_{\mathrm{BS}_n,i}\}\Big)\bigg]  \mathbb{E}_{\mathcal{I}_{n,G}^{\mathrm{BU}}\{ \Phi_{\mathrm{RN},i}\}} \bigg[\mathrm{exp}\Big(-s_{\mathrm{L}}\mathcal{I}_{n,G}^{\mathrm{BU}}\{ \Phi_{\mathrm{RN},i}\}\Big)\bigg] \Bigg\}, \label{interference-expectation}
\end{align}
\end{small}
%\vspace{-2mm}
$\!\!\!$where $s_{\mathrm{L}}=\frac{u \eta_\mathrm{L} \nu_n r^{\alpha_\mathrm{L}}}{P_{\mathrm{BS}} G_{\mathrm{BS}}}$, $\Phi_{\mathrm{BS}_n,i}$ denotes the interfering point process $\Phi_{\mathrm{BS}_n}$ in which the interfering links are in the state of $i \in \{\mathrm{L,N}\}$. Further, (\ref{interference-expectation}) follows from the fact that the sub-PPPs in (\ref{sub-PPP}) are independent. As an example, we compute the expectation of $\mathcal{I}_{n,G}^{\mathrm{BU}}\{ \Phi_{\mathrm{BS}_n,i}\}$ below, and the other terms in (\ref{interference-expectation}) can be derived in a similar method. By utilizing Laplace transform, the above expectation can be calculated as
%\vspace{-10mm}
\begin{small}
\begin{align}
& \ \mathbb{E}_{\mathcal{I}_{n,G}^{\mathrm{BU}}\{ \Phi_{\mathrm{BS}_n,i}\}} \bigg[\mathrm{exp}\Big(-s_{\mathrm{L}}\mathcal{I}_{n,G}^{\mathrm{BU}}\{ \Phi_{\mathrm{BS}_n,i}\}\Big)\bigg] \nonumber \\
\overset{(d)} = & \ \mathbb{E}_{\mathcal{I}_{n,G}^{\mathrm{BU}}\{ \Phi_{\mathrm{BS}_n,i}\}}  \left[ \prod_{{\ell:X_{\ell} \in \Phi_{\mathrm{BS}_n,i} \cap \overline{\mathcal{B}}(0,r)}} \mathbb{E}_g \left[ \mathrm{exp} \bigg( -s_{\mathrm{L}}  g_\ell P_{\mathrm{BS}} G_{\mathrm{BS}} r^{-\alpha_{i}}\bigg)\right] \right] \nonumber \\
\overset{(e)} = & \ \mathrm{exp} \left[ -2 \pi p_n \lambda_{\mathrm{BS}} p_G \int_r^{\infty} \left( 1 - \mathbb{E}_{g} \left[ \mathrm{e}^{-s_{\mathrm{L}} g_\ell P_{\mathrm{BS}} G_{\mathrm{BS}} t^{-\alpha_{i} }}  \right]  \right) \rho_{i}(t) t \mathrm{d} t \right] \nonumber \\
\overset{(f)} = & \ \mathrm{exp} \left[ -2 \pi p_n \lambda_{\mathrm{BS}} p_G \int_r^{\infty} \left( 1 - \frac{1}{\left(1 + s_{\mathrm{L}} P_{\mathrm{BS}} G_{\mathrm{BS}} t^{-\alpha_i} / N_i\right)^{N_i}}\right) \rho_{i}(t) t \mathrm{d} t \right] =  Q_i^{\mathrm{BS}}(p_n, r),
\end{align}
\end{small}
$\!\!$where $\mathcal{B}(0, r)$ denotes the circle centered at the origin of radius $r$, $p_G$ is the probability when the antenna gain takes the corresponding value $G$, $\rho_i(t)$ is the \textit{LOS or NLOS probability function} defined in (\ref{link-state-def}), and $g_\ell = |h_\ell|^2$. In the above, (d) follows from the i.i.d. distribution of $g_\ell$ and its further independence from the point process $\Phi_{\mathrm{BS}_n,i}$, (e) follows by computing the probability generating functional of the PPP, and (f) follows by computing the moment generating function of a Nakagami random variable. \textcolor{blue}{Applying the integral formula of powers of $t$ and powers of binomials, the Laplace transform can be written as the form of Gauss hypergeometric function or Beta function. }

%Similarly, the expectation of $\mathcal{I}_{n,G}^{\mathrm{BU}}\{ \bar{\Phi}_{\mathrm{BS}_n,i}\}$ can be calculated as
%\begin{small}
%\begin{align}\label{interference-nocacheBS}
%& \ \mathbb{E}_{\mathcal{I}_{n,G}^{\mathrm{BU}}\{ \bar{\Phi}_{\mathrm{BS}_n,i}\}} \bigg[\mathrm{exp}\Big(-s_{\mathrm{L}}\mathcal{I}_{n,G}^{\mathrm{BU}}\{ \bar{\Phi}_{\mathrm{BS}_n,i}\}\Big)\bigg] = \mathrm{e}^{  -2 \pi \overline{p}_n \lambda_{\mathrm{BS}} p_G \int_{0}^{\infty} \left( 1 - \frac{1}{\left(1 + s_{\mathrm{L}} P_{\mathrm{BS}} G_\ell t^{-\alpha_i} / N_i\right)^{N_i}}\right) p_{i}(t) t \mathrm{d} t } = \overline{Q}_i(\nu_n,s_{\mathrm{L}}),
%\end{align}
%\end{small}
%$\!\!$where $\overline{p}_n = 1 - p_n$ for notational simplicity, and the integral is from 0 to infinity. Also, the expectation of  $\mathcal{I}_{n,G}^{\mathrm{BU}}\{ \Phi_{\mathrm{RN},i}\}$ can be calculated as
%\begin{small}
%\begin{align}\label{interference-RN}
%& \mathbb{E}_{\mathcal{I}_{n,G}^{\mathrm{BU}}\{ \Phi_{\mathrm{RN},i}\}} \bigg[\mathrm{exp}\Big(-s_{\mathrm{L}}\mathcal{I}_{n,G}^{\mathrm{BU}}\{ \Phi_{\mathrm{RN,i}}\}\Big)\bigg]  =  \mathrm{e}^{  -2 \pi  \lambda_{\mathrm{RN}} p_G \int_{0}^{\infty} \left( 1 - \frac{1}{\left(1 + s_{\mathrm{L}} P_{\mathrm{RN}} G_\ell t^{-\alpha_i} / N_i \right)^{N_i}}\right) p_{i}(t) t \mathrm{d} t  }=  T_i(\nu_n,s_{\mathrm{L}}).
%\end{align}
%\end{small}
$\!\!$By applying similar methods as mentioned above, the probability that the SINR of the NLOS link for file $n$ is greater than the threshold $\nu_n$ can be calculated as

\vspace{-8mm}
\begin{small}
\begin{align}
\bm{\rho}_{\mathrm{N}}\mathbb{P} \left[\text{SINR}_n^{\mathrm{BU,N}} > \nu_n \right] = \int_{0}^{\infty} \rho_\mathrm{N}(r) \sum_{u=1}^{N_\mathrm{N}} (-1)^{u+1} \binom{N_\mathrm{N}}{u} \mathrm{e}^{ -\frac{u \eta_\mathrm{N} \nu_n r^{\alpha_N}  \sigma^2 }{P_{\mathrm{BS}} G_{\mathrm{BS}}}} \mathcal{L}_{\mathcal{I}_n^{\mathrm{BU}} } (s_{\mathrm{N}}) f_n(r) \mathrm{d} r, \label{BUNLOS}
\end{align}
\end{small}
$\!\!$where $s_{\mathrm{N}} = \frac{u \eta_\mathrm{N} \nu_n r^{\alpha_\mathrm{N}} }{P_{\mathrm{BS}} G_{\mathrm{BS}}}$. \textcolor{blue}{By substituting (\ref{BULOS}), (\ref{interference-expectation}) and (\ref{BUNLOS}) into (\ref{total-probability}), and applying the Gauss-Laguerre Quadrature, the desired proof is obtained.}

\vspace{-1mm}

\section{Proof of Proposition \ref{prop4}} \label{AppendixC}
In the noise-limited scenario, the received SINR is transformed into SNR at the typical UE served by BS$_0$. Thus, the conditional SBOP for the one-hop link can be reduced to
%\begin{small}
\begin{equation} \small
\mathcal{P}_{\mathrm{s}}^{\mathrm{BS}}(\{\tau_n \}) \approx  \ \mathbb{P} \left[B \log_2 (1 + \text{SNR}_n^{\mathrm{BU}}) > \tau_n \right] =  \mathbb{P} \Bigg[\frac{P_{\mathrm{BS}} g_{n} G_{\mathrm{BS}} r^{- \alpha}}{  \sigma ^2 } >  \underbrace{ 2^{\tau_n / B} - 1 }_{\nu_n} \Bigg]
=  \mathbb{P} \Bigg[\frac{   r^{ \alpha}}{ g_{n}  } \leq  \frac{P_{BS} G_{\mathrm{BS}} }{\sigma ^2 \nu_n}\Bigg]  \label{joint-prob-of-r-g},
\end{equation}
%\end{small}
where $\alpha \in \{ \alpha_{\mathrm{L}}, \alpha_{\mathrm{N}} \}$. Firstly, it is useful to calculate the density of $\Psi_{n} = \{ r_{n}^{\alpha} \} \triangleq \{ \psi_{n} \}$.  According to the result in Lemma 1, the intensity measure of this one dimensional PPP is calculated as
\vspace{-4mm}
\begin{align}\label{mapping-thm}
\Psi_n([0,\psi]) = \int_0^{(\psi )^{\frac{1}{\alpha_{\mathrm{L}}}}} 2 \pi p_n \lambda_{\mathrm{BS}} v \mathrm{e}^{-\beta v} \mathrm{d} v + \int_0^{(\psi )^{\frac{1}{\alpha_{\mathrm{N}}}}} 2 \pi p_n  \lambda_{\mathrm{BS}} v (1 - \mathrm{e}^{-\beta v}) \mathrm{d} v,
\end{align}
then the density is calculated as
\begin{small}
\begin{align}\label{density-of-Lambda}
\lambda_n(\psi) = & \ \frac{\mathrm{d} \Psi([0,\psi])}{\mathrm{d} \psi} =   \frac{2 \pi p_n \lambda_{\mathrm{BS}}}{\alpha_{\mathrm{N}}} \psi^{\frac{2}{\alpha_{\mathrm{N}}}-1} + \frac{2\pi p_n \lambda_{\mathrm{BS}}}{\alpha_L} \mathrm{e}^{-\beta \psi^{\frac{1}{\alpha_{\mathrm{L}}}}} \psi^{\frac{2}{\alpha_{\mathrm{L}}}-1} - \frac{2 \pi p_n \lambda_{\mathrm{BS}}}{\alpha_N} \mathrm{e}^{-\beta \psi^{\frac{1}{\alpha_{\mathrm{N}}}}} \psi^{\frac{2}{\alpha_{\mathrm{N}}}-1}.
\end{align}
\end{small}
$\!\!$Now, the density of the process $\Xi_n = \{ \frac{\psi}{g_{n}} (\triangleq \xi_n) \}$ can be obtained by using the displacement theorem \cite[Th. 2.33]{stochastic-geometry-for-wireless-networks}. We first calculate the joint distribution function of $\psi$ and $g_{n}$, which is given by
\vspace{-3mm}
\begin{small}
\begin{align}\label{joint-probability}
& \ \mathbb{P} \left[\frac{   \psi}{ g_{n}  } \leq \xi  \right] = \mathbb{P} \left[ g_{n} \geq \frac{\psi}{\xi} \right] = 1 - F_g(\frac{\psi}{\xi}),
\end{align}
\end{small}
$\!\!$and the joint pdf can be calculated as
\vspace{-3mm}
\begin{small}
\begin{align}\label{joint-PDF}
f(\psi, \xi) = \frac{\mathrm{d} (1 - F_g(\frac{\psi}{\xi}))}{\mathrm{d} \xi} = \frac{\psi}{\xi^2} f_g(\frac{\psi}{\xi}) = \frac{\psi}{\xi^2 \Gamma(m)} m^{m} (\frac{\psi}{\xi})^{m-1} \exp (- \frac{m \psi}{\xi}),
\end{align}
\end{small}
$\!\!\!$where $m \in \{ N_\mathrm{L}, N_\mathrm{N} \}$. Then the density of the process $\Xi_n$ can be calculated by using the displacement theorem, which is given by
\vspace{-3mm}
\begin{small}
\begin{align}
\lambda_{\Xi_n}( \xi) & = \int_0^{\infty} \lambda(\psi) f(\psi, \xi) \mathrm{d} \psi \nonumber \\
& = \int_0^{\infty} \left( \frac{2 \pi p_n \lambda_{\mathrm{BS}}}{\alpha_{\mathrm{N}}} \psi^{\frac{2}{\alpha_{\mathrm{N}}}-1} + \frac{2\pi p_n \lambda_{\mathrm{BS}}}{\alpha_L} \mathrm{e}^{-\beta \psi^{\frac{1}{\alpha_{\mathrm{L}}}}} \psi^{\frac{2}{\alpha_{\mathrm{L}}}-1} - \frac{2 \pi p_n \lambda_{\mathrm{BS}}}{\alpha_N} \mathrm{e}^{-\beta \psi^{\frac{1}{\alpha_{\mathrm{N}}}}} \psi^{\frac{2}{\alpha_{\mathrm{N}}}-1} \right) \nonumber \qquad \qquad \quad \\
& \quad \times \frac{\psi}{\xi^2 \Gamma(m)} m^{m} (\frac{\psi}{\xi})^{m-1} \exp (- \frac{m \psi}{\xi}) \mathrm{d} \psi \nonumber \\
& =  p_n  \pi\lambda_{\mathrm{BS} } \kappa_{\mathrm{N}} \xi^{\kappa_{\mathrm{N}} - 1} \frac{\Gamma(\kappa_{\mathrm{N}} + N_\mathrm{L})}{N_\mathrm{L}^{\kappa_{\mathrm{N}}} \Gamma(N_\mathrm{L})}   +  \sum_{k \in \{ \mathrm{L, N} \}} p_n X(k)   \int_0^{\infty} \psi^{(2/\alpha_k+m-1)} \mathrm{e}^{\big(-\frac{m}{\xi}\psi - \beta \psi^{\frac{1}{\alpha_k}} \big)}   \mathrm{d} \psi,
\end{align}
\end{small}
$\!\!\!$where $\kappa_{\mathrm{N}} = \frac{2}{\alpha_{\mathrm{N}}}$, $X(k)  = \frac{2\pi m^{m} \lambda_{\mathrm{BS}}}{\alpha_k (\xi_0)^{m+1} \Gamma(m)} $.
Now, based on the complementary void function, the cumulative distribution function of $\xi$ is given by
\vspace{-3mm}
\begin{align}
F_{\Xi_n} (\xi_0) = \mathbb{P} [\xi < \xi_0] & = 1 - \mathbb{P} [\Xi_n[0,\xi_0)=0] \overset{(a)} = 1 - \exp \big( - p_n c (\xi_0)^{\kappa_{\mathrm{N}}} - p_n Y(\xi_0) \big)  \label{CDF-of-xi},
\end{align}
where $c = \pi \lambda_{\mathrm{BS}} \kappa_{\mathrm{N}} \frac{\Gamma(\kappa_{\mathrm{N}} + N_\mathrm{L})}{N_\mathrm{L}^{\kappa_{\mathrm{N}}} \Gamma(N_\mathrm{L})}$, and $Y_n(\xi_0) = \sum_{k \in \{ \mathrm{L, N} \}}  X(k)   \int_0^{\infty} \int_0^{\xi_0} \frac{ \psi^{(2/\alpha_k+m-1)}  }{ \xi^{m+1}  } \mathrm{e}^{\big(-\frac{m}{\xi}\psi - \beta \psi^{\frac{1}{\alpha_k}} \big)}  \mathrm{d} \xi \mathrm{d} \psi$. (a) is based on the displacement theorem for point process transformation, which indicates that $\Xi_n$ is also a PPP, and $\mathbb{P} [\Xi_n[0,\xi)=0] = \exp \big(-\int_0^{\xi} \lambda_{\Xi_n}(\xi )\mathrm{d} \xi\big)$. Finally, according to (\ref{joint-prob-of-r-g}) and (\ref{CDF-of-xi}), the conditional SBOP via a one-hop link can be written as
%\begin{small}
%\vspace{-3mm}
\begin{equation} \small \label{SNR-SCDP}
\mathcal{P}_{\mathrm{s}}^{\mathrm{BS}}(\{\tau_n \})  = \mathbb{P} \Bigg[\frac{   r^{ \alpha}}{ g_{n}  } \leq  \frac{P_{\mathrm{BS}} G_{\mathrm{BS}} }{\sigma ^2 \nu_n}\Bigg]  = F_{\Xi_n} \left(\frac{P_{\mathrm{BS}} G_{\mathrm{BS}} }{\sigma^2 \nu_n}\right)  = 1 - \exp \left( - p_n c \left(\frac{P_{\mathrm{BS}} G_{\mathrm{BS}} }{\sigma^2 \nu_n}\right)^{\kappa_{\mathrm{N}}} - p_n Y_n\left(\frac{P_{\mathrm{BS}} G_{\mathrm{BS}} }{\sigma^2 \nu_n}\right) \right).
\end{equation}
Next, to obtain the conditional SBOP when UE$_0$ is served via a two-hop link in the noise-limited scenario, we can use a similar way, by which Eq. (\ref{prop-equation-PRN}) can also be transformed into a closed-form expression. After some mathematical manipulations, the desired result is obtained.

%\end{small}

%\end{comment}

\end{spacing}
\vspace{7mm}

\bibliographystyle{}

\end{document}